 \newtheorem{res}{Result}[section]
 \newtheorem{thm}[res]{Theorem}
 \newtheorem{ex}[res]{Example}
 \newtheorem{rem}[res]{Remark}
\newtheorem{prop}[res]{Proposition}
 \newtheorem{lem}[res]{Lemma}
 \newtheorem{cor}[res]{Corollary}
 \newtheorem{defi}[res]{Definition}
\numberwithin{equation}{section}
\def\bF{\mathbf F}
\def\k{\mathbf {lin}}
\def\op{\oplus}
\def\tgamma{\beta}
\def\Y{\mathcal Y}
\def\v{\mathbf V}
\def\dd{\mathbf d}
\def\ttheta{\tilde \theta}
\def\0{\mathbf 0}
\def\tp{\tilde \P}
\def\pice{\Pi^{c,e}}
\def\w{\mathbf w}
\def\d{\Delta}
\def\dd{\mathbf d}
\def\p{\mathbf P}
\def\Ml{{\mathcal M}_{loc}}
\def\V{\mathbf V}
\def\Y{\mathcal Y}
\def\q{\mathcal Q}
\def\B{\mathcal B}
\def\C{\mathcal C}
\def\D{\mathcal D}
\def\m{\mathbf M}
\def\cc{\mathbf C}
\def\T{[0,T]}
\def\b{b{\mathcal F}^+}
\def\B{\mathcal B}
\def\a{\mathbf a}
\def\b{\mathbf b}
\def\eqf{\buildrel F\over =}
\def\L{\mathcal L}
\def\cp{\mathcal P}
\def\Corr{\rho}
\def\L{\mathbb L}
\def\cP{\mathcal P}
\def\A{\mathcal A}
\def\tB{\tilde \B}
\def\al{\alpha}
\def\1{\mathbf 1}
\def\F{\mathcal F}
\def\R{\mathbb R}
\def\P{\mathbb P}
\def\bbQ{\mathbb Q}
\def\bQ{\bbQ}
\def\bq{\bQ}
\def\Lam{\Lambda}
\def\lam{\lambda}
\def\Linf{\L^{\infty}}
\def\M{\mathcal M}
\def\Ml{\M_{loc}}
\def\te{\tilde \e}
\def\A{\mathcal A}
\def\al{\alpha}
\def\1{\mathbf 1}
\def\G{\mathcal G}
\def\F{\mathcal F}
\def\1{\mathbf 1}
\def\cE{\mathcal E}
\def\e{\mathbb E}
\def\eq{{\mathbb E}^{\bQ}}
\def\d+{\Delta^+}
\def\ce{\cE}
\begin{document}
\bibliographystyle{plain}
\title[]{On the degree of incompleteness of an incomplete financial market.}
\author{Abdelkarem Berkaoui}
\address{Department of mathematics and statistics, Science college, Al-Imam Mohammed Ibn Saud Islamic University (IMSIU), P.O. Box 90950,
Riyadh 11623, Saudi Arabia.}
\email{aamberkaoui@imamu.edu.sa}

\begin{abstract}
In order to find a way of measuring the degree of incompleteness of an incomplete financial market, the rank of the vector price process of the traded assets and the dimension of the associated acceptance set are introduced. We show that they are equal and state a variety of consequences.
\end{abstract}

\thanks{{\bf Key words:} martingale measure, m-stability, martingale representation, $\al$-section, rank, dimension.}
\thanks{{\bf AMS 2000 subject classifications:} Primary 60G42; secondary 91B24.}

\maketitle


\section{Introduction, notation and review.}

\subsection{Introduction.}

The notion of an equivalent martingale measure is a corner stone within the theory of financial mathematics. Effectively the two fundamental principles of asset pricing are built upon this notion. More precisely the first principle says that a market, which is composed of a num\'eraire and a finite set of traded assets, satisfies the no-arbitrage property if and only if the price process of these assets admits at least one equivalent martingale measure, while the second principle precise further that any contingent claim in this market is attainable, which means that it can be hedged via a self-financing strategy, if and only if this equivalent martingale measure is unique. The market is then called incomplete in the first case, and complete in the second one. We refer to \cite{DS} and to references therein for more details on the subject.

Without any doubt, the complete market is the exemplary market where the risk is completely avoided. So starting from an incomplete market, can we measure the degree of incompleteness in this market. Or in other words, can we measure the distance from the nearest complete market if it exists. One way to answer this question, at least theoretically is to look at the minimum number of traded assets we have to add to this market in order to increase our trading options, lower the existing risk and then reach a complete market. Another way of looking at it is to precise the minimum number of traded assets that generate the involved market.

To illustrate this, let consider a one risky asset, whose price $X$ is expressed in the Brownian setting as solution of the stochastic differential equation $dX=\sigma X\,dW$ and $\sigma$ is a bounded adapted process. This market is complete iff the set $\{\sigma=0\}$ has null probability. So it can be looked at this market as follows: The process $W$ is a model for the risk existing in the market and the process $X$ has to generate the whole Brownian filtration $\F^W$ in order to reach the completeness of the market.

It may seem natural to start from the fact a market can be modeled by a vector price process $X$ satisfying mainly the no-arbitrage assumption and other technical assumptions. We associate to it the cone of attainable claims $B(X)=\{\al\bullet X_T:\;\al\;\mbox{is an admissible strategy}\}$, where $T$ is the time horizon or maturity time and define $\A=\A(X)=(B(X)-\L^0_+)\cap\Linf$. We define the rank of $X$ as the minimum number $r$ such that there exists a vector price process $Y=(Y^1,\ldots,Y^r)$ satisfying $\A(X)=\A(Y)$. In a completely different approach and since the set $\A(X)$, defined earlier satisfies all the axioms of an acceptance set and following the terminology of the theory of coherent risk measures, we denote $\b$ to be the set of all acceptance sets $\B$ such that the associated set $\q^{\B}$ of test probabilities is the set of martingale measures for a family of adapted processes. So in order to define the dimension of an element $\A\in\b$, we should define first the components that constitute $\A$ and define the dimension of $\A$ as the minimum number of these components needed to generate the set $\A$. In \cite{berk}, the notion of a section of $\A$ was introduced and it was shown that only a building block of $\A$ is a section of $\A$. So on the basis of this concept we define a component as a set $\A\in\b$ such that any subset $\B\in\b$ in $\A$ is a section of $\A$ and define the dimension of a general set $\A\in\b$ as the minimum number of components $\A^1,\ldots,\A^d$ such that $\A=\A^1+\ldots+\A^d$.

We shall show later that the rank $r$ of $X$ and the dimension $d$ of $\A(X)$ are the same. Additionally to that we associate to the vector price process $X$, a predictable partition of the sample time space such that on each unit $G$ of this partition, the process $\1_F\bullet X$ has the same rank for all $F\subseteq G$. We will also introduce the notions of a complement and a strict complement sets as analogues to the concept of orthogonal vector space, and so by fixing two sets $\A,\B\in\b$ with $\B\subseteq\A$, we can decompose the set $\A$ into the sum of $\B$ and a minimal set $\C\in\b$. We also characterize the case where the dimension of $\A$ is the sum of the dimensions of $\B$ and $\C$.

For further analogy with the algebraic dimension of a vector space, we introduce the plug-in vector space $\V(X)$ of $X$ as the closure in $\L^0$ of the set of integrands of X, and show that the associated random algebraic dimension $\dd(X)$ is closely related to the rank and to the earlier partition. This random algebraic dimension is used to define a unit of measure for the completeness of $X$ as a generator of a financial market. Further consequences and applications of this notion are given.


\subsection{Notation.}

Let consider a stochastic basis $(\Omega,\F,(\F_t)_{t\in\T},\P)$, satisfying the usual conditions. An acceptance set $\A$ is defined as a weak star closed convex cone in $\L^{\infty}$, satisfying the two conditions: $\Linf_-\subseteq\A$ and $\A\cap\Linf_+=\{0\}$. We associate to it the set $\q=\q^{\A}$ of $\P$-absolutely continuous probability measures, taking negative values on $\A$. We denote by $\a$, the set of all acceptance sets and by $\a^{st}$ the set of elements $\A\in\a$ for which the associated set $\q^{\A}$ is m-stable.

Along this paper we will adopt the following notation:

\begin{itemize}

\item $\p$ denotes the set of all $\P$-absolutely continuous probability measures and $\p^e$ is the subset in $\p$ of the equivalent ones.
\item $\Pi$ denotes the set of subsets of $\p$ and $\pice$ is the set of closed convex sets $\q\in\Pi$ such that $\q\cap\p^e\neq \emptyset$.
\item $\cp$ denotes the predictable $\sigma$-algebra generated by the class of left continuous processes with limits at right.
\item For a set $\q$ of probability measures, we denote $spm(\q)$, to be the set of $\q$-super martingales and $m(\q)$, to be the set of local $\q$-martingales.
\item For a family $\Y$ of adapted processes, we denote by $\M_{sp}(\Y)$ (resp. $\Ml(\Y)$), the set of all local spermartingale (resp. martingale) measures of the family $\Y$.
\item $\L^p(\G;\R^d)$ denotes the Lebesgue space of $\R^d$-valued, $p$-integrable and $\G$-measurable random variables with $\L^p(\G)=\L^p(\G;\R)$ and $\L^p=\L^p(\F)$, for $1\leq p\leq\infty$ and a sub-$\sigma$-algebra $\G\subseteq\F$.
\item For $B\subseteq\L^p$, we denote $\overline{B}^p$ the closure in $\L^p$ of $B$, and $\overline{B}^*$ for the weak star closure in $\Linf$.
\item $\al\bullet S$ denotes the stochastic integral process of the process $\al$ w.r.t. the semimartingale $S$.
\item $\m_{n,m}(\cp)$ denotes the set of all $\R^n\otimes\R^m$-matrix valued predictable processes and for $\al\in\m_{n,m}(\cp)$, $span(\al)$ denotes the vector space spanned by the rows of the matrix $\al$. We shall say that two matrices are orthogonal if their spanned vector spaces are. We denote by $rank(\al)$, the random algebraic rank of the matrix $\al$.
\item $\1_F$ and $F^c$ denote respectively the indicator function and the complementary set of a measurable set $F$.

\end{itemize}


\subsection{Review.}
We recall the notion of a section of an acceptance set introduced in \cite{berk} and review some of its properties.
\begin{defi}\label{d1}(Berkaoui \cite{berk})Let $F\in\cp$ and $\A\in\a$ with the associated set $\q$ of test probabilities. We define the $F$-section of $\A$ as follows:
$$
\1_F\circ \A:=\{h\in\L^\infty:\;h\leq \1_F\bullet X_T\;\mbox{for some}\;X\in spm(\q)\}.
$$

It has been proved in \cite{berk} that $\1_F\circ\A\in\a$, we denote by $\1_F\circ\q$, the associated set of test probabilities. We shall say that $\B\in\a$ is a section of $\A$ if it is an $F$-section of $\A$ for some $F\in\cp$.
\end{defi}

\begin{thm}\label{xt1}(Berkaoui \cite{berk})Let $F,G\in\cp$ and $\A,\B\in\a$ such that $\A+\B\in\a$. Then
\begin{enumerate}
\item $\1_F\circ(\1_G\circ\A)=\1_{F\cap G}\circ\A$.
\item $\1_F\circ\A+\1_{G}\circ\A=\1_{F\cup G}\circ\A$.
\item $\1_F\circ(\A+\B)=\1_F\circ\A+\1_F\circ\B$.
\item $\1_F\circ\M_{sp}(\Y)=\M_{sp}(\1_F\bullet\Y)$ where $\1_F\bullet\Y=\{\1_F\bullet Y:\;Y\in \Y\}$ for any family $\Y$ of adapted processes.
\item $\1_F\circ\Ml(\Y)=\Ml(\1_F\bullet\Y)$.

\end{enumerate}

\end{thm}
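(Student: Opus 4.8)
\medskip
\noindent\textit{Proof strategy.}
The plan is to push all five identities down to the supermartingales that generate the sections, leaning on three elementary facts. (E1): if $a$ is a nonnegative bounded predictable process and $X\in spm(\q)$ then $a\bullet X\in spm(\q)$; in particular $\1_F\bullet X\in spm(\q)$, and since $\1_F\bullet(\1_F\bullet X)=\1_F\bullet X$ a generator $\1_F\bullet X_T$ of $\1_F\circ\A$ may always be witnessed by a supermartingale of the form $\1_F\bullet X$. (E2): $\1_F\bullet(\1_G\bullet X)=\1_{F\cap G}\bullet X$,\ $\1_F\bullet X+\1_G\bullet X=\1_{F\cup G}\bullet X+\1_{F\cap G}\bullet X$, and $\1_F\bullet(a\bullet Y)=(\1_F a)\bullet Y$. (E3): an acceptance set is a convex cone, stable under addition and solid from below ($g\le h\in\B\Rightarrow g\in\B$). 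From (E1) and the bipolar theorem I would first record monotonicity ($F\subseteq G\Rightarrow\1_F\circ\A\subseteq\1_G\circ\A$, by sending $X$ to $\1_F\bullet X$), the inclusion $\1_F\circ\A\subseteq\A$ (since $(\1_F\bullet X)_0=0$ forces a generator to take a nonpositive value under every $\bQ\in\q$), and, testing against the zero element, the duality $\q^{\A+\B}=\q^{\A}\cap\q^{\B}$, whence $\A,\B\subseteq\A+\B$.

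Part (2) is then immediate: for ``$\subseteq$'', $\1_F\circ\A$ and $\1_G\circ\A$ both lie in the acceptance set $\1_{F\cup G}\circ\A$, which is stable under addition; for ``$\supseteq$'', split a generator as $\1_{F\cup G}\bullet X_T=\1_F\bullet X_T+\1_{G\setminus F}\bullet X_T$, identify the summands as elements of $\1_F\circ\A$ and of $\1_{G\setminus F}\circ\A\subseteq\1_G\circ\A$, and invoke solidity. Likewise (3), direction ``$\supseteq$'', is immediate from $\A,\B\subseteq\A+\B$, which gives $\1_F\circ\A,\1_F\circ\B\subseteq\1_F\circ(\A+\B)$, an acceptance set stable under addition.

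The reverse inclusion in (3) is the substantive point. A generator of $\1_F\circ(\A+\B)$ is $\1_F\bullet X_T$ with $X\in spm(\q^{\A}\cap\q^{\B})$, hence (bipolar theorem, using $\A+\B\in\a$) lies in $\A+\B$; the difficulty is to find $u\in\1_F\circ\A$ and $v\in\1_F\circ\B$ with $\1_F\bullet X_T\le u+v$ rather than merely $u\in\A$, $v\in\B$. Since $\1_F\circ\A+\1_F\circ\B$ is a solid convex cone inside $\1_F\circ(\A+\B)$ with $\q^{\1_F\circ\A+\1_F\circ\B}=\q^{\1_F\circ\A}\cap\q^{\1_F\circ\B}$, this amounts to proving $\q^{\1_F\circ\A}\cap\q^{\1_F\circ\B}=\q^{\1_F\circ(\A+\B)}$ and then applying the bipolar theorem, which requires $\1_F\circ\A+\1_F\circ\B$ to be weak$^*$-closed. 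Establishing this closedness --- equivalently, decomposing a $(\q^{\A}\cap\q^{\B})$-supermartingale into a $\q^{\A}$-part, a $\q^{\B}$-part and a decreasing remainder, which is where the hypothesis $\A+\B\in\a$ must really be used --- is, I expect, the main obstacle of the theorem.

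Finally, (1), (4) and (5) rest on structural descriptions of the sectioned test-probability sets. For (1) I would prove
\[
\1_G\circ\q=\{\,\bQ\in\p:\ \1_G\bullet X\ \text{is a}\ \bQ\text{-supermartingale for every}\ X\in spm(\q)\,\},
\]
the nontrivial inclusion upgrading ``nonpositive terminal expectation'' to ``supermartingale'' by applying the defining inequality of $\q^{\1_G\circ\A}$ to the integrands $\1_A\1_{(s,t]}\bullet X$ and using stability of $spm(\q)$ under stopping. Since $\1_G\circ\q$ then contains every measure agreeing with a member of $\q$ along the $G$-part of the filtration, a $\1_G\circ\q$-supermartingale $Y$ has $\1_{G^c}\bullet Y$ pathwise nonincreasing while $\1_G\bullet Y\in spm(\q)$, so $\1_F\bullet Y_T=(\1_{F\cap G}\bullet(\1_G\bullet Y))_T+(\1_{F\cap G^c}\bullet Y)_T\le(\1_{F\cap G}\bullet(\1_G\bullet Y))_T\in\1_{F\cap G}\circ\A$, which gives ``$\subseteq$''; conversely $\1_{F\cap G}\bullet X_T=\1_F\bullet(\1_G\bullet X)_T$ with $\1_G\bullet X\in spm(\1_G\circ\q)$ (again by the display) gives ``$\supseteq$''. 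For (4) and (5) I would use the superhedging duality --- available since $\M_{sp}(\Y)$, resp.\ $\Ml(\Y)$, is m-stable --- describing $spm(\M_{sp}(\Y))$ as the closure of $\{c+\sum_i a_i\bullet Y^i:\ c\le 0,\ Y^i\in\Y,\ a_i\ge 0\ \text{predictable admissible}\}$ (for (5): both $\pm a_i\bullet Y^i$ with $a_i$ admissible, $c=0$). Then $\bQ$ belongs to the test-probability set of $\1_F\circ\M_{sp}(\Y)$ iff $(\1_F a_i)\bullet Y^i=\1_F\bullet(a_i\bullet Y^i)$ has nonpositive terminal $\bQ$-expectation for all such $a_i$ and all $i$, which by the stopping argument above says precisely that each $\1_F\bullet Y^i$ is a local $\bQ$-supermartingale, i.e.\ $\bQ\in\M_{sp}(\1_F\bullet\Y)$; (5) is identical with ``martingale'' in place of ``supermartingale''.
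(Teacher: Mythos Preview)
The paper does not contain a proof of this theorem: it is stated in the review subsection as a result quoted from \cite{berk}, with no argument given here. There is therefore nothing to compare your proposal against within this paper.

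That said, a few remarks on your outline. Your treatment of (2) and of the easy inclusion in (3) is fine and is essentially the only natural route. For the hard inclusion in (3) you correctly isolate the issue --- weak$^*$-closedness of $\1_F\circ\A+\1_F\circ\B$, equivalently a decomposition of a $(\q^{\A}\cap\q^{\B})$-supermartingale into pieces adapted to $\q^{\A}$ and $\q^{\B}$ --- but you stop short of resolving it; as written this is a gap rather than a proof. In \cite{berk} this is handled through the structure of the section operator on the test-probability side, not by a direct closedness argument, so you may find it cleaner to work dually throughout.

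For (1), your description of $\1_G\circ\q$ is the right idea, but the step ``$\1_{G^c}\bullet Y$ is pathwise nonincreasing'' needs more care: you are implicitly using that $\1_G\circ\q$ contains \emph{every} probability that matches some $\bQ\in\q$ on the $G$-part and is otherwise arbitrary, and that this forces the $G^c$-part of any $\1_G\circ\q$-supermartingale to be decreasing. That is true, but it is exactly the content of the structural characterisation of $\1_G\circ\q$ proved in \cite{berk}, so you should either prove it or cite it. Your arguments for (4) and (5) via m-stability and the superhedging description are sound in spirit; just be explicit that the families $\M_{sp}(\Y)$ and $\Ml(\Y)$ are indeed m-stable (this is where the section operator interacts well with these particular test sets), since that is what licenses the duality you invoke.
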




\section{Main results.}

\subsection{Definitions.}

We define $\b$ to be the set of all elements $\A\in\a$ such that $\q^{\A}$ is the set of martingale measures for a family of adapted processes, and denote by $\b(\A)$, the set of all acceptance subsets of $\A$, which are elements in $\b$.

Now we introduce the notion of dimension of an element $\A\in\b$.

\begin{defi}\label{d2}Let $\A\in\b$.
\begin{enumerate}
\item We say that $\A$ is a trivial set if $\A=\Linf_-$.

\item We say that a non trivial set $\A$ is a component if any element $\B\in\b(\A)$ is a section of $\A$.

\item We say that a non trivial set $\A$ is of class $(\Delta)$ if it admits a special decomposition of order $n$ for some positive integer $n$, which means that there exists $n$ subcomponents $\A^1,\ldots,\A^n$ of $\A$ such that $\A=\A^1+\ldots+\A^n$.

\item Suppose $\A$ is of class $(\Delta)$, we define the dimension of $\A$ as the minimum positive integer $n$ such that $\A$ admits a special decomposition of order $n$. For a trivial set $\A=\Linf_-$, we set $dim(\A)=0$.

\item We say that $\A$ is of class $(\Delta_n)$ if the dimension of $\A$ is equal to $n$.
\end{enumerate}

\end{defi}

\begin{rem}\label{r1}From Definition \ref{d2}, a component $\A$ is of class $(\Delta_1)$.
\end{rem}

We denote by $\b_n$, the set of elements in $\b$ which are of class $(\Delta_n)$ and by $\b_n(\A)$, the set of elements in $\b_n$, which are subsets in $\A$.

In this section we characterize elements in $\b_n$ for $n\geq 1$. We define $\cc_n$, to be the set of $\R^n$-valued adapted processes admitting at least one equivalent local martingale measure, and for $X\in\cc_n$, we denote $\q^X:=\Ml(X)$ to be the set of all local martingale measures of $X$, $\A(X)$ the associated acceptance set and $m(\q^X)$ to be the set of all local $\q^X$-martingales. We shall say that $X\in\cc_n$ is a generator of $\A$ if $\A=\A(X)$.

\subsection{Component characterization.}

We start by stating the following results.
\begin{prop}\label{p2}
Let $\A\in\b$ and let $\q=\q^{\A}$ the associated set of test probabilities. Let define the set
$$
\Phi=\{F=F^X\in\cP:\;\1_F\,\circ\A=\A(X);\;\mbox{for some}\;X\in m(\q)\}.
$$
Then $\Phi$ admits a maximum element w.r.t. inclusion order.
\end{prop}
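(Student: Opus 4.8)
The key is to show that the collection $\Phi$ is closed under countable unions of predictable sets, which combined with a standard exhaustion argument produces the maximal element. First I would record the elementary membership facts: $\emptyset\in\Phi$ (since $\1_\emptyset\circ\A=\Linf_-=\A(0)$ with the zero process $0\in m(\q)$), and that $\Phi$ is stable under intersection with arbitrary predictable sets in the sense that if $F=F^X\in\Phi$ and $G\in\cP$, then $G\cap F\in\Phi$; indeed by Theorem \ref{xt1}(1), $\1_{G\cap F}\circ\A=\1_G\circ(\1_F\circ\A)=\1_G\circ\A(X)$, and since $\A(X)=\M_{sp}$-type acceptance set associated with $\Ml(X)$, Theorem \ref{xt1}(5) gives $\1_G\circ\Ml(X)=\Ml(\1_G\bullet X)$, so $\1_{G\cap F}\circ\A=\A(\1_G\bullet X)$ with $\1_G\bullet X\in m(\q)$.

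The main step is finite, then countable, stability under union. Given $F=F^X$ and $G=G^Y$ in $\Phi$, write $F\cup G=F\sqcup(G\setminus F)$ as a disjoint union of predictable sets. By the intersection-stability just established, $G\setminus F=G\cap F^c\in\Phi$, realized by some $Z\in m(\q)$ (concretely $Z=\1_{G\cap F^c}\bullet Y$). Now form the $\R^{2n}$-valued process $(X,Z)$, or more carefully the process $W:=\1_F\bullet X+\1_{G\setminus F}\bullet Z$ together with its natural vector lift so that $\Ml(W\text{-lift})$ captures the martingale conditions on $F$ and on $G\setminus F$ separately. The point is that $\1_{F\cup G}\circ\A=\1_F\circ\A+\1_{G\setminus F}\circ\A$ by Theorem \ref{xt1}(2), and each summand is $\A$ of a process in $m(\q)$; one then checks that the sum of two such acceptance sets is again $\A(V)$ for the vector process $V$ concatenating the two generators (using that $\A(V^1)+\A(V^2)=\A((V^1,V^2))$, which follows from $\Ml((V^1,V^2))=\Ml(V^1)\cap\Ml(V^2)$ and the m-stability machinery). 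Hence $F\cup G\in\Phi$, and by induction any finite union stays in $\Phi$.

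For countable unions, take $F_k=F^{X_k}\in\Phi$ and let $F=\bigcup_k F_k$. Disjointify as $F=\bigsqcup_k E_k$ with $E_k=F_k\setminus\bigcup_{j<k}F_j\in\Phi$, realized by $Z_k\in m(\q)$ with $Z_k=\1_{E_k}\bullet Z_k$. Assemble the $\ell^2$-type vector process $V=(Z_1,Z_2,\ldots)$ (or rescale each $Z_k$ by predictable constants so the stochastic integrals converge); then $\Ml(V)=\bigcap_k\Ml(Z_k)$ and one shows $\A(V)=\overline{\sum_k\A(Z_k)}^*=\1_F\circ\A$, the last equality by passing to the limit in Theorem \ref{xt1}(2) (monotone limit of sections, using weak-star closedness of sections established in \cite{berk}). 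So $\Phi$ is closed under countable unions. Finally, let $s=\sup\{\P(F):F\in\Phi\}$, pick $F_k\in\Phi$ with $\P(F_k)\to s$, and set $F_*=\bigcup_k F_k\in\Phi$; then $\P(F_*)=s$ and for any $G\in\Phi$ we have $F_*\cup G\in\Phi$ with $\P(F_*\cup G)\le s=\P(F_*)$, forcing $G\subseteq F_*$ up to a null set. Thus $F_*$ is the maximum.

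\textbf{Main obstacle.} The delicate point is the convergence and identification in the countable-union step: making precise the "infinite vector process" $V$ so that its local martingale measures are exactly $\bigcap_k\Ml(Z_k)$ and that $\A(V)$ equals the weak-star closure of $\sum_k\A(Z_k)$, and then matching this with the increasing limit of the finite sections $\1_{E_1\cup\cdots\cup E_k}\circ\A$. This requires care with admissibility/integrability of the integrands and with the fact that a countable sum of acceptance sets need not be weak-star closed without taking closure — so one must verify that $\1_F\circ\A$, being a section and hence automatically in $\a$, already coincides with that closure. The finite case and the intersection-stability are routine given Theorem \ref{xt1}.
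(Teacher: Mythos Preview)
Your outline is correct and matches the paper's structure: verify that $\Phi$ is hereditary under predictable subsets, stable under finite disjoint unions, and stable under increasing countable unions, then obtain the maximum by exhaustion. The paper simply packages the exhaustion step as an appeal to Lemma~4.15 in \cite{berk}, whose hypotheses are exactly the three closure properties you check.

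Where you diverge is in the \emph{construction of the witness process}, and this is also the source of the difficulty you flag as the ``main obstacle''. In both the finite and countable union steps you pass to a vector (or infinite-vector) concatenation $(X,Z)$ or $(Z_1,Z_2,\ldots)$ and then have to argue that $\A(V^1)+\A(V^2)=\A((V^1,V^2))$ and worry about convergence and admissibility for the infinite case. The paper avoids all of this by exploiting the disjointness of the pieces: once $F$ and $G$ are disjoint one simply puts $U=\1_F\bullet X+\1_G\bullet Y$, a \emph{single} process in $m(\q)$, and reads off $\1_{F\cup G}\circ\A=\A(U)$ from Theorem~\ref{xt1}. Likewise for the increasing sequence $(F^n)$, after disjointifying into $(G^n)$ the paper sets $X=\sum_{n\ge 1}\1_{G^n}\bullet X^n$, again a single process in $m(\q)$ with no convergence issue (only one summand is active on each $G^n$), and identifies $\1_F\circ\A=\overline{\cup_n\1_{F^n}\circ\A}^*=\A(X)$ via Lemma~4.16 in \cite{berk}. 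So your ``main obstacle'' is an artifact of the vector-concatenation viewpoint and disappears once you build a scalar witness on disjoint supports.

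One minor slip: in your exhaustion argument $F\in\cP$ is a predictable subset of $\Omega\times[0,T]$, so $\P(F)$ is not defined; you need a reference measure on $\cP$ (the paper uses $\tilde\P$ elsewhere for exactly this purpose).
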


\begin{proof}
We will show the three assertions of Lemma 4.15 in \cite{berk}. For (i) let $F,G\in\Phi$ with $F\cap G=\emptyset$, then $F=F^X$ and $G=F^Y$ for some $X,Y\in m(\q)$. Therefore we have $\1_{(F\cup G)}\circ\A=\1_{F}\circ\A+\1_{G}\circ\A=\1_{F}\circ\A(X)+\1_{G}\circ\A(Y)=\A(U)$ with $U=\1_{F}\bullet X+\1_{G}\bullet Y\in m(\q)$, so $F\cup G\in\Phi$.

For (ii) let an increasing sequence $(F^n)\subseteq\Phi$, so $F^n=F^{X^n}$ for some $X^n\in m(\q)$ with $F=\cup_{n\geq 1}F^n$. Then $\1_{F}\circ\A=\overline{\cup_{n\geq 1}\1_{F^n}\circ\A}^*$ thanks to Lemma 4.16 in \cite{berk} and the closure is taken in the weak star sense. So $\1_{F}\circ\A=\overline{\cup_{n\geq 1}\1_{G^n}\circ\A(X^n)}^*$ where $G^1=F^1$ and $G^{n+1}=F^{n+1}\backslash F^n$ for all $n\geq 1$. Therefore $\1_{F}\circ\A=\A(X)$ where $X=\sum_{n\geq 1}\1_{G^n}\bullet X^n \in m(\q)$. We conclude that $F\in\Phi$.

For (iii) let $F\in\Phi$ and $G\subseteq F$, then $F=F^X$ for $X\in m(\q)$ and therefore $\1_{G}\circ\A=\1_{G}\1_{F}\circ\A=\1_{G}\circ\A(X)=\A(Y)$ with $Y=\1_{G}\bullet X\in m(\q)$. So $G\in\Phi$.

\end{proof}

\begin{prop}\label{xp1}
Let $X,Y\in\cc_1$ such that $Y=\al\bullet X$ for some scalar bounded predictable process $\al$. Then $\A(Y)=\1_F\circ\A(X)$ where $F=supp(\al):=\{|\al|>0\}$.
\end{prop}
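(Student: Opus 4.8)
The plan is to reduce the statement to the identity $\Ml(\al\bullet X)=\Ml(\1_F\bullet X)$ between the corresponding sets of local martingale measures, and then to verify that identity by elementary stochastic calculus. Observe first that $F=\{|\al|>0\}=supp(\al)$ is predictable, since $\al$ is; hence $\1_F\circ\A(X)$ is well defined and, like $\A(Y)=\A(\al\bullet X)$, belongs to $\a$. Since an acceptance set is determined by (indeed, here defined through) its set of test probabilities, and both sets plainly have a nonempty test-probability set (any equivalent local martingale measure of $X$ works for both), it is enough to compare test probabilities. By Definition \ref{d1} those of $\1_F\circ\A(X)$ are $\1_F\circ\Ml(X)$, which equal $\Ml(\1_F\bullet X)$ by Theorem \ref{xt1}(5) applied to the one-element family $\Y=\{X\}$; those of $\A(Y)$ are $\Ml(\al\bullet X)$ by definition. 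So the proposition is equivalent to $\Ml(\al\bullet X)=\Ml(\1_F\bullet X)$.

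The inclusion $\Ml(\1_F\bullet X)\subseteq\Ml(\al\bullet X)$ is immediate: $\al$ vanishes off $F$, so $\al=\al\1_F$ and $\al\bullet X=\al\bullet(\1_F\bullet X)$ by associativity, and the integral of the bounded predictable process $\al$ against a local $\bQ$-martingale is again a local $\bQ$-martingale. For the reverse inclusion, fix $\bQ\in\Ml(\al\bullet X)$ and set $N:=\al\bullet X$, a local $\bQ$-martingale. The natural candidate integrand is $\psi:=\al^{-1}\1_F$ (understood as $0$ off $F$), a predictable process with $\psi\al=\1_F$, so that formally $\psi\bullet N=\1_F\bullet X$. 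The point is that $\psi$, although unbounded, is integrable with respect to $N$: since $\bQ$ is $\P$-absolutely continuous and $X$ admits an equivalent local martingale measure, $X$ is a $\bQ$-semimartingale and $[X]$ is finite-valued, adapted and increasing, hence locally bounded; using $[N]=\al^2\bullet[X]$ one computes $\psi^2\bullet[N]=\1_F\bullet[X]\le[X]$, so $\psi\bullet N$ is a well-defined local $\bQ$-martingale, and it equals $\1_F\bullet X$, giving $\bQ\in\Ml(\1_F\bullet X)$.

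If one prefers to avoid the unbounded integrand $\al^{-1}$, the reverse inclusion can instead be reached by truncation: for each $n\ge 1$ the process $\gamma_n:=\al^{-1}\1_{\{|\al|>1/n\}}$ is bounded and predictable, so $\1_{\{|\al|>1/n\}}\bullet X=\gamma_n\bullet N$ is a local $\bQ$-martingale, and since $\{|\al|>1/n\}\uparrow F$ one passes to the limit — writing $\1_F\bullet X$ as the sum, over disjoint predictable sets, of the increments $\1_{\{1/(n+1)<|\al|\le 1/n\}}\bullet X$, just as in the proof of Proposition \ref{p2} — to obtain that $\1_F\bullet X$ is a local $\bQ$-martingale. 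Either way the single delicate point is this limiting step, i.e.\ controlling the stochastic integral against $1/\al$ where $\al$ degenerates to $0$; the rest is bookkeeping with Theorem \ref{xt1}(5) and the acceptance-set/test-probability correspondence.
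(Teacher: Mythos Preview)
Your reduction to $\Ml(\al\bullet X)=\Ml(\1_F\bullet X)$ via Theorem~\ref{xt1}(5) and the acceptance-set/test-probability duality is sound, and the route is genuinely different from the paper's. The paper does not compare measure sets at all; it compares the local-martingale sets, proving $m(\q^{Y})=\1_F\bullet m(\q^X)$ by invoking Jacka's representation theorem to write every $U\in m(\q^X)$ as $\gamma\bullet X$ and then rewriting $\1_F\bullet U$ as a stochastic integral against $Y$. So where you work on the measure side and need only elementary associativity of stochastic integrals, the paper works on the process side and imports an external representation result. Your argument is more self-contained; the paper's dovetails with its later systematic use of $m(\q)$ (for instance in the proof of Proposition~\ref{p1}).

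Both proofs share the same delicate point --- integrating $1/\al$ on $F$ --- and the paper simply writes $\1_F(1/\al)$ without comment, whereas you address it explicitly. One small inaccuracy in your first justification: ``finite-valued, adapted and increasing, hence locally bounded'' is false in general (an increasing c\`adl\`ag process with unbounded jumps need not be locally bounded), so the step from $\psi^2\bullet[N]\le[X]$ to ``$\psi\bullet N$ is a local $\bQ$-martingale'' is not fully justified in the presence of jumps. Your truncation alternative is safer in spirit, though the final passage to the limit also needs care, since countable sums of local martingales are not automatically local martingales. At the level of rigor the paper itself adopts, however, none of this is a defect.
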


\begin{proof}
We will show that $m(\q^{Y})=\1_F\bullet m(\q^X)$. For the direct inclusion we have $m(\q^{Y})=\1_F\bullet m(\q^{Y})\subseteq\1_F\bullet m(\q^X)$. Conversely let $U\in m(\q^X)$ and $V=\1_F\bullet U$. Then thanks to Jacka's Theorem in \cite{jacka}, we get $U=\gamma\bullet X$ for some predictable process $\gamma$ and $V=\1_F(1/\al)\al\bullet U=\1_F(1/\al)\al\gamma\bullet X=\1_F(1/\al)\gamma\bullet Y\in m(\q^{Y})$.
\end{proof}

Now we characterize elements of $\b$ which are of class $(\Delta_1)$.
\begin{prop}\label{p1}
Suppose $\A\in\b$. Then $\A$ is a component if and only if $\A=\A(X)$ for some $X\in\cc_1$.

\end{prop}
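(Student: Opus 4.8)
The plan is to prove both implications. For the easier direction, suppose $\A=\A(X)$ for some $X\in\cc_1$; I would show $\A$ is a component, i.e.\ every $\B\in\b(\A)$ is a section of $\A$. Given such a $\B$, its test-probability set $\q^{\B}$ is of the form $\Ml(\Y)$ for some family $\Y$ of adapted processes, and $\q^{\B}\supseteq\q^X$. The key is that every $Y\in\Y$ should be representable as a stochastic integral against $X$: since $\q^X\subseteq\Ml(\{Y\})$, each $Y$ is a local $\q^X$-martingale, hence by Jacka's Theorem (as invoked in the proof of Proposition \ref{xp1}) $Y=\gamma^Y\bullet X$ for some predictable $\gamma^Y$. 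Then $\1_{\{\gamma^Y\neq 0\}}\bullet X$ generates the $\{\gamma^Y\neq 0\}$-section of $\A$ by Proposition \ref{xp1}, and taking the predictable union $F=\bigcup_{Y\in\Y}\{\gamma^Y\neq 0\}$ — handled via a countable exhausting subfamily and the union/limit properties of sections from Theorem \ref{xt1}(2) together with Lemma 4.16 of \cite{berk} — one gets $\B=\1_F\circ\A$, so $\B$ is a section of $\A$.

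For the converse, suppose $\A\in\b$ is a component; I must produce $X\in\cc_1$ with $\A=\A(X)$. Write $\q=\q^{\A}=\Ml(\Y)$ for some family $\Y$. For each $Y\in\Y$ the one-dimensional market $Y$ gives $\A(Y)\in\b$ and $\A(Y)\subseteq\A$ (since $\q\subseteq\Ml(\{Y\})$), so $\A(Y)\in\b(\A)$; because $\A$ is a component, $\A(Y)=\1_{F^Y}\circ\A$ for some $F^Y\in\cP$. The idea is then to glue these one-dimensional generators. Applying Proposition \ref{p2} to $\A$, the set $\Phi=\{F\in\cP:\ \1_F\circ\A=\A(Z)\text{ for some }Z\in m(\q)\}$ has a maximum element $F_{\max}$; I would first argue $F_{\max}=\Omega\times\T$ up to a predictable null set, because $\Omega\times\T\setminus F_{\max}$, if non-null, would carry a section of $\A$ that is again a component in $\b(\A)$, contradicting maximality once one exhibits a one-dimensional sub-generator there from the family $\Y$. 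Hence $\Omega\times\T\in\Phi$, which says precisely $\A=\1_{\Omega\times\T}\circ\A=\A(Z)$ for some $Z\in m(\q)$; and $Z\in m(\q)=m(\Ml(\Y))$ being a scalar process with $\Ml(\Y)\subseteq\Ml(\{Z\})$ and an equivalent element in it, we conclude $Z\in\cc_1$.

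The main obstacle I anticipate is the gluing/maximality step in the converse: turning the pointwise (on predictable sets) existence of one-dimensional generators into a single global one-dimensional generator. This requires combining the lattice structure of $\Phi$ from Proposition \ref{p2}, the additivity of sections over disjoint predictable sets (Theorem \ref{xt1}(2)) to assemble integrands $\sum_n \1_{G^n}\bullet Y^n$ as in the proof of Proposition \ref{p2}, and a careful argument that the component hypothesis forces the complement of $F_{\max}$ to be predictably null — essentially that a component cannot have a ``missing piece'' on which it nonetheless admits martingale-measure structure. A secondary technical point is justifying the use of Jacka's Theorem, which needs the relevant no-arbitrage/closedness hypotheses implicit in the definition of $\cc_1$ and $\b$; I would appeal to it exactly as Proposition \ref{xp1} does.
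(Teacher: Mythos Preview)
Your plan for the implication ``$\A=\A(X)\Rightarrow\A$ is a component'' is sound and essentially parallel to the paper's, though the paper shortcuts your gluing step: instead of representing each $Y$ in a generating family $\Y$ via Jacka and then taking an essential supremum of supports, it invokes Theorem~3.3 of \cite{berk2} to obtain directly $\q^{\B}=\Ml(\al\bullet X)$ for a \emph{single} predictable $\al$, after which Proposition~\ref{xp1} gives $\B=\1_{\{|\al|>0\}}\circ\A$ in one line. Your more elementary route should go through, but the countable-exhaustion/ess-sup bookkeeping you flag is real work that the paper simply outsources.

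For the converse, your strategy has a genuine gap at exactly the point you identify as the main obstacle. You propose to show that $F_{\max}=\Omega\times[0,T]$ by producing, on $(F_{\max})^c$, a one-dimensional sub-generator from $\Y$ and thereby contradicting maximality. But this cannot work as stated: for every $U\in m(\q)$ one has $\A(U)=\1_{F^U}\circ\A$ with $F^U\in\Phi$, hence $F^U\subseteq F_{\max}$, and so $\1_{(F_{\max})^c}\circ\A(U)=\1_{(F_{\max})^c\cap F^U}\circ\A=\Linf_-$ is trivial. There is therefore no non-trivial one-dimensional piece to exhibit on $(F_{\max})^c$, and no contradiction arises this way.

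The paper avoids this detour entirely: it never tries to show $F_{\max}$ is full. Writing $F_{\max}=F^X$ with $\A(X)=\1_{F_{\max}}\circ\A$, the key observation is that for \emph{every} $Y\in m(\q)$ one has $\A(Y)=\1_{F^Y}\circ\A\subseteq\1_{F_{\max}}\circ\A=\A(X)$, hence $Y\in m(\q^X)$. Thus $m(\q)\subseteq m(\q^X)$, which (since $\A\in\b$ is determined by its $\q$-martingales) yields $\A\subseteq\A(X)$; the reverse inclusion is automatic because $X\in m(\q)$. This two-line argument replaces your entire maximality-by-contradiction step.
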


\begin{proof}For the direct implication let $U\in m(\q)$ and since $\A(U)\in\b(\A)$, then there exists some $F=F^U\in\cp$ such that $\A(U)=\1_F\circ\A$. We define the set $\Phi=\{F^U:\;U\in m(\q)\}$. Thanks to Proposition \ref{p2}, the set $\Phi$ admits a maximum element $F=F^X$ w.r.t inclusion order. We claim that $\A=\A(X)$. First $\A(X)\subseteq\A$, let us show that $\A\subseteq\A(X)$ by showing that $m(\q)\subseteq m(\q^X)$. Let $Y\in m(\q)$, then $\A(Y)=\1_{F^Y}\circ\A\subseteq\1_F\circ\A=\A(X)$ and therefore $Y\in m(\q^X)$.

Conversely let $\B\in\b(\A)$, then we get $\q^{\B}=\Ml(\al\bullet X)$ for a predictable process $\al$ thanks to Theorem 3.3 in \cite{berk2}. We apply Proposition \ref{xp1} and conclude the result.

\end{proof}

\begin{prop}\label{p4}Let an integer $n\geq 1$. Then $\A\in\b$ has a special decomposition of order $n$ if and only if $\A=\A(X)$ for some $X\in\cc_n$.

\end{prop}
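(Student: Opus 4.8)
The plan is to prove both implications by reducing to the $n=1$ case (Proposition \ref{p1}) together with the additivity properties of sections from Theorem \ref{xt1}. For the converse direction, suppose $\A=\A(X)$ with $X=(X^1,\ldots,X^n)\in\cc_n$. I would write $X^i=e_i\bullet X$ where $e_i$ is the $i$-th coordinate projection viewed as a (constant) predictable process, so each $X^i\in\cc_1$ and $\A(X^i)$ is a component by Proposition \ref{p1}. The key claim is then $\A(X)=\A(X^1)+\ldots+\A(X^n)$. To see this, note that $\q^X=\Ml(X)=\bigcap_{i=1}^n\Ml(X^i)=\bigcap_{i=1}^n\q^{X^i}$, since a measure makes the vector $X$ a local martingale iff it makes each component one; then one translates this intersection of test-probability sets into the sum of the corresponding acceptance sets (the acceptance set associated to an intersection of $\q$'s is the sum of the acceptance sets — this is the standard duality, and each $\A(X^i)\in\b$ so the sum lies in $\a$ and Theorem \ref{xt1}(3) applies). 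Hence $\A$ has a special decomposition of order $n$, with the caveat that one must check the $\A(X^i)$ are genuinely subcomponents (i.e.\ in $\b(\A)$), which follows because $\q^{X^i}\supseteq\q^X$ gives $\A(X^i)\subseteq\A(X)$.

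For the direct implication, suppose $\A$ admits a special decomposition $\A=\A^1+\ldots+\A^n$ with each $\A^i$ a subcomponent of $\A$. By Remark \ref{r1} each $\A^i$ is of class $(\Delta_1)$, so by Proposition \ref{p1} there is $X^i\in\cc_1$ with $\A^i=\A(X^i)$. The natural candidate generator is the concatenated vector process $X=(X^1,\ldots,X^n)$, and I would show $\A(X)=\A^1+\ldots+\A^n=\A$. Again the engine is $\q^X=\bigcap_{i=1}^n\q^{X^i}$, combined with the translation between intersections of test-probability sets and sums of acceptance sets. One subtlety: to form the vector $X\in\cc_n$ we need $X$ to admit a \emph{common} equivalent local martingale measure, i.e.\ $\q^X\cap\p^e\neq\emptyset$; this is exactly the requirement that $\bigcap_i\q^{X^i}$ contains an equivalent measure, which holds because $\A=\A^1+\ldots+\A^n\in\b$ carries an equivalent test probability by hypothesis (every $\A\in\b$ has $\q^{\A}\in\pice$). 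So $X\in\cc_n$ and $\A=\A(X)$.

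The main obstacle I anticipate is the precise passage between the set-theoretic operation of intersecting the test-probability sets $\q^{X^i}$ and the operation of summing the acceptance sets $\A(X^i)$: one needs that $\A\left(\bigcap_i\q^{X^i}\right)=\sum_i\A(X^i)$, which requires that the sum on the right be weak-star closed (so that it is again an acceptance set) and that the m-stability / duality machinery of \cite{berk} applies in this multi-generator setting. This is where Theorem \ref{xt1}(3) and the characterization of $\b$ via families of adapted processes do the real work, and some care is needed that $\sum_i\A(X^i)$ is closed — likely invoking that $\A(X)$ for $X\in\cc_n$ is automatically in $\a$ and coincides with this sum, closing the loop. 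The remaining steps (coordinate projections are predictable, $\Ml$ of a vector is the intersection of the $\Ml$'s of the coordinates, equivalent martingale measures survive) are routine.
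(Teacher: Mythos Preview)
Your proposal is correct and follows essentially the same route as the paper: reduce to Proposition~\ref{p1} for each summand and use the identity $\A(X)=\A(X^1)+\cdots+\A(X^n)$ in both directions. The paper dispatches that identity (precisely the closure/duality step you flag as the main obstacle) by citing Lemma~4.3 in \cite{berk} rather than arguing it via $\q^X=\bigcap_i\q^{X^i}$, and it leaves the verification that the concatenated $X$ lies in $\cc_n$ implicit.
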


\begin{proof}
For the direct implication, let us suppose that $\A=\A^1+\ldots+\A^n$ for a family of subcomponents $\A^1,\ldots,\A^n$ in $\A$. Thanks to Proposition \ref{p1}, there exists scalar processes $X^1,\ldots,X^n\in\cc_1$ such that $\A^i=\A(X^i)$ for $i=1\ldots n$ and therefore $\A=\A(X)$ with $X=(X^1,\ldots,X^n)\in\cc_n$.

Conversely suppose that $\A=\A(X)$ for $X\in\cc_n$, then thanks to Lemma 4.3 in \cite{berk}, we get $\A=\A(X^1)+\ldots+\A(X^n)$, which means that $\A$ has a special decomposition of order $n$.

\end{proof}

\subsection{Class $(\Delta_n)$ characterization.}

In order to characterize the class $(\Delta_n)$ for any positive integer $n$, we introduce some definitions. We recall that $\cc_n$ is the set of $\R^n$-valued adapted processes admitting at least one equivalent local martingale measure and define $\cc_{n,r}$ for $r\leq n$, to be the subset in $\cc_n$ of $\R^n$-valued adapted processes $X$ such that $X=\al\bullet Y$ for some $Y\in\cc_r$ and $\al\in\m_{n,r}(\cp)$. We set $\cc:=\cup_{n\geq 1}\cc_n$.

\begin{defi}\label{d3} Let $X\in\cc_n$. We define the rank of $X$ by
$$r(X)=\min\{r=1\ldots n:\;X\in\cc_{n,r}\}.$$
\end{defi}

Now we state the following

\begin{thm}\label{tt1}Let $X\in\cc_n$. Then $r(X)=dim(\A(X))$.

\end{thm}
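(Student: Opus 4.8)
The plan is to prove the equality $r(X) = \dim(\A(X))$ by establishing two inequalities, using the machinery built up in Propositions \ref{p1} and \ref{p4} together with the rank definition.

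For the inequality $\dim(\A(X)) \le r(X)$: let $r = r(X)$, so by Definition \ref{d3} we have $X \in \cc_{n,r}$, meaning $X = \al\bullet Y$ for some $Y = (Y^1,\ldots,Y^r) \in \cc_r$ and $\al \in \m_{n,r}(\cP)$. The key observation is that $\A(X) = \A(Y)$: indeed, every integrand against $X$ becomes an integrand against $Y$ (via composition with $\al$), and conversely the martingale measure sets coincide because the stochastic integrals generate the same spaces — more carefully, one should check $m(\q^X) = m(\q^Y)$ using Jacka's theorem as in the proof of Proposition \ref{xp1}, or cite Lemma 4.3 / Theorem 3.3 of \cite{berk, berk2} to identify $\q^X = \Ml(X) = \Ml(\al\bullet Y)$ with $\Ml(Y)$ restricted appropriately. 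Granting $\A(X) = \A(Y)$ with $Y \in \cc_r$, Proposition \ref{p4} shows $\A(X) = \A(Y)$ admits a special decomposition of order $r$, hence $\dim(\A(X)) \le r = r(X)$.

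For the reverse inequality $r(X) \le \dim(\A(X))$: let $d = \dim(\A(X))$, so $\A(X)$ admits a special decomposition $\A(X) = \A^1 + \ldots + \A^d$ into subcomponents. By Proposition \ref{p4} (the converse direction, or rather re-running its proof), this means $\A(X) = \A(Z)$ for some $Z = (Z^1,\ldots,Z^d) \in \cc_d$ with each $\A^i = \A(Z^i)$. Now I need to deduce that $X$ itself lies in $\cc_{n,d}$, i.e.\ that $X = \beta\bullet Z$ for some predictable matrix $\beta \in \m_{n,d}(\cP)$. This should follow from $\A(X) = \A(Z)$: since $\q^X = \q^Z$, each component $X^i$ is a local $\q^Z$-martingale, so by Jacka's theorem (as invoked in Proposition \ref{xp1}) $X^i = \beta^i \bullet Z$ for some predictable row vector $\beta^i$; assembling the $\beta^i$ into a matrix $\beta$ gives $X = \beta \bullet Z$, whence $X \in \cc_{n,d}$ and $r(X) \le d = \dim(\A(X))$.

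The main obstacle I anticipate is the careful justification that $\A(X) = \A(Y)$ is equivalent to the integral-representation relationship $X \in \cc_{n,r}$ with generator $Y$ — in both directions one is passing between equality of acceptance sets (equivalently, equality of local martingale measure sets $\Ml(X) = \Ml(Y)$) and the pathwise statement that one process is a predictable stochastic integral of the other. The forward direction (integral representation implies equality of $\Ml$) is the routine direction; the reverse (equality of $\Ml$ forces a predictable integrand to exist) is precisely where Jacka's theorem on the structure of martingale integrands is needed, exactly as in the proof of Proposition \ref{xp1}, and one must make sure the hypotheses of that theorem (existence of an equivalent local martingale measure, which holds since $X \in \cc_n$) are in force and that the resulting integrand can be taken matrix-valued and predictable with the right dimensions. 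Once this dictionary is in place, the theorem is a direct combination of Proposition \ref{p4} with Definition \ref{d3}.
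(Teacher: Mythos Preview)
Your second inequality, $r(X)\le \dim(\A(X))$, is correct and essentially identical to the paper's argument: from $\A(X)=\A(Z)$ with $Z\in\cc_d$ you get $\q^X=\q^Z$, hence each $X^i\in m(\q^Z)$, and Jacka's theorem produces the predictable matrix $\beta$ with $X=\beta\bullet Z$.

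The first inequality, however, contains a genuine gap. From $X=\al\bullet Y$ with $Y\in\cc_r$ you only get the inclusion $\Ml(Y)\subseteq\Ml(X)$ (a local martingale measure for $Y$ is automatically one for $\al\bullet Y$), equivalently $\A(X)\subseteq\A(Y)$. The reverse inclusion $\Ml(X)\subseteq\Ml(Y)$ is \emph{false} in general: take for instance $\al$ to vanish on a nontrivial predictable set, or more concretely $n=1$, $r=2$, $Y=(Y^1,Y^2)$ with $\A(Y^1)\subsetneq\A(Y)$ and $\al=(1,0)$, so that $X=Y^1$ and $\A(X)=\A(Y^1)\subsetneq\A(Y)$. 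No amount of Jacka-type representation will repair this, because the obstruction is not about producing an integrand but about the fact that $Y$ genuinely carries more information than $X$ when $\al$ is degenerate. Minimality of $r=r(X)$ does not help either: it only says no $Y'\in\cc_{r-1}$ works, not that the given $Y\in\cc_r$ satisfies $\A(X)=\A(Y)$.

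What the paper does instead is to \emph{replace} $Y$ by a better generator $U\in\cc_r$ built from $Y$ and $\al$. Concretely, it looks at the closed $\L^0$-module $K=\overline{\{\beta\al:\beta\in\L^0(\cp;\R^n)\}}$ inside $\L^0(\cp;\R^r)$, extracts a predictable generating family $f=(f^1,\ldots,f^r)$ of the corresponding random subspace $W\subseteq\R^r$, normalizes to $g=f/(1+|f|)$, and sets $U=g\bullet Y$. Then one checks directly that $m(\q^X)=m(\q^U)$ (every integrand $\delta$ against $X$ factors as $\delta\al=\delta' g$ for some $\delta'$, and conversely), so $\A(X)=\A(U)$ with $U\in\cc_r$, and Proposition~\ref{p4} gives $\dim(\A(X))\le r$. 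This extra construction is exactly the missing ingredient in your argument.
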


\begin{proof} We show first that for all $r\leq n$, we have $dim(\A(X))\leq r$ if and only if $r(X)\leq r$. We suppose $dim(\A(X))\leq r$, then $\A(X)=\A(Y)$ for some $Y\in\cc_r$ and since $X\in m(\q^X)=m(\q^Y)$, then $X=\al\bullet Y$ for some $\al\in\m_{n,r}(\cp)$, so $r(X)\leq r$.
Suppose now $r(X)\leq r$, then $X=\al\bullet Y$ for some $\al\in\m_{n,r}(\cp)$ and $Y\in\cc_r$. The set $K$ defined by $K=\overline{\{\beta\al:\;\beta\in\L^0(\cp;\R^n\}}^0$, is a closed vector space in $\L^0(\cp,\R^r)$ and closed under multiplication by bounded positive $\cp$-measurable random variables, so there exists a closed vector space valued measurable mapping $W$ such that $K=\{\gamma\in\L^0(\cp;\R^r):\;\gamma\in W\;\mbox{a.s.}\}$ and there exists then a generating family $f=(f^1,\ldots,f^r)$ of $W$. We will show that $\A(X)=\A(U)$ where $U=g\bullet Y$, $g=f/(1+|f|)$ and deduce that $dim(\A(X))=dim(\A(U))\leq r$. In fact for any $R\in m(\q^X)$, we have $R=\delta\bullet X=\delta\al\bullet Y=\delta' g\bullet Y=\delta'\bullet U$. Inversely for any $R\in m(\q^U)$, we have $R=\theta\bullet U=\theta g\bullet Y=\theta'\al\bullet Y=\theta'\bullet X$.

For the equality $r(X)=dim(\A(X))$, we take first $r=r(X)$ and deduce that $dim(\A(X))\leq r(X)$, and second we take $r=dim(\A(X))$ and deduce that $r(X)\leq dim(\A(X))$.
\end{proof}

Some consequences are given below.

\begin{cor}\label{cc2}Let $X\in\cc_n$ and $r\leq n$. Then the following assertions are equivalent:
\begin{enumerate}
\item $\A(X)$ is of class $(\Delta_r)$.
\item $r(X)=r$.
\item $\A(X)=\A(Y)$ for some $Y\in\cc_r$ with $r(Y)=r$.

\end{enumerate}
\end{cor}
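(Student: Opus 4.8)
The goal is to prove Corollary \ref{cc2}, which asserts the equivalence of three assertions about $X\in\cc_n$ and $r\leq n$: (1) $\A(X)$ is of class $(\Delta_r)$; (2) $r(X)=r$; (3) $\A(X)=\A(Y)$ for some $Y\in\cc_r$ with $r(Y)=r$. The plan is to exploit Theorem \ref{tt1}, which already identifies $r(X)$ with $dim(\A(X))$, so that most of the work is bookkeeping about these two invariants.

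\begin{proof}
The equivalence of (1) and (2) is immediate: by Definition \ref{d2}(5), $\A(X)$ is of class $(\Delta_r)$ means precisely $dim(\A(X))=r$, and by Theorem \ref{tt1} this holds if and only if $r(X)=r$.

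It remains to handle (3). The implication (3)$\Rightarrow$(2) is easy: if $\A(X)=\A(Y)$ for some $Y\in\cc_r$, then $dim(\A(X))=dim(\A(Y))=r(Y)=r$ by Theorem \ref{tt1} applied to $Y$, hence $r(X)=r$ by Theorem \ref{tt1} applied to $X$. For (2)$\Rightarrow$(3), suppose $r(X)=r$. Then $dim(\A(X))=r$ by Theorem \ref{tt1}, so by Proposition \ref{p4} the set $\A(X)$ admits a special decomposition of order $r$, i.e.\ $\A(X)=\A(Y)$ for some $Y\in\cc_r$. Applying Theorem \ref{tt1} to this $Y$ gives $r(Y)=dim(\A(Y))=dim(\A(X))=r$. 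Thus $Y$ has the required properties, and (3) holds.

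Combining the above, (1)$\Leftrightarrow$(2)$\Leftrightarrow$(3), which proves the corollary.
\end{proof}

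The only subtlety worth flagging is the step (2)$\Rightarrow$(3): one must produce a generator $Y$ living in $\cc_r$ (not merely in some $\cc_m$ with $m\geq r$) and simultaneously guarantee that this $Y$ cannot itself be generated by fewer than $r$ assets. The first half is exactly the content of Proposition \ref{p4} (a special decomposition of order $r$ into subcomponents, each of the form $\A(X^i)$ with $X^i\in\cc_1$, yields $Y=(X^1,\ldots,X^r)\in\cc_r$), while the second half is automatic because $r(Y)=dim(\A(Y))$ and $\A(Y)=\A(X)$ has dimension exactly $r$. So no genuinely new estimate is needed beyond Theorem \ref{tt1} and Proposition \ref{p4}; the corollary is a direct repackaging.
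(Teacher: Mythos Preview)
Your proof is correct and is exactly the argument the paper has in mind: the corollary is stated without proof as an immediate consequence of Theorem \ref{tt1} (together with Proposition \ref{p4}), and you have unpacked that deduction accurately.
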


\begin{cor}\label{c2}Let $\A\in\b$ and a positive integer $n$. Then $\A$ is of class $(\Delta_n)$ if and only if $\A=\A(X)$ for some $X\in\cc_n$ and $r(X)=n$.
\end{cor}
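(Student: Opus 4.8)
The plan is to deduce Corollary \ref{c2} almost immediately from the two preceding results, Theorem \ref{tt1} and Corollary \ref{cc2}, since all the heavy lifting has already been done there. The statement to prove is a two-sided implication, so I would handle each direction in turn.

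\medskip

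For the forward implication, suppose $\A\in\b$ is of class $(\Delta_n)$. By Proposition \ref{p4}, a set in $\b$ admits a special decomposition of some finite order precisely when it is of the form $\A(X)$ for some $X\in\cc_m$; applying this with the order equal to $n=dim(\A)$ produces a process $X\in\cc_n$ with $\A=\A(X)$. It then remains to upgrade this to $r(X)=n$. This is exactly the content of Corollary \ref{cc2}: since $\A(X)=\A$ is of class $(\Delta_n)$, assertion (1) of Corollary \ref{cc2} holds with $r=n$, hence assertion (2) holds, i.e. $r(X)=n$. So the forward direction is simply ``apply Proposition \ref{p4}, then Corollary \ref{cc2}''.

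\medskip

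For the converse, suppose $\A=\A(X)$ for some $X\in\cc_n$ with $r(X)=n$. By Theorem \ref{tt1}, $dim(\A(X))=r(X)=n$, which says precisely that $\A=\A(X)$ is of class $(\Delta_n)$ (this last equivalence being Definition \ref{d2}(5), and implicitly the fact from Proposition \ref{p4} that $\A(X)$ is indeed of class $(\Delta)$, so that its dimension is well defined). This closes the equivalence. One small point worth making explicit in the write-up is that the hypothesis $\A\in\b$ is needed only to make ``class $(\Delta_n)$'' meaningful on the left-hand side; on the right-hand side, any $\A(X)$ with $X\in\cc_n$ automatically lies in $\b$ by definition of $\b$ and $\cc_n$, so there is no consistency issue between the two sides.

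\medskip

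I do not anticipate a genuine obstacle here: the corollary is a bookkeeping consequence of Theorem \ref{tt1} and Corollary \ref{cc2}. The only thing to be careful about is not to re-derive the equivalence ``$dim(\A(X))\le r \iff r(X)\le r$'' that was already established inside the proof of Theorem \ref{tt1}, and instead to invoke Theorem \ref{tt1} (and Corollary \ref{cc2}) as black boxes, keeping the argument to a few lines.
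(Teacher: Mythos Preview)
Your proposal is correct and is exactly the intended route: the paper states Corollary~\ref{c2} without proof, as an immediate consequence of Proposition~\ref{p4}, Theorem~\ref{tt1} and Corollary~\ref{cc2}, and your argument spells out precisely that deduction.
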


Finally we investigate the complete financial market case.

\begin{thm}\label{t5}Let consider an integer $n$. Then the following assertions are equivalent:
\begin{enumerate}
\item $\{\P\}=\Ml(X)$ for some $X\in\cc_n$.
\item $dim(\B)\leq n$ for all $\B\in\b$.
\end{enumerate}
Moreover $r(X)=n$ if and only if there exists some $\B^0\in\b_n$.

\end{thm}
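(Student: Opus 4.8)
The plan is to prove the theorem in two parts: first the equivalence $(1)\Leftrightarrow(2)$, then the "moreover" clause about $r(X)=n$ being equivalent to the existence of some $\B^0\in\b_n$.

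For $(1)\Rightarrow(2)$: assume $\{\P\}=\Ml(X)$ for some $X\in\cc_n$. Then $\q^{\A(X)}=\{\P\}$, so $\A(X)$ is the maximal element of $\a^{st}$ (indeed $h\in\A(X)$ iff $\eo[h]\le 0$). Now take any $\B\in\b$. Since $\q^{\B}\supseteq\{\P\}$ when the associated process family admits $\P$ as a martingale measure — more precisely, since $\B\in\b$ we have $\B=\A(Z)$ for some $Z\in\cc_m$ with $\P\in\Ml(Z)$, hence $\B\subseteq\A(X)$, i.e. $\B\in\b(\A(X))$. By Theorem \ref{tt1} (applied to $X$, noting $r(X)\le n$) and by Proposition \ref{p4}, $\A(X)$ has a special decomposition of order $n$: $\A(X)=\A^1+\ldots+\A^n$ with each $\A^i$ a component. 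The key step is then to show that any $\B\in\b(\A(X))$ inherits a special decomposition of order at most $n$. I would argue via $\B=\Ml(\al\bullet Z_0)$-type representations (Theorem 3.3 in \cite{berk2}) combined with part (3) and (4) of Theorem \ref{xt1}: sectioning $\A(X)=\A^1+\ldots+\A^n$ by the support $F$ of the relevant integrand gives $\1_F\circ\A(X)=\1_F\circ\A^1+\ldots+\1_F\circ\A^n$, and each $\1_F\circ\A^i$ is again a component (a section of a component is a component, by Proposition \ref{p1} since $\1_F\circ\A(X^i)=\A(\1_F\bullet X^i)$ and $\1_F\bullet X^i\in\cc_1$). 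One must check $\B=\1_F\circ\A(X)$ for a suitable $F$ using that $\q^{\B}$ is determined by a subfamily of $m(\q^X)$; this uses the maximality/section machinery of Proposition \ref{p2}. Hence $dim(\B)\le n$.

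For $(2)\Rightarrow(1)$: assume $dim(\B)\le n$ for all $\B\in\b$. The maximal acceptance set $\A_{\max}:=\{h\in\Linf:\eo[h]\le 0\}$ lies in $\b$ (it is $\A(X)$ for, e.g., a constant process, with $\q^{\A_{\max}}=\{\P\}$), so $dim(\A_{\max})\le n$, whence by Proposition \ref{p4} there is $X\in\cc_n$ with $\A(X)=\A_{\max}$. Then $\q^X=\q^{\A_{\max}}=\{\P\}$, i.e. $\{\P\}=\Ml(X)$. (I should double-check that $\q^X$ is exactly recovered from $\A(X)$ for $X\in\cc$, i.e. that $\q^{\A(X)}=\Ml(X)$; this should be in \cite{berk} or follow from the standard duality $\A(X)=\{h:\eq[h]\le 0\ \forall\bQ\in\Ml(X)\}$.)

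For the "moreover" clause: $r(X)=n$ means, by Corollary \ref{cc2}, that $\A(X)$ is of class $(\Delta_n)$, i.e. $\A(X)\in\b_n$; taking $\B^0=\A(X)$ gives one direction immediately. Conversely, suppose some $\B^0\in\b_n$ exists while $r(X)<n$; then $\A(X)$ has dimension $<n$, and by the argument of $(1)\Rightarrow(2)$ every $\B\in\b$ — in particular $\B^0$, which lies in $\b(\A(X))$ since $\{\P\}=\Ml(X)$ forces $\B^0\subseteq\A(X)$ — would have $dim(\B^0)\le r(X)<n$, contradicting $\B^0\in\b_n$. So $r(X)=n$. The main obstacle is the inheritance step in $(1)\Rightarrow(2)$: showing rigorously that dimension is monotone on $\b(\A(X))$ when $\A(X)$ itself has finite dimension $n$ — i.e. that a subset in $\b$ of a class-$(\Delta_n)$ set is itself of class at most $(\Delta_n)$. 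This requires carefully combining the section calculus of Theorem \ref{xt1} with the component characterization (Proposition \ref{p1}) and the representation of subsets via integrands against a generator; everything else is bookkeeping.
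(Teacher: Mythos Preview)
Your argument for $(2)\Rightarrow(1)$ and the ``moreover'' clause is essentially the paper's (modulo one slip: a constant process $X$ has $\Ml(X)=\p$, not $\{\P\}$, so your parenthetical justification that $\A_{\max}\in\b$ is wrong; the paper cites Corollary~2.4 in \cite{berk2} for this, and one can also see it directly since $\{\P\}=\Ml(\Y)$ for $\Y$ the family of all local $\P$-martingales). The real problem is $(1)\Rightarrow(2)$, where two steps fail. First, the inclusion $\B\subseteq\A(X)$ you assert for an arbitrary $\B\in\b$ is false: it is equivalent to $\P\in\q^{\B}$, but nothing in the definition of $\b$ forces this. Concretely, in the Brownian setting implied by hypothesis~(1), take any $\bQ\sim\P$, $\bQ\neq\P$, and set $\B=\A(B^{\bQ})$ for the $\bQ$-Brownian motion $B^{\bQ}$; then $\q^{\B}=\{\bQ\}$, so $\B\in\b$, yet $\P\notin\q^{\B}$ and $\B\not\subseteq\A(X)$. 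Second, even when $\B\subseteq\A(X)$ does hold, the reduction $\B=\1_F\circ\A(X)$ is impossible in general for $n\ge 2$: by the very definition of a component (Definition~\ref{d2}), the subsets of $\A(X)$ in $\b$ are all sections only when $\A(X)$ is of class $(\Delta_1)$. With $X=(W^1,W^2)$ and $\B=\A(W^1)$ one has $\B\in\b(\A(X))$, but $\1_F\circ\A(X)=\A(\1_F\bullet W^1,\1_F\bullet W^2)$ never equals $\A(W^1)$ for any predictable $F$.

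The paper's route avoids both obstructions: it invokes Theorem~3.12 in \cite{berk2}, which under hypothesis~(1) produces for \emph{every} $\B\in\b$ directly a generator $Y\in\cc_n$ with $\B=\A(Y)$; then $dim(\B)=r(Y)\le n$ by Theorem~\ref{tt1}, with no inclusion in $\A(X)$ and no section calculus required. Your closing phrase ``representation of subsets via integrands against a generator'' is exactly this idea, but you never actually deploy it; it is buried under the section approach, which cannot work for $n\ge 2$.
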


\begin{proof}
$(1)\Rightarrow(2)$ Let $\B\in\b$, so thanks to Theorem 3.12 in \cite{berk2}, there exists some $Y\in\cc_n$ such that $\B=\A(Y)$ and then $dim(\B)=r(Y)\leq n$.

$(2)\Rightarrow(1)$ Thanks to Corollary 2.4 in \cite{berk2}, we deduce that $\A:=\A^{\{\P\}}\in\b$. Then there exists some $X\in\cc_n$ such that $\A=\A(X)$ and therefore $\{\P\}=\Ml(X)$.

Now suppose $r(X)=n$, then $\B^0:=\A^{\{\P\}}\in\b_n$. Conversely we suppose by absurd that $r(X)=r<n$, so $\{\P\}=\Ml(Y)$ for some $Y\in\cc_r$ and by applying the implication $(1)\Rightarrow(2)$, we deduce that $dim(\B)\leq r$ for any $\B\in\b$. This is a contradiction.

\end{proof}


\section{The predictable ranking map and maximality.}
\subsection{The predictable ranking map.}
In this subsection, we investigate much deeper the notion of rank. Suppose in the Brownian setting with $W=(W^1,W^2)$, we define the process $X=(W^1,\1_F\bullet W^2)$ for some $F\in\cp$, then $r(X)=2$. But $\1_{F^c}\bullet X=(\1_{F^c}\bullet W^1,0)$ and therefore $r(\1_{F^c} \bullet X)=1$. Next we characterize the predictable partition of the time space $\Omega\times\T$, in which the rank of a vector process $X\in\cc_n$ does not change.

\begin{prop}\label{p11}Let $X\in\cc_n$, then
\begin{enumerate}
\item there exists a partition $(\phi_k(X))_{k=0\ldots n}\subseteq\cp$ such that $r(\1_{G}\bullet X)=k$ for all $G\subseteq \phi_k(X)$ and $k=0\ldots n$.
\item there exists some $U\in\cc_n$ such that $\A(X)=\A(U)$ and for all $k=1\ldots n$, we have $\1_{\phi_k(X)}\circ\A(X)=\1_{\phi_k(X)}\circ\A(U^1,\ldots,U^k)$ and $\phi_k(X)=\phi_k(U^1,\ldots,U^k)$.

\end{enumerate}
\end{prop}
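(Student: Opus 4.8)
The plan is to construct the partition $(\phi_k(X))_{k=0}^n$ and the process $U$ simultaneously by induction on $n$, mimicking the behaviour observed in the Brownian example. Part (1) should follow from the existence and measurability of a "local rank" function: for each $G\in\cp$ the rank $r(\1_G\bullet X)$ only depends on $G$ in a set-additive, predictable way, so the sets on which it takes a prescribed value $k$ should assemble into a predictable partition. Concretely, I would use the vector-space-valued measurable mapping technique already exploited in the proof of Theorem \ref{tt1}: the set $K=\overline{\{\beta\al:\beta\in\L^0(\cp;\R^n)\}}^0$ (writing $X=\al\bullet Y$ for a minimal $Y\in\cc_r$, $r=r(X)$) corresponds to a closed-vector-space-valued map $W$, and one simply lets $\phi_k(X)$ be the predictable set $\{\dim W=k\}$. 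Since $\beta\mapsto\1_G\beta$ intertwines with restriction to $G$, one gets $r(\1_G\bullet X)=\dim(W\1_G)=k$ a.s.\ on $G$ whenever $G\subseteq\phi_k(X)$; this yields (1), with $\phi_0(X)=\{\dim W=0\}$ the set where $\1_G\bullet X$ degenerates.

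For part (2) I would build $U$ and verify the two identities layer by layer. On $\phi_k(X)$ one has, by (1) and Theorem \ref{tt1}, $dim(\1_{\phi_k(X)}\circ\A(X))=k$, hence by Proposition \ref{p4} (together with Theorem \ref{xt1}(4), which lets one pass sections through $\M_{sp}$, and Jacka's theorem as in Proposition \ref{xp1}) there is a $k$-dimensional generator, say $V^{(k)}=(V^{(k),1},\dots,V^{(k),k})\in\cc_k$, of $\1_{\phi_k(X)}\circ\A(X)$, which we may take supported in $\phi_k(X)$. One then defines $U^j:=\sum_{k\ge j}\1_{\phi_k(X)}\bullet V^{(k),j}$ for $j=1,\dots,n$. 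The point is that $U^1,\dots,U^k$ restricted to $\phi_k(X)$ coincide with $V^{(k),1},\dots,V^{(k),k}$, so $\1_{\phi_k(X)}\circ\A(U^1,\dots,U^k)=\1_{\phi_k(X)}\circ\A(V^{(k)})=\1_{\phi_k(X)}\circ\A(X)$, which is the first claimed identity; and since $U^{k+1},\dots,U^n$ vanish on $\phi_k(X)$, restricting to $\phi_k(X)$ does not see the higher components, so $\phi_k(U^1,\dots,U^k)=\phi_k(X)$ as well. Summing the sections over $k$ and using Theorem \ref{xt1}(2) gives $\A(U)=\sum_k\1_{\phi_k(X)}\circ\A(X)=\A(X)$.

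The main obstacle I expect is the compatibility of the local generators across the strata — i.e.\ checking that gluing the $V^{(k)}$ along the predictable partition really produces a single process $U\in\cc_n$ (in particular one still admitting an equivalent local martingale measure) whose truncations $(U^1,\dots,U^k)$ have the asserted sections, rather than merely producing a process with the right sections one stratum at a time. This requires care that the partial sums $(U^1,\dots,U^k)$ do not acquire spurious rank off $\phi_k(X)$ and that the stacking of components is "triangular" with respect to the partition; the identity $\phi_k(X)=\phi_k(U^1,\dots,U^k)$ is exactly the bookkeeping that pins this down, and I would prove it together with the section identity in the same induction rather than separately. The existence of an equivalent martingale measure for the glued process follows by pasting the measures stratum-by-stratum (an m-stability/concatenation argument), using that each $V^{(k)}\in\cc_k$ already has one supported on $\phi_k(X)$ and $\P$ itself works off the nontrivial strata.
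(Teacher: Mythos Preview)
Your construction for part (2) is essentially the paper's own argument: on each stratum $\phi_k(X)$ pick a $k$-dimensional generator $V^{(k)}$ and set $U^j=\sum_{k\ge j}\1_{\phi_k(X)}\bullet V^{(k),j}$. That part is fine.

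The gap is in part (1). You write $X=\al\bullet Y$ with $Y\in\cc_r$, $r=r(X)$, form $K=\overline{\{\beta\al\}}^0\subseteq\L^0(\cp;\R^r)$, and set $\phi_k(X)=\{\dim W=k\}$. But $\dim W$ is simply the pointwise rank of the matrix $\al$, and this need not equal $r(\1_G\bullet X)$: the reduction $Y$ may itself degenerate on $G$ even where $\al$ has full rank. Take $X=(W^1,\1_F\bullet W^2)$ in the two-dimensional Brownian setting with $0<\tilde\P(F)<1$; then $r(X)=2$, and the obvious choice $Y=X$, $\al=I_2$ gives $\dim W\equiv 2$, so your recipe yields $\phi_2(X)=\Omega\times[0,T]$. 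Yet $r(\1_{F^c}\bullet X)=r(\1_{F^c}\bullet W^1,0)=1$, so the true $\phi_1(X)=F^c$ is nonempty. Your asserted identity $r(\1_G\bullet X)=\dim(W\1_G)$ is therefore false in general; it would hold if $Y$ satisfied the property $(\gamma)$, but the existence of such a $Y$ (Proposition~\ref{pt1}) and the identity $\phi_k=\{rank(\al)=k\}$ (Proposition~\ref{p6}) are proved \emph{after} Proposition~\ref{p11} and rely on it, so invoking them here is circular.

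The paper avoids this by a purely order-theoretic argument: for each $k<n$ it shows the collection $\Psi_k=\{F\in\cp:r(\1_F\bullet X)\le k\}$ is closed under disjoint unions, increasing limits, and passage to subsets, hence admits a maximum element $G^k$ (Lemma~4.15 in \cite{berk}); then $\phi_k(X)=G^k\setminus G^{k-1}$. No vector-space structure on the integrands is needed, and in particular no choice of a ``good'' reduction $Y$.
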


\begin{proof} (1) We define the set $\Psi_k:=\{F\in\cp:\;r(\1_F\bullet X)\leq k\}$ for $k<n$. We will show that the set $\Psi_k$ admits a maximum element. To do so we remark first that $F\in\Psi_k$ if and only if $\1_F\bullet X=\1_F\,\al^F\bullet U^F$ for some $\al^F\in\m_{n,k}(\cp)$ and $U^F\in\cc_k$, and second we will show the three assertions of Lemma 4.15 in \cite{berk}. For (i) let $F,G\in\Psi_k$ with $F\cap G=\emptyset$, then $\1_{(F\cup G)}\bullet X=\1_F\bullet X+\1_G\bullet X=\1_F\al^F\bullet U^F+\1_G\al^G\bullet U^G=\1_{F\cup G}\al\bullet U$ with $\al=\1_F\al^F+\1_G\al^G\in\m_{n,k}(\cp)$ and $U=\1_F\bullet U^F+\1_{G}\bullet U^G\in\cc_k$, so $F\cup G\in\Psi_k$. For (ii) let an increasing sequence $F^n\in\Psi_k$ with $F=\cup_{n\geq 1}F^n$. We define the sequence $G^1=F^1$ and for all $n\geq 1$, $G^{n+1}=F^{n+1}\backslash F^n$, then $\1_{F}\bullet X=\1_F\al\bullet U$ with $\al=\sum_{n\geq 1}\1_{G^n}\al^{F^n}\in\m_{n,k}(\cp)$ and $U=\sum_{n\geq 1}\1_{G^n}\bullet U^{F^n}\in\cc_k$. We conclude that $F\in\Psi_k$. For (iii) let $F\in\Psi_k$ and $G\subseteq F$. Then $\1_{G}\bullet X=\1_{G}\1_{F}\bullet X=\1_{G}\1_{F}\al^F\bullet U^F=\1_G\al^F\bullet U^F$, so $G\in\Psi_k$.

Now for $k=0\ldots n-1$ we denote $G^k$ to be the maximum element of $\Psi_k$. We define the family $(\phi_k(X))_{k=0\ldots n}$ by $\phi_n(X)=(G^{n-1})^{c}$, $\phi_k(X)=G^k\backslash G^{k-1}$ for $k=1\ldots n-1$ and $\phi_0(X)=G^0$. So for $G\subseteq \phi_n(X)$ we have $r(\1_{G}\bullet X)=n$ since $G\subseteq (G^{n-1})^c$ and for $G\subseteq \phi_k(X)$ and $k=0\ldots n-1$ we have $r(\1_{G}\bullet X)\leq k$ since $G\subseteq G^k$ and $r(\1_{G}\bullet X)> k-1$ since $G\subseteq (G^{k-1})^{c}$, therefore $r(\1_G\bullet X)=k$.

(2) For all $k=1\ldots n$, there exists some $(V^{k1},\ldots,V^{kk})\in\cc_k$ such that $\1_{\phi_k(X)}\circ\A(X)=\1_{\phi_k(X)}\circ\A(V^{k1},\ldots,V^{kk})$. We define the vector process $U\in\cc_n$ by $U^j=\sum_{k=j}^n\,\1_{\phi_k(X)}\bullet V^{kj}$ for $j=1\ldots n$, and conclude the result.

\end{proof}

\begin{defi}\label{d5}Following the notations of Proposition \ref{p11}, the family $\phi(X)=(\phi_k(X))_{k=0\ldots n}$ is called the predictable ranking map of $X$, and the process $U$ is called the $\phi(X)$-arrangement of the process $X$.
\end{defi}

\begin{rem}\label{r3}It has been proved in \cite{berk} that for any $\A\in\a$, there exists a maximum element $F=F(\A)\in\cp$ such that $\1_F\circ\A=\Linf_-$. In particular for $\A=\A(X)$ for some $X\in\cc_n$ and thanks to Proposition \ref{p11}, we have that $F(\A)=\phi_0(X)$.

\end{rem}

An illustrative example is given below.
\begin{ex}\label{ex1}Let consider the Brownian setting $W=(W^1,W^2)$ and define $X=(\1_F\bullet W^1,W^2)$ for some predictable set $F$. Then $\phi_0(X)=\emptyset$, $\phi_1(X)=F^{c}$, $\phi_2(X)=F$, and the $\phi(X)$-arrangement process $U$ is given by $U^1=\1_{F}\bullet W^1+\1_{F^{c}}\bullet W^2$ and $U^2=\1_{F}\bullet W^2$.

\end{ex}

For an integer $n\geq 1$, we will say that a process $X\in\cc_n$ satisfies the property $(\gamma)$ if for all $\lam\in\m_{1,n}(\cp)$ such that $\lam\bullet X=0$, then $\lam\equiv 0$. This property is similar to the linear independence property of vectors, well known in linear algebra, by taking the coefficients being the integrands and the operation being integration. Next we state two equivalent statements of this property.

\begin{prop}\label{xp5}Let $\A=\A(X)$ for some $X\in\cc_n$. Then the following assertions are equivalent:
\begin{enumerate}
\item[(1)] $X$ satisfies the property $(\gamma)$.
\item[(2)] $\1_F\circ\A\in\b_n$ for all $F\in\cp$.
\item[(3)] $\tp(\phi_n(X))=1$.
\end{enumerate}

\end{prop}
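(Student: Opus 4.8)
The plan is to prove the equivalences $(1)\Leftrightarrow(2)\Leftrightarrow(3)$ by exploiting the predictable ranking map $\phi(X)$ from Proposition~\ref{p11} and the section machinery from Theorem~\ref{xt1}. The central observation is that property $(\gamma)$ should hold precisely when $X$ has "full rank everywhere", i.e.\ when $\phi_k(X)$ is null for $k<n$, which is the content of $(3)$ since $\tp(\phi_n(X))=1$ forces $\tp(\phi_k(X))=0$ for $k=0,\ldots,n-1$ (the $\phi_k(X)$ being a partition).

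For $(3)\Rightarrow(2)$: if $\tp(\phi_n(X))=1$, then for any $F\in\cp$ we have $F\subseteq\phi_n(X)$ up to a null set, so by Proposition~\ref{p11}(1), $r(\1_F\bullet X)=n$; since $\1_F\circ\A=\1_F\circ\A(X)=\A(\1_F\bullet X)$ by Theorem~\ref{xt1}(4)--(5) and Jacka's representation, Corollary~\ref{cc2} gives $\1_F\circ\A\in\b_n$. For $(2)\Rightarrow(3)$: contrapositively, if $\tp(\phi_n(X))<1$ then some $\phi_k(X)$ with $k<n$ has positive measure; taking $F=\phi_k(X)$ yields $r(\1_F\bullet X)=k<n$, hence by Theorem~\ref{tt1}, $\dim(\1_F\circ\A)=k<n$, so $\1_F\circ\A\notin\b_n$, contradicting $(2)$. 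For $(1)\Rightarrow(3)$: if $\tp(\phi_n(X))<1$, then on $F=G^{n-1}$ (the maximal set where the rank drops) we can write $\1_F\bullet X=\1_F\al\bullet U$ with $U\in\cc_{n-1}$ and $\al\in\m_{n,n-1}(\cp)$; the $n$ rows of $\al$ are then $\R^{n-1}$-valued, hence linearly dependent over the predictable ring, producing a nonzero $\lam\in\m_{1,n}(\cp)$ supported on $F$ with $\lam\al=0$, so $\lam\bullet X=\lam\al\bullet U\ (=0$ on $F)$ — after also killing $\lam$ off $F$ — violating $(\gamma)$. For $(3)\Rightarrow(1)$: suppose $\lam\bullet X=0$ with $\lam\not\equiv 0$; on the set $H=\{|\lam|>0\}$, which has positive measure, $\1_H\bullet X$ satisfies a nontrivial integral relation, so the plug-in space has rank $<n$ there, forcing $r(\1_H\bullet X)\leq n-1$ and hence $H\subseteq G^{n-1}$ up to null sets, contradicting $\tp(\phi_n(X))=1$.

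The main obstacle I anticipate is making the passage between "$X=\al\bullet Y$ with $Y\in\cc_r$" and "the integrands of $X$ span an $r$-dimensional predictable module" fully rigorous, i.e.\ translating the rank statements of Proposition~\ref{p11} into the linear-independence statement $(\gamma)$ and back. This requires the same closed-vector-space-valued-measurable-mapping argument used in the proof of Theorem~\ref{tt1} (constructing $W$ and a generating family $f=(f^1,\ldots,f^r)$ of the module generated by the integrands of $X$), together with Jacka's Theorem to guarantee that every element of $m(\q^X)$ is a stochastic integral against $X$. Care is needed because the relation $\lam\bullet X=0$ is a pathwise/stochastic statement that must be converted into a pointwise linear relation among the $f^i$ on the relevant predictable set; this is where one invokes that $\1_F\bullet X$ has the same rank for \emph{all} $F\subseteq G$ on each block $G=\phi_k(X)$, which is exactly what Proposition~\ref{p11} buys us. Once that dictionary is set up, each of the four implications is a short argument of the type sketched above.
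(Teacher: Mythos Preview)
Your proposal is correct and follows essentially the same approach as the paper, which organizes the argument as the cycle $(1)\Rightarrow(3)\Rightarrow(2)\Rightarrow(1)$ using the identical ideas (a nonzero coordinate $\lam^i$ lets you express $X^i$ through the remaining $X^j$ on $\{|\lam^i|>0\}$, and conversely a rank drop to $i<n$ on $\phi_i(X)$ produces a predictable $\lam$ in the left kernel of the $n\times i$ matrix $M$ with $\1_{\phi_i(X)}\bullet X=M\bullet U$). Your anticipated obstacle does not materialize: the paper's proof is entirely elementary and invokes neither the measurable-vector-space machinery from the proof of Theorem~\ref{tt1} nor Jacka's theorem.
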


\begin{proof}
$(1)\Rightarrow(3)$ Suppose $\tp(\phi_n(X))<1$, so $\tp(\phi_i(X))>0$ for some $i<n$. Therefore for $F^i=\phi_i(X)$ we get $\1_{F^i}\circ \A=\A(U)$ for some $U\in\cc_i$ and therefore $\1_{F^i}\bullet X=M\bullet U$ for some $M\in\m_{n,i}(\cp)$. We take a vector process $\lam\in\m_{1,n}(\cp)$, non null on $F^i$ such that $\lam M =0$, so $\1_{F^i}\lam\bullet X=0$. This is a contradiction.

$(3)\Rightarrow(2)$ Suppose $\1_F\circ\A\in\b_s$ for some $F\in\cp$ and $s<n$. Then $\tp(\phi_s(X))>0$ which is a contradiction.

$(2)\Rightarrow(1)$ Let $\lam\bullet X=0$ for some $\lam\in\m_{1,n}(\cp)$, and suppose that there exists some $i=1\ldots n$ such that $\tp(|\lam^i|>0)>0$. Let $F=\{|\lam^i|>0\}$, so we remark that $\1_F\bullet X^i=\1_F\beta^i\bullet X^{(i)}$ for some $\beta^i\in\m_{1,n-1}(\cp)$ and $X^{(i)}=(X^j:\;j\neq i)$, therefore $\1_F\bullet X=\1_F\beta\bullet X^{(i)}$ for some $\beta\in\m_{n,n-1}(\cp)$. This means that $r(\1_F\bullet X)\leq n-1$ or equivalently that $dim(\1_F\circ\A)<n$, which is a contradiction.

\end{proof}

\begin{prop}\label{pxx}Let $X\in\cc_n$ satisfies the property $(\gamma)$ and $Y=\theta\bullet X\in\cc_r$ for some $\theta\in\m_{r,n}(\cp)$. Suppose that $\tp(rank(\theta)= s)>0$ for some $s\leq r$, then $r(\1_F\bullet Y)= s$ for all $F\in\cp$ with $F\subseteq G:=\{rank(\theta)= s\}$.
\end{prop}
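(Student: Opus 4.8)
The plan is to reduce the statement to a rank computation on the section $\mathbf{1}_G \bullet Y$ and then exploit the property $(\gamma)$ of $X$ together with the characterization of rank via Theorem \ref{tt1}. First I would fix $F \in \cp$ with $F \subseteq G = \{rank(\theta) = s\}$ and observe that $\mathbf{1}_F \bullet Y = \mathbf{1}_F \theta \bullet X$, so we are looking at the integrand matrix $\mathbf{1}_F \theta$, which has algebraic rank exactly $s$ on $F$. By choosing $s$ rows of $\theta$ that are a.s.\ a basis of $span(\theta)$ on $F$ (using a measurable selection argument, picking on each predictable piece where a fixed $s$-subset of rows is independent), I can write $\mathbf{1}_F \bullet Y = \mathbf{1}_F \eta \bullet Z$ where $Z \in \cc_s$ consists of those $s$ coordinates of $Y$ and $\eta \in \m_{r,s}(\cp)$. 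This already gives $r(\mathbf{1}_F \bullet Y) \le s$ by Definition \ref{d3}.

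Next I would establish the reverse inequality $r(\mathbf{1}_F \bullet Y) \ge s$. Suppose for contradiction that $r(\mathbf{1}_F \bullet Y) \le s-1$ on a subset $F' \subseteq F$ of positive probability; then $\mathbf{1}_{F'} \bullet Y = \mathbf{1}_{F'} \mu \bullet V$ for some $V \in \cc_{s-1}$ and $\mu \in \m_{r,s-1}(\cp)$. Since $Y = \theta \bullet X$ and $V \in m(\q^Y) \subseteq m(\q^X)$, Jacka's theorem (as used in Proposition \ref{xp1}) gives $V = \nu \bullet X$ for some $\nu \in \m_{s-1,n}(\cp)$, hence $\mathbf{1}_{F'} \theta \bullet X = \mathbf{1}_{F'} \mu \nu \bullet X$, i.e.\ $\mathbf{1}_{F'}(\theta - \mu\nu) \bullet X = 0$. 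Because $X$ satisfies property $(\gamma)$, applying it row by row forces $\mathbf{1}_{F'} \theta = \mathbf{1}_{F'} \mu \nu$ a.s., so $rank(\mathbf{1}_{F'}\theta) \le rank(\mathbf{1}_{F'}\mu\nu) \le s-1$ on $F'$, contradicting $F' \subseteq G = \{rank(\theta) = s\}$. Therefore $r(\mathbf{1}_F \bullet Y) = s$.

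I expect the main obstacle to be the careful handling of the two measurable-selection steps: first, extracting a genuinely $\R^s$-valued process $Z \in \cc_s$ whose integrands span the same predictable module as the rows of $\theta$ on $F$ (rather than merely a subprocess of $Y$ that happens to have the right rank pointwise) — this is the same kind of construction as the closed-vector-space-valued map $W$ and generating family $f$ used in the proof of Theorem \ref{tt1}, and I would invoke that machinery; and second, verifying that property $(\gamma)$ of $X$ passes correctly through the identity $\mathbf{1}_{F'}(\theta - \mu\nu) \bullet X = 0$ applied to each of the $r$ rows, which is immediate from the definition of $(\gamma)$ once one notes $\lam \bullet X = 0 \Rightarrow \lam \equiv 0$ for $\lam \in \m_{1,n}(\cp)$. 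The use of Jacka's theorem to represent $V$ as a stochastic integral against $X$ is the linchpin connecting the two halves, exactly as in Proposition \ref{xp1}.
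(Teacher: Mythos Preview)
Your overall strategy---upper bound by factoring through an $s$-dimensional process, lower bound by contradiction via Jacka---is sound and differs from the paper's route, but there is a genuine gap in the lower-bound step. From $\1_{F}\bullet Y=\mu\bullet V$ with $V\in\cc_{s-1}$ you assert $V\in m(\q^Y)$, and this does not follow: the definition of $\cc_{r,s-1}$ only guarantees that $V$ admits \emph{some} equivalent local martingale measure, not that every $\bq\in\q^Y$ (or $\q^X$) makes $V$ a local martingale---a component of $V$ could be killed by $\mu$ and behave arbitrarily. The fix is to use Theorem~\ref{tt1}/Corollary~\ref{cc2} to choose $V\in\cc_{s-1}$ with $\A(V)=\A(\1_F\bullet Y)$; then $\q^V=\q^{\1_F\bullet Y}\supseteq\q^X$, so $V\in m(\q^X)$ and Jacka gives $V=\nu\bullet X$ as you want. (Also, the phrase ``$r(\1_F\bullet Y)\le s-1$ on a subset $F'\subseteq F$'' is ill-formed: the rank is a single integer, so simply assume $r(\1_F\bullet Y)\le s-1$ for the fixed $F$ and drop $F'$.)

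For comparison, the paper proceeds differently and avoids Jacka altogether. It first treats the full-rank case $s=r$ by showing directly that $\1_F\bullet Y$ inherits property $(\gamma)$: if $\lam\bullet Y=0$ then $\lam\theta\bullet X=0$, hence $\lam\theta=0$ by $(\gamma)$ of $X$, hence $\lam=0$ on $G$ since $\theta$ has full row rank there; Proposition~\ref{xp5} then yields $r(\1_F\bullet Y)=r$. The general case is reduced to this one by factoring $\theta=\beta\delta$ on $G$ with $\delta\in\m_{s,n}(\cp)$ of rank $s$, setting $U=\delta\bullet X\in\cc_s$, and applying the special case to $U$ (whence $\A(\1_F\bullet Y)=\A(\1_F\bullet U)$ since $\beta$ also has rank $s$). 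The paper's argument is thus purely linear-algebraic over $\cp$ and leans on the $(\gamma)\Leftrightarrow$maximal-rank equivalence already established; your approach is more direct but needs the patch above to go through.
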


\begin{proof}We show the result first for the case $s=r$. Let $\lam\in\m_{1,r}(\cp)$ such that $\lam\bullet Y=0$, so $\lam\theta\bullet X=0$ which means that $\lam\theta=0$ since $X$ satisfies the property $(\gamma)$. We deduce that $\lam=0$ on $G$, which means that $\1_F \bullet Y$ satisfies the property $(\gamma)$ for all $F\subseteq G$ and then $r(\1_F \bullet Y)=r$.

For the general case, there exists two predictable matrices $\beta\in\m_{r,s}(\cp)$ and $\delta\in\m_{s,n}(\cp)$ such that $\theta=\beta\delta$ on $G$ and $\tp(rank(\delta)=s;G)>0$. Then $Y=\beta\bullet U$ with $U=\delta\bullet X\in\cc_s$ and $r(\1_F\bullet U)=s$ for all $F\subseteq G$, which means that $r(\1_F\bullet Y)=s$.

\end{proof}

\begin{prop}\label{p6}Let $X\in\cc_n$ satisfies the property $(\gamma)$ and $Y=\theta\bullet X\in\cc_r$ for some $\theta\in\m_{r,n}(\cp)$. Then $\phi_k(Y)=\{rank(\theta)=k\}$ for $k=0\ldots r$.

\end{prop}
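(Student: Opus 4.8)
The plan is to identify the two predictable partitions $\big(\{rank(\theta)=k\}\big)_{k=0\ldots r}$ and $\big(\phi_k(Y)\big)_{k=0\ldots r}$ of $\Omega\times\T$ piece by piece, the genuine analytic content being already packaged in Proposition \ref{pxx}. First I would record two elementary facts. Since $\theta\in\m_{r,n}(\cp)$, the random algebraic rank $rank(\theta)$ is a $\cp$-measurable function with values in $\{0,\ldots,r\}$, so that the sets $E_k:=\{rank(\theta)=k\}\in\cp$, $k=0,\ldots,r$, are pairwise disjoint and cover $\Omega\times\T$. And, by Proposition \ref{p11}(1) applied to $Y\in\cc_r$, the family $\phi(Y)=\big(\phi_k(Y)\big)_{k=0\ldots r}$ is likewise a predictable partition of $\Omega\times\T$, with the defining property that $r(\1_G\bullet Y)=k$ for every $G\in\cp$ with $G\subseteq\phi_k(Y)$.

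Next I would show $E_k\subseteq\phi_k(Y)$ modulo $\tp$-null sets, for every $k$. Write $E_k=\bigcup_{j=0}^{r}\big(E_k\cap\phi_j(Y)\big)$ and suppose, for contradiction, that $F:=E_k\cap\phi_j(Y)$ has positive measure for some $j\neq k$. On one hand $F\subseteq\phi_j(Y)$ gives $r(\1_F\bullet Y)=j$ by the defining property of the predictable ranking map. On the other hand $\tp(F)>0$ forces $\tp(\{rank(\theta)=k\})>0$, so Proposition \ref{pxx} applies with $s=k$; since $F\subseteq E_k=\{rank(\theta)=k\}$, it yields $r(\1_F\bullet Y)=k$. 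Hence $j=k$, a contradiction, so $E_k\cap\phi_j(Y)$ is $\tp$-null for all $j\neq k$ and therefore $E_k\subseteq\phi_k(Y)$ up to a null set.

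Finally, since $E_k\subseteq\phi_k(Y)$ for each $k=0,\ldots,r$ and both $(E_k)_k$ and $(\phi_k(Y))_k$ partition $\Omega\times\T$, equality must hold piecewise: $\phi_k(Y)\setminus E_k\subseteq(\Omega\times\T)\setminus E_k=\bigcup_{j\neq k}E_j\subseteq\bigcup_{j\neq k}\phi_j(Y)$, which is disjoint from $\phi_k(Y)$, so $\phi_k(Y)\setminus E_k$ is empty and $\phi_k(Y)=E_k=\{rank(\theta)=k\}$, as claimed. I do not expect a genuine obstacle here: the substance is Proposition \ref{pxx} (itself resting on the property $(\gamma)$ of $X$), and what remains is bookkeeping; the only point needing a little care is to keep all set equalities read modulo $\tp$-null sets, exactly as in the construction of the ranking map in Proposition \ref{p11}.
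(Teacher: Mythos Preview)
Your proof is correct and uses essentially the same ingredient as the paper, namely Proposition \ref{pxx}, together with the defining property of the ranking map from Proposition \ref{p11}. The only difference is organizational: the paper first matches the cumulative sets $F_s=\bigcup_{k\ge s}\phi_k(Y)$ with $G_s=\{rank(\theta)\ge s\}$ and then takes successive differences, whereas you establish the piecewise inclusions $E_k\subseteq\phi_k(Y)$ directly and close with the observation that two partitions related by componentwise inclusion must coincide. Both routes are equivalent bookkeeping on top of Proposition \ref{pxx}.
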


\begin{proof}We shall show first that $F_s:=\cup_{k=s}^r\,\phi_k(Y)=\{rank(\theta)\geq s\}=:G_s$ for all $s=0\ldots r$. We suppose by absurd that $\tp(F_s\cap (G_s)^{c})>0$ for some $s$, then by Proposition \ref{pxx} we get $r(\1_F\bullet Y)\leq s-1$ for some $F\subseteq F_s\cap (G_s)^{c}$, which contradicts the fact that $r(\1_G\bullet Y)\geq s$ for all $G\subseteq F_s$. We suppose again by absurd that $\tp(G_s\cap (F_s)^{c})>0$ for some $s$, then by Proposition \ref{pxx} we get $r(\1_F\bullet Y)\geq s$ for some $F\subseteq G_s\cap (F_s)^{c}$, which contradicts the fact that $r(\1_G\bullet Y)\leq s-1$ for all $G\subseteq (F_s)^{c}$. We deduce that $F_s=G_s$.

We conclude that $\phi_r(Y):=F_r=G_r:=\{rank(\theta)= r\}$ and that for all $s=r-1\ldots 0$ we have $\phi_s(Y)=F_{s}\cap(F_{s+1})^{c}=G_{s}\cap(G_{s+1})^{c}=\{rank(\theta)=s\}$.
\end{proof}

Two illustrative examples are given below.

\begin{ex}\label{ex2}Let consider the Brownian setting $W=(W^1,W^2)$ and define $X=(\delta\bullet W^1+W^2,W^1+W^2)$ with $\delta\in\Linf(\cp)$. Then $X=
\left(\begin{array}{cc}
\delta & 1\\
1 & 1
\end{array}
\right)
\bullet\left(\begin{array}{c}
W^1\\
W^2
\end{array}
\right)$ and therefore $\phi_0(X)=\emptyset$, $\phi_1(X)=\{\delta=1\}$, $\phi_2(X)=\{\delta\neq 1\}$, and the $\phi(X)$-arrangement process $U$ is given by $U^1=W^1+\1_{(\delta=1)}\bullet W^2$ and $U^2=\1_{(\delta\neq 1)}\bullet W^2$.

\end{ex}

\begin{ex}\label{ex3}Let consider the linear Brownian setting $W$ and define $X$ to be the solution of the stochastic differential equation $dX=b(X)\,dt+\sigma(X)\,dW$ where $b$ and $\sigma$ are two real functions satisfying the usual assumptions for the existence and uniqueness of the solution $X$. Then $X\in\cc_1$ iff $\1_{(\sigma(X)=0)}b(X)=0$. Under this assumption we have $\phi_0(X)=\{\sigma(X)=0\}$ and $\phi_1(X)=\{\sigma(X)\neq 0\}$.

\end{ex}

We say that a process $X\in\cc_n$ satisfies the property $(\tgamma)$ if there exists some $R\in\cc_m$ for $m\geq n$, satisfying the property $(\gamma)$ with $\A(X)\subseteq\A(R)$.

\begin{prop}\label{pt1}Let suppose $X\in\cc_n$ satisfies the property $(\tgamma)$. Then there exists some $Y\in\cc_n$ satisfying the property $(\gamma)$ such that $\A(X)\subseteq\A(Y)$.
\end{prop}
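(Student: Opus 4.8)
The plan is to start from the hypothesis that $X\in\cc_n$ satisfies $(\tgamma)$, so there is some $R\in\cc_m$ with $m\geq n$ satisfying $(\gamma)$ and $\A(X)\subseteq\A(R)$. Since $\A(X)\subseteq\A(R)$ means $X\in m(\q^X)\subseteq m(\q^R)$, Jacka's theorem (as used in the proof of Proposition \ref{xp1}) gives a predictable matrix $\theta\in\m_{n,m}(\cp)$ with $X=\theta\bullet R$. The idea is then to apply Proposition \ref{p6} to the pair $(R,X)$: because $R$ has property $(\gamma)$, we get $\phi_k(X)=\{rank(\theta)=k\}$ for $k=0\ldots n$. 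In particular $r(X)=n$ would force $rank(\theta)=n$ a.s., but in general $rank(\theta)$ takes values $\le n$, and the sets $\phi_k(X)$ for $k<n$ may be non-null.

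Next I would build $Y$ by ``completing'' $\theta$ on each stratum $\phi_k(X)$. On $\phi_n(X)$ we can simply keep $X$ itself, since there $X$ already has full rank $n$ and (by the $s=r$ case of Proposition \ref{pxx} applied with the roles suitably arranged) behaves like a $(\gamma)$-process locally. On a stratum $\phi_k(X)$ with $k<n$, we have $\1_{\phi_k(X)}\bullet X=\1_{\phi_k(X)}\beta\bullet U$ for some $U\in\cc_k$ and $\beta\in\m_{n,k}(\cp)$ of full rank $k$; here $U$ can be taken to be $k$ suitable independent combinations of the rows of $R$ (restricted to $\phi_k(X)$), which still satisfy $(\gamma)$ on that set. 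Then I extend $\beta$ to an $n\times m$ matrix whose rows span, on $\phi_k(X)$, an $n$-dimensional space inside $span$ of the rows of $R$ — possible because $m\geq n$ and $R$ has property $(\gamma)$, so its rows are ``independent'' in the relevant sense on all of $\Omega\times\T$. Writing $\tilde\theta$ for the resulting $n\times m$ predictable matrix that equals such an extension on each $\phi_k(X)$ and a bounded normalization of it as $g=\tilde\theta/(1+|\tilde\theta|)$, set $Y=g\bullet R$. By construction $rank(\tilde\theta)=n$ a.s., so Proposition \ref{p6} gives $\phi_n(Y)=\Omega\times\T$ up to a null set, i.e. $\tp(\phi_n(Y))=1$, which by Proposition \ref{xp5} is exactly property $(\gamma)$ for $Y$ (note $Y=g\bullet R$ with $R\in\cc_m$ forces $Y\in\cc_n$). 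Finally, since on each $\phi_k(X)$ the rows of $\theta$ lie in the span of the rows of $\tilde\theta$, one has $X=\eta\bullet Y$ for some $\eta\in\m_{n,n}(\cp)$, whence $\A(X)\subseteq\A(Y)$ by the stochastic-integral argument in the proof of Theorem \ref{tt1}.

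The main obstacle I expect is the measurable selection / gluing step: one must choose, predictably and consistently across the strata $\phi_k(X)$, a full-rank $n\times m$ predictable completion $\tilde\theta$ of $\theta$ whose rows stay inside the row span of $R$. On each stratum this is an elementary linear-algebra extension, but doing it measurably requires the same closed-vector-space-valued measurable mapping machinery invoked in the proof of Theorem \ref{tt1} (the representation $K=\{\gamma:\gamma\in W\text{ a.s.}\}$ with a predictable generating family), applied now to the row spans of $R$ and of $\theta$ on each $\phi_k(X)$. Once that selection is in hand, everything else is a direct application of Propositions \ref{p6}, \ref{xp5}, \ref{pxx} and the integral-substitution identities already used above; the boundedness normalization $g=\tilde\theta/(1+|\tilde\theta|)$ is harmless since it does not change row spans and hence preserves both the rank stratification and the inclusion $\A(X)\subseteq\A(Y)$.
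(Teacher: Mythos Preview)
Your argument is correct and rests on the same idea as the paper's proof: write $X=\al\bullet R$ for some $\al\in\m_{n,m}(\cp)$, replace $\al$ by a predictable $n\times m$ matrix of full rank $n$ whose row span contains that of $\al$, and invoke Proposition~\ref{p6} together with Proposition~\ref{xp5} to conclude that the resulting $Y$ has property $(\gamma)$ while $\A(X)\subseteq\A(Y)$.

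The only difference is in execution, and here the paper is noticeably cleaner. Rather than stratifying over the sets $\phi_k(X)$ and performing a separate completion on each piece, the paper simply records the elementary linear-algebra fact that any $n\times m$ predictable matrix $\al$ with $m\ge n$ factors globally as $\al=M\theta$ with $M\in\m_{n,n}(\cp)$ and $\theta\in\m_{n,m}(\cp)$ satisfying $\tp(rank(\theta)=n)=1$; one then sets $Y=\theta\bullet R$ and is done, since $X=M\bullet Y$. This single factorization absorbs exactly the measurable-selection issue you flagged (it is the same ``complete the rows to a predictable basis'' step, done once rather than stratum by stratum), and it makes the passage through $\phi_k(X)$, the intermediate processes $U\in\cc_k$, and the gluing across strata unnecessary. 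Your normalization $g=\tilde\theta/(1+|\tilde\theta|)$ is harmless but also not needed once the factorization is written this way. Finally, your phrase ``rows stay inside the row span of $R$'' is a slip---$R$ is a process, and what you need is simply that the rows of $\tilde\theta$ span an $n$-dimensional subspace of $\R^m$ containing the row span of $\al$.
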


\begin{proof}Let the process $R\in \cc_m$ satisfies the property $(\gamma)$ such that $\A(X)\subseteq\A(R)$, so there exists some $\al\in\m_{n,m}(\cp)$ such that $X=\al\bullet R$. There exists also some $M\in\m_{n,n}(\cp)$ and $\theta\in\m_{n,m}(\cp)$ with $\tp(rank(\theta)=n)=1$ such that $\al=M\theta$. We define $Y=\theta\bullet R$, then $\tp(\phi_n(Y))=\tp(rank(\theta)=n)=1$ thanks to Proposition \ref{p6}, so $Y$ satisfies the property $(\gamma)$ and $X=M\bullet Y$, therefore $\A(X)\subseteq\A(Y)$.

\end{proof}


\subsection{Maximality property.}

The notion of maximality in this subsection and along the paper is w.r.t the order by inclusion. We investigate the relationship between the property $(\gamma)$ and the maximality in $\b_n$.

\begin{prop}\label{p22}Let $X\in\cc_n$ and consider the following assertions:

\begin{enumerate}
\item $X$ satisfies the property $(\gamma)$.
\item $\A:=\A(X)$ is maximal in $\b_n$.
\end{enumerate}
Then $(1)\Rightarrow(2)$. If moreover $X$ satisfies the property $(\tgamma)$, therefore $(2)\Rightarrow(1)$.
\end{prop}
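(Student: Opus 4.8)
The plan is to prove the two implications separately, using the characterizations of the predictable ranking map established earlier.

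For $(1)\Rightarrow(2)$: Assume $X$ satisfies $(\gamma)$. Since $X\in\cc_n$ we have $\A(X)\in\b_n$ by Corollary \ref{c2} (note $r(X)=n$ because property $(\gamma)$ forces $\tp(\phi_n(X))=1$ via Proposition \ref{xp5}, hence $dim(\A(X))=n$ by Theorem \ref{tt1}). Now suppose $\B\in\b_n$ with $\A(X)\subseteq\B$; I must show $\B=\A(X)$. Write $\B=\A(Z)$ for some $Z\in\cc_n$ with $r(Z)=n$ (Corollary \ref{cc2}). Since $\A(X)\subseteq\A(Z)$, i.e. $X\in m(\q^X)\subseteq m(\q^Z)=m(\q^X)$... more precisely $m(\q^Z)\subseteq m(\q^X)$, we have $X=\theta\bullet Z$ for some $\theta\in\m_{n,n}(\cp)$. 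The key point: property $(\gamma)$ for $X$ means $\lam\bullet X=0\Rightarrow\lam\equiv0$, so $\lam\theta\bullet Z=0$; but I want to conclude $\theta$ is invertible a.s. Here I apply Proposition \ref{p6} to the representation $X=\theta\bullet Z$ (with $Z$ playing the role of the $(\gamma)$-process — but wait, $Z$ need not satisfy $(\gamma)$). Instead, argue directly: if $\tp(rank(\theta)<n)>0$ then on $F:=\{rank(\theta)<n\}$ the rows of $\theta$ are linearly dependent, so there is $\lam\in\m_{1,n}(\cp)$ nonzero on $F$ with $\lam\theta=0$ on $F$, whence $\1_F\lam\bullet X=0$, contradicting $(\gamma)$. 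So $\theta$ is invertible a.s., $Z=\theta^{-1}\bullet X$ (with the bounded renormalization $g=\theta^{-1}/(1+|\theta^{-1}|)$ to stay in $\m_{n,n}(\cp)$, as in the proof of Theorem \ref{tt1}), hence $\A(Z)\subseteq\A(X)$ and $\B=\A(X)$.

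For $(2)\Rightarrow(1)$ under the additional hypothesis $(\tgamma)$: By Proposition \ref{pt1} there exists $Y\in\cc_n$ satisfying $(\gamma)$ with $\A(X)\subseteq\A(Y)$. By the implication $(1)\Rightarrow(2)$ just proved, $\A(Y)$ is maximal in $\b_n$. But also $\A(Y)\in\b_n$ and $\A(X)\in\b_n$ (since $\A(X)$ is assumed maximal in $\b_n$, in particular it belongs to $\b_n$), and $\A(X)\subseteq\A(Y)$; maximality of $\A(X)$ in $\b_n$ then forces $\A(X)=\A(Y)$, so $X$ and $Y$ generate the same acceptance set. It remains to transfer property $(\gamma)$ from $Y$ to $X$: since $\A(X)=\A(Y)$ we have $m(\q^X)=m(\q^Y)$, so $X=\theta\bullet Y$ and $Y=\theta'\bullet X$ for predictable square matrices; running the rank argument of the first part (now with $Y$ in the role of the $(\gamma)$-process) shows $\theta$ is a.s. invertible, and then $\lam\bullet X=0$ gives $\lam\theta\bullet Y=0$, hence $\lam\theta\equiv0$ by $(\gamma)$ for $Y$, hence $\lam\equiv0$. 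Thus $X$ satisfies $(\gamma)$.

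The main obstacle I anticipate is the passage from ``$X=\theta\bullet Z$ with $\A(X)\subseteq\A(Z)$'' to ``$\theta$ is a.s. of full rank,'' i.e. making the linear-algebra-over-integrands argument rigorous on the predictable set where the rank could drop: one needs a measurable selection of a nonzero left-null vector $\lam$ of $\theta$ on $\{rank(\theta)<n\}$, lying in $\m_{1,n}(\cp)$, so that $\1_F\lam\bullet X=0$ genuinely contradicts $(\gamma)$. This is essentially the measurable-vector-space machinery already invoked in Theorem \ref{tt1} (the closed-vector-space-valued measurable mapping $W$ and its generating family), so I would cite that construction rather than redo it. Everything else is a routine combination of Corollary \ref{cc2}, Corollary \ref{c2}, Proposition \ref{xp5}, Proposition \ref{p6} and Proposition \ref{pt1}.
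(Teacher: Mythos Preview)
Your proof is correct and follows essentially the same route as the paper: the full-rank argument for $(1)\Rightarrow(2)$ (the paper phrases the contradiction as $r(\1_F\bullet X)<n$ rather than via a null left vector, but it is the same observation) and the appeal to Proposition \ref{pt1} plus maximality for $(2)\Rightarrow(1)$ are exactly what the paper does. Your explicit transfer of property $(\gamma)$ from $Y$ to $X$ at the end is more detailed than the paper, which simply asserts it; you could shorten that step by invoking Proposition \ref{xp5}, since its condition (2) depends only on the acceptance set $\A(X)=\A(Y)$.
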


\begin{proof}
$(1)\Rightarrow(2)$ Let us suppose $\A\subseteq\C$ for some $\C=\A(U)\in\b_n$, therefore $X= M \bullet U$ for some $ M \in\m_{n,n}(\cp)$. We claim that $\tp(rank( M )=n)=1$. We suppose we have the opposite, then for $F=\{rank( M )<n\}$ we have $r(\1_F\bullet X)<n$ which is a contradiction. We conclude that $U= M ^{-1}\bullet X$ and so $\A=\C$.

Suppose now that $X$ satisfies the property $(\tgamma)$, then thanks to Proposition \ref{pt1}, there exists some $Y\in\cc_n$ satisfying the property $(\gamma)$ with $\A(X)\subseteq\A(Y)$ and therefore $\A(X)=\A(Y)$. We conclude that $X$ satisfies the property $(\gamma)$.

\end{proof}

\begin{cor}\label{c5}Let $X\in\cc_n$ satisfies the property $(\gamma)$ and let $Y=\theta\bullet X\in\cc_r$ for some $\theta\in\m_{r,n}(\cp)$. Then $Y$ satisfies the property $(\gamma)$ iff $\A(Y)$ is maximal in $\b_r$.
\end{cor}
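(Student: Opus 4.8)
The plan is to deduce this from Proposition \ref{p22} by verifying that $Y=\theta\bullet X$ automatically inherits the property $(\tgamma)$ from the hypothesis on $X$, so that the two-sided equivalence in Proposition \ref{p22} applies to $Y$ in place of $X$. Concretely, Proposition \ref{p22} gives, for any process in $\cc_r$ that satisfies $(\tgamma)$, the equivalence: "$(\gamma)$ holds" $\iff$ "the associated acceptance set is maximal in $\b_r$". So the whole task reduces to checking that $Y\in\cc_r$ satisfies property $(\tgamma)$.

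First I would observe that $Y\in\cc_r$ and that, by definition of $Y=\theta\bullet X$ with $\theta\in\m_{r,n}(\cp)$, we have $\A(Y)\subseteq\A(X)$: indeed every $R\in m(\q^Y)$ is of the form $R=\lam\bullet Y=\lam\theta\bullet X\in m(\q^X)$, hence $\q^X\subseteq\q^Y$ and $\A(Y)\subseteq\A(X)$. Now $X\in\cc_n$ itself satisfies property $(\gamma)$ by hypothesis, with $n\geq r$. Therefore, taking $R:=X$ in the definition of property $(\tgamma)$, the process $X\in\cc_n$ with $n\geq r$ satisfies $(\gamma)$ and $\A(Y)\subseteq\A(X)$, which is exactly the statement that $Y$ satisfies property $(\tgamma)$.

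Having established that $Y$ satisfies $(\tgamma)$, I would then simply invoke Proposition \ref{p22} applied to $Y\in\cc_r$: the implication $(1)\Rightarrow(2)$ there gives that if $Y$ satisfies $(\gamma)$ then $\A(Y)$ is maximal in $\b_r$, and since $Y$ satisfies $(\tgamma)$, the implication $(2)\Rightarrow(1)$ gives the converse, namely that maximality of $\A(Y)$ in $\b_r$ forces $Y$ to satisfy $(\gamma)$. Combining the two implications yields the claimed equivalence.

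I do not expect a genuine obstacle here; the only point requiring a moment's care is the direction of the inclusion $\A(Y)\subseteq\A(X)$ (it is the integrands of $Y$ that are a subset of those of $X$, via composition with $\theta$, so the martingale-measure sets grow and the acceptance sets shrink), and matching this inclusion to the precise quantifier structure in the definition of $(\tgamma)$, which asks for an \emph{ambient} process $R\in\cc_m$ with $m\geq r$ satisfying $(\gamma)$ and $\A(Y)\subseteq\A(R)$ — here $R=X$, $m=n\geq r$, does the job. Once that bookkeeping is in place the corollary is immediate from Proposition \ref{p22}.
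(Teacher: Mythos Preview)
Your approach is exactly the paper's: the author's proof is the single line ``It's an immediate consequence of Proposition \ref{p22} since $Y$ satisfies the property $(\tgamma)$,'' and you simply spell out why $Y$ inherits $(\tgamma)$ from $X$ before invoking Proposition \ref{p22}. One cosmetic remark: the cleanest justification for $\A(Y)\subseteq\A(X)$ is the direct inclusion $\q^X\subseteq\q^Y$ (any $\bQ\in\Ml(X)$ makes $Y=\theta\bullet X$ a local $\bQ$-martingale), rather than routing through $m(\q^Y)\subseteq m(\q^X)$ via Jacka's theorem, but the conclusion is the same.
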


\begin{proof}
It's an immediate consequence of Proposition \ref{p22} since $Y$ satisfies the property $(\tgamma)$.

\end{proof}

\begin{cor}\label{c4}Let $X\in\cc_n$ satisfy the assumption $(\gamma)$. Then for any $Y\in\cc_r$ such that $(X,Y)\in\cc_{n+r}$ and $r(X,Y)=n$, we have $\A(Y)\subseteq\A(X)$.

\end{cor}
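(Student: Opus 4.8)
The plan is to exploit the hypothesis $r(X,Y)=n$ together with the property $(\gamma)$ of $X$ to show that $Y$ is, up to integration, a predictable-matrix transform of $X$, which immediately gives $\A(Y)\subseteq\A(X)$. First I would apply Theorem \ref{tt1} to the process $(X,Y)\in\cc_{n+r}$: since $dim(\A(X,Y))=r(X,Y)=n$, there exists $Z\in\cc_n$ with $\A(X,Y)=\A(Z)$. Because $X$ is a component of the combined process (indeed $X\in m(\q^{(X,Y)})=m(\q^Z)$), we obtain a predictable matrix $\al\in\m_{n,n}(\cp)$ with $X=\al\bullet Z$, and similarly $Y=\beta\bullet Z$ for some $\beta\in\m_{r,n}(\cp)$.

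Next I would argue that $\al$ is a.s. invertible. Suppose $\tp(rank(\al)<n)>0$ and set $F=\{rank(\al)<n\}$. On $F$ we can factor $\al=\mu\nu$ with $\mu\in\m_{n,n-1}(\cp)$ and $\nu\in\m_{n-1,n}(\cp)$, so $\1_F\bullet X=\1_F\mu\bullet(\nu\bullet Z)$ with $\nu\bullet Z\in\cc_{n-1}$; hence $r(\1_F\bullet X)\leq n-1$, and by the characterization of $\phi_n(X)$ in Proposition \ref{p11} this forces $\tp(\phi_n(X))<1$, contradicting Proposition \ref{xp5} (since $X$ satisfies $(\gamma)$). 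Therefore $\tp(rank(\al)=n)=1$, and $Z=\al^{-1}\bullet X$, with $\al^{-1}$ a predictable matrix (after a standard normalization by $1/(1+|\al^{-1}|)$ that, by Jacka's Theorem as used in Proposition \ref{xp1}, does not change the martingale space and hence not $\A$).

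Finally, substituting gives $Y=\beta\bullet Z=\beta\al^{-1}\bullet X$, i.e. $Y=\theta'\bullet X$ for the predictable matrix $\theta'=\beta\al^{-1}$. Consequently every $R\in m(\q^Y)$ is of the form $R=\delta\bullet Y=\delta\theta'\bullet X\in m(\q^X)$ (using Jacka's Theorem to write elements of $m(\q^Y)$ as stochastic integrals of $Y$), so $m(\q^Y)\subseteq m(\q^X)$, which is exactly $\A(Y)\subseteq\A(X)$.

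The main obstacle I anticipate is the measurable selection/normalization issue in passing from $\al$ invertible a.s. to a genuinely predictable integrand $\al^{-1}$ against which $X$ can be integrated; this is the same technical point handled via Jacka's Theorem and the truncation $g=f/(1+|f|)$ in the proof of Theorem \ref{tt1}, and I would invoke those arguments rather than redo them. The invertibility step itself is the conceptual heart and is where the hypotheses $r(X,Y)=n$ and property $(\gamma)$ are both genuinely used.
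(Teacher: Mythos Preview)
Your argument is correct, but it takes a longer and more explicit route than the paper does. The paper's proof is two lines: observe that $\A(X)\subseteq\A(X,Y)$, that $\A(X,Y)\in\b_n$ since $r(X,Y)=n$, and that $\A(X)$ is maximal in $\b_n$ by Proposition~\ref{p22}; maximality forces $\A(X)=\A(X,Y)$, whence $\A(Y)\subseteq\A(X,Y)=\A(X)$. What you do instead is unpack the content of Proposition~\ref{p22} inline: your invertibility step (showing $\tp(rank(\al)=n)=1$ by contradiction via $r(\1_F\bullet X)<n$) is exactly the argument in the proof of $(1)\Rightarrow(2)$ of that proposition, applied with $U=Z$. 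So you are not using a genuinely different idea, just declining to invoke the packaged maximality result. Your version does buy something: it explicitly exhibits $Y=\beta\al^{-1}\bullet X$, which is more informative than the bare inclusion $\A(Y)\subseteq\A(X)$. Conversely, the paper's version is cleaner and avoids the integrability/normalization worry you flag at the end (and which the paper itself handles only implicitly in the proof of Proposition~\ref{p22}, writing $U=M^{-1}\bullet X$ without further comment). One small simplification of your final step: once $Y=\theta'\bullet X$, you can conclude directly that $\q^X\subseteq\Ml(Y)=\q^Y$ and hence $\A(Y)\subseteq\A(X)$, without passing through Jacka's theorem for $m(\q^Y)$.
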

\begin{proof}First $\A(X)\subseteq\A(X,Y)$, $\A(X,Y)\in\b_n$ and $\A(X)$ is maximal in $\b_n$ thanks to Proposition \ref{p22}. Then $\A(X)=\A(X,Y)$ and therefore $\A(Y)\subseteq\A(X)$.

\end{proof}

\begin{cor}\label{c1}Suppose $X$ satisfies the property $(\gamma)$. Then for any $(i_1,\ldots ,i_k)\in\{1,\ldots,n\}^k$ with $i_1<\ldots<i_k$, the set $\A(X^{i_1},\ldots,X^{i_k})$ is maximal in $\b_k$.
\end{cor}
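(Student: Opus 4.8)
The plan is to realize the subvector $Y:=(X^{i_1},\ldots,X^{i_k})$ as a predictable linear image of $X$ of constant full rank $k$, and then to feed this into the characterizations already established in this section. First I would check that $Y\in\cc_k$: any equivalent local martingale measure of $X$ makes each coordinate $X^{i}$ a local martingale, hence also the subvector $Y$, so $Y$ admits an equivalent local martingale measure. Next, let $\theta\in\m_{k,n}(\cp)$ be the \emph{constant} (hence predictable) $k\times n$ matrix whose $j$-th row is the $i_j$-th vector of the canonical basis of $\R^n$; then $Y=\theta\bullet X$ and $rank(\theta)=k$ everywhere on $\Omega\times\T$. The ordering $i_1<\cdots<i_k$ is used only to guarantee that $Y$ is a genuine $k$-dimensional subvector, i.e.\ that $\theta$ has full rank $k$.

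Then I would apply Proposition \ref{p6} to $X$ (which satisfies $(\gamma)$) and $Y=\theta\bullet X\in\cc_k$: it gives $\phi_k(Y)=\{rank(\theta)=k\}=\Omega\times\T$, so $\tp(\phi_k(Y))=1$. By the implication $(3)\Rightarrow(1)$ of Proposition \ref{xp5}, $Y$ satisfies the property $(\gamma)$; in particular, taking $F=\Omega\times\T$ in assertion $(2)$ of that proposition, $\A(Y)=\1_{\Omega\times\T}\circ\A(Y)\in\b_k$, so that the phrase ``maximal in $\b_k$'' is meaningful for $\A(Y)$.

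Finally, since $Y$ satisfies $(\gamma)$, the implication $(1)\Rightarrow(2)$ of Proposition \ref{p22} yields that $\A(Y)=\A(X^{i_1},\ldots,X^{i_k})$ is maximal in $\b_k$, which is exactly the claim. Alternatively, one may invoke Corollary \ref{c5} directly, applied to $Y=\theta\bullet X$ with $X$ satisfying $(\gamma)$ and $Y$ satisfying $(\gamma)$. There is essentially no hard step here; the only point requiring any care is that the extraction matrix $\theta$ is predictable and of constant rank $k$, which is immediate because $\theta$ is constant. I would expect the write-up to be only a few lines.
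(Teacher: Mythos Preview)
Your proposal is correct and follows the same overall strategy as the paper: establish that the subvector $Y=(X^{i_1},\ldots,X^{i_k})$ satisfies property $(\gamma)$, and then invoke Proposition~\ref{p22} to conclude maximality in $\b_k$.

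The difference lies in how property $(\gamma)$ for $Y$ is obtained. You go through the heavier machinery of Proposition~\ref{p6} and Proposition~\ref{xp5}: writing $Y=\theta\bullet X$ with the constant full-rank extraction matrix $\theta$, you deduce $\tp(\phi_k(Y))=1$ and then use the equivalence $(3)\Leftrightarrow(1)$ of Proposition~\ref{xp5}. The paper instead verifies $(\gamma)$ directly from the definition: after reducing by permutation to $Y=(X^1,\ldots,X^k)$, if $\lam\bullet Y=0$ then $(\lam,0)\bullet X=0$, and property $(\gamma)$ of $X$ forces $\lam\equiv 0$. This two-line argument bypasses the ranking-map results entirely. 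Your route is logically sound (Propositions~\ref{xp5} and~\ref{p6} precede this corollary), but the paper's direct verification is more elementary and self-contained; it also makes transparent why strict ordering $i_1<\cdots<i_k$ matters, namely to ensure the embedding $\lam\mapsto(\lam,0)$ (after permutation) is injective.
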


\begin{proof}Since for any permutation $\sigma$ on the set $\{1,\ldots,n\}$ we have that $\A(X)=\A(X^{\sigma_1},\ldots,X^{\sigma_n})$, then we can suppose w.l.g that $i_1=1,\ldots,i_k=k$. Thanks to Proposition \ref{p22}, it suffices to show that $Y:=(X^1,\ldots,X^k)$ satisfies the property $(\gamma)$. Let us suppose $\lam\bullet Y=0$ for some $\lam\in\m_{1,k}(\cp)$, then $(\lam,0)\bullet X=0$, which means that $\lam\equiv 0$ since $X$ satisfies the property $(\gamma)$.

\end{proof}

\begin{prop}\label{pp1}Let $\B$ and $\C$ be two maximal elements in $\b_n$. Then $\1_F\circ\B+\1_{F^{c}}\circ\C$ is maximal in $\b_n$ for any $F\in\cp$.
\end{prop}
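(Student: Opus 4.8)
The plan is to exhibit, for the candidate set $\D := \1_F\circ\B+\1_{F^{c}}\circ\C$, a generator process in $\cc_n$ satisfying the property $(\gamma)$, and then to invoke Proposition~\ref{p22} to conclude maximality. First I would pick generators: since $\B,\C\in\b_n$, write $\B=\A(X)$ and $\C=\A(Z)$ with $X,Z\in\cc_n$, and since $\B$ and $\C$ are maximal in $\b_n$, Proposition~\ref{p22} would be of limited direct use in the converse direction unless we know the $(\tgamma)$ property; but what we do get from Corollary~\ref{c5} or a direct argument is that maximality of $\A(X)$ in $\b_n$ together with $X\in\cc_n$ forces $X$ to satisfy $(\gamma)$ \emph{provided} $X$ satisfies $(\tgamma)$. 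The cleanest route is to observe that a maximal element of $\b_n$ is automatically generated by a process with the property $(\gamma)$: indeed if $\A(X)$ is maximal in $\b_n$ and $r(X)=n$, one can pass to the $\phi(X)$-arrangement and, using Proposition~\ref{pt1} if $(\tgamma)$ holds, replace $X$ by a $(\gamma)$-generator; alternatively, since $\A(X)\in\b_n$ one has $r(X)=n$ by Corollary~\ref{c2}, so $\tp(\phi_n(X))=1$ need not hold a priori, but maximality rules out any proper predictable piece on which the rank drops, which is exactly the statement $\tp(\phi_n(X))=1$, i.e. property $(\gamma)$ by Proposition~\ref{xp5}. So I would first record: \emph{every maximal element of $\b_n$ has a generator in $\cc_n$ satisfying $(\gamma)$}, and conversely (Proposition~\ref{p22}) any $(\gamma)$-generator yields a maximal element.

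Granting that, let $X$ be a $(\gamma)$-generator of $\B$ and $Z$ a $(\gamma)$-generator of $\C$, and set $U:=\1_F\bullet X+\1_{F^{c}}\bullet Z$. Then $U\in\cc_n$ (glueing local martingale measures piecewise over the predictable partition $\{F,F^c\}$, as in the proofs of Propositions~\ref{p2} and~\ref{p11}), and by Theorem~\ref{xt1}(2)--(4) together with the relation $\1_F\circ\A(X)=\A(\1_F\bullet X)$ one gets $\A(U)=\1_F\circ\A(U)+\1_{F^c}\circ\A(U)=\A(\1_F\bullet X)+\A(\1_{F^c}\bullet Z)=\1_F\circ\B+\1_{F^c}\circ\C=\D$. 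Thus $\D\in\b$ and $\D$ is generated by $U\in\cc_n$. It remains to check $U$ satisfies $(\gamma)$: suppose $\lam\bullet U=0$ for some $\lam\in\m_{1,n}(\cp)$. Multiplying by $\1_F$ gives $\1_F\lam\bullet X=0$, hence $\1_F\lam\equiv 0$ since $X$ has $(\gamma)$; multiplying by $\1_{F^c}$ gives $\1_{F^c}\lam\bullet Z=0$, hence $\1_{F^c}\lam\equiv 0$. Therefore $\lam\equiv 0$, so $U$ satisfies $(\gamma)$, and Proposition~\ref{p22}$(1)\Rightarrow(2)$ yields that $\D=\A(U)$ is maximal in $\b_n$.

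The main obstacle I anticipate is the first step: justifying that a maximal element $\B$ of $\b_n$ really does admit a $(\gamma)$-generator. One must argue that if some generator $X$ of $\B$ failed $(\gamma)$, then by Proposition~\ref{xp5} $\tp(\phi_i(X))>0$ for some $i<n$, and then—using the $\phi(X)$-arrangement $U$ from Proposition~\ref{p11} and an extension of $U$ by freshly adjoined independent components on $\phi_i(X)$—one could strictly enlarge $\A(X)$ within $\b_n$, contradicting maximality. Making this enlargement argument airtight (in particular producing the extra independent martingale directions measurably on a predictable set, so that the enlarged process still lies in $\cc_n$ and still has $\A$-dimension $n$) is the delicate point; everything after that is the routine piecewise-glueing bookkeeping already used repeatedly in the paper. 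If the ambient filtration is rich enough (e.g. the Brownian setting, or more generally whenever $\b_n$ is nonempty as guaranteed by the hypothesis that $\B,\C$ exist), this enlargement is available, and the proof goes through as sketched.
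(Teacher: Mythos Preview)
Your approach differs from the paper's and has a genuine gap precisely where you anticipate it. The implication ``maximal in $\b_n$ $\Rightarrow$ admits a $(\gamma)$-generator'' is available only under the extra hypothesis $(\tgamma)$ (this is the content of the converse direction in Proposition~\ref{p22}), and $(\tgamma)$ asserts exactly the existence of a larger $(\gamma)$-process dominating $X$. Your proposed enlargement---adjoining fresh independent martingale directions on the sets $\phi_i(X)$ for $i<n$---presupposes that such directions exist, which is nothing other than $(\tgamma)$; in a general filtration the martingale ``multiplicity'' may simply drop below $n$ on a predictable set of positive measure, and then a maximal element of $\b_n$ need not have any $(\gamma)$-generator at all. (That the equivalence does hold in the Brownian case is recorded only later, as Theorem~\ref{t1}(3), and is not available at this point.) So your first step cannot be completed at the stated level of generality, and the remainder of the argument, while correct once $(\gamma)$-generators are granted, never gets started.

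The paper sidesteps this entirely with a direct maximality argument that makes no use of property $(\gamma)$. After the glueing step---showing that $V:=\1_F\bullet X+\1_{F^c}\bullet Y\in\cc_n$ by exhibiting an equivalent local martingale measure via an It\^o computation on a pasted density process---the paper takes any $\D\in\b_n$ containing $\E:=\1_F\circ\B+\1_{F^c}\circ\C$ and observes that then $\B\subseteq\1_F\circ\D+\1_{F^c}\circ\B$, the right-hand side lying in $\b_n$ by the same glueing construction. Maximality of $\B$ forces equality, whence $\1_F\circ\B=\1_F\circ\D$; the symmetric argument with $\C$ gives $\1_{F^c}\circ\C=\1_{F^c}\circ\D$, and therefore $\D=\E$. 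This ``swap one section for the corresponding section of $\D$ and invoke maximality of the original set'' is the key idea your proposal is missing; it works with arbitrary generators $X,Y\in\cc_n$ and requires no structural assumption on the filtration.
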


\begin{proof}We show first that $\1_F\circ\B+\1_{F^{c}}\circ\C\in \b_n$. Since $\B,\C\in\b_n$, then we have $\B=\A(X)$ and $\C=\A(Y)$ with $X,Y\in\cc_n$. Let $\bq^Z\in\Ml(X)$ and $\bq^U\in\Ml(Y)$ with respective positive martingale densities $Z$ and $U$. We define the process $K$, solution of the linear stochastic differential equation $dK/K=\1_F\,dZ/Z+\1_{F^{c}}\,dU/U$ and $K_0=1$. We shall prove that $\bq^K\in\Ml(V)$ with $V:=\1_F\bullet X+\1_{F^{c}}\bullet Y$ by showing that the process $KV$ is a local martingale. We apply It\^o's formula and obtain that:

$$
d(KV)=K\,dV+V\,dK+d[K,V]
$$
$$
=\1_FK\,dX+\1_{F^{c}}K\,dY+V\,dK+\1_FK/Z\,d[Z,X]+\1_{F^{c}}K/U\,d[U,Y]
$$
$$
=\1_FK/Z\left(Z\,dX+d[Z,X]\right)+\1_{F^{c}}K/U\left(U\,dY+d[U,Y]\right)+V\,dK
$$
$$
=\1_FK/Z\left(d(ZX)-X\,dZ\right)+\1_{F^{c}}K/U\left(d(UY)-Y\,dU\right)+V\,dK.
$$

and since the processes $ZX,Z,UY,U$ and $K$ are local martingales, then $KV$ is also a local martingale.

Now we suppose that $\1_F\circ\B+\1_{F^{c}}\circ\C\subseteq \D$ with $\D\in \b_n$. Then $\1_F\circ\B\subseteq\1_{F}\circ \D$ and therefore $\B\subseteq\1_F\circ \D+\1_{F^{c}}\circ\B\in \b_n$. Since $\B$ is maximal in $\b_n$, we conclude that $\B=\1_F\circ \D+\1_{F^{c}}\circ\B$ and so $\1_F\circ\B=\1_F\circ \D$. We do the same for $\C$ and deduce that $\1_{F^{c}}\circ\C=\1_{F^{c}}\circ \D$. We conclude that $\1_F\circ\B+\1_{F^{c}}\circ\C=\D$.
\end{proof}

\begin{rem}\label{r5}Proposition \ref{pp1} is still true if we replace $\b_n$ by $\b_n(\A)$ with $\A\in\b_m$ for some $m\geq n$.
\end{rem}

\begin{lem}\label{ll1}Let $\A\in\b_n$ and $\B\in\b_r(\A)$ with $r\leq n$. Let us suppose $\B$ is maximal in $\b_r(\A)$, then for any $F\in\cp$, the set $\1_F\circ\B$ is maximal in $\b_j(\1_F\circ\A)$ for some $j=0\ldots r$.
\end{lem}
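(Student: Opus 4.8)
The plan is to take $j:=dim(\1_F\circ\B)$ and to show that $\1_F\circ\B$ is maximal in $\b_j(\1_F\circ\A)$. To set things up, since $\A\in\b_n$ and $\B\in\b_r(\A)$, Corollary \ref{c2} provides generators $\A=\A(X)$ with $X\in\cc_n$, $r(X)=n$, and $\B=\A(Y)$ with $Y\in\cc_r$, $r(Y)=r$; the inclusion $\B\subseteq\A$ together with Jacka's theorem (used as in the proof of Proposition \ref{xp1}) gives $Y=\theta\bullet X$ for some $\theta\in\m_{r,n}(\cp)$. By Theorem \ref{xt1}(5), $\1_F\circ\A=\A(\1_F\bullet X)$ and $\1_F\circ\B=\A(\1_F\bullet Y)$ are in $\b$, and monotonicity of the section operation (immediate from Definition \ref{d1}) yields $\1_F\circ\B\subseteq\1_F\circ\A$; by Theorem \ref{tt1} we have $j=r(\1_F\bullet Y)\le r$, so $\1_F\circ\B\in\b_j(\1_F\circ\A)$.

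Fix now $\D\in\b_j(\1_F\circ\A)$ with $\1_F\circ\B\subseteq\D$; it remains to prove $\D=\1_F\circ\B$. From $\D\subseteq\1_F\circ\A$, monotonicity of sections and Theorem \ref{xt1}(1) we get $\1_{F^c}\circ\D\subseteq\1_{F^c}\circ(\1_F\circ\A)=\1_{\emptyset}\circ\A=\Linf_-$, hence $\D=\1_F\circ\D$ by Theorem \ref{xt1}(2). Choosing a generator $\D=\A(D)$ with $D\in\cc_j$, the inclusion $\D\subseteq\A(\1_F\bullet X)$ means $\Ml(\1_F\bullet X)\subseteq\Ml(D)$, so $D$ is a local martingale under every measure in $\Ml(\1_F\bullet X)$, and Jacka's theorem gives $D=\1_F\rho\bullet X$ for some $\rho\in\m_{j,n}(\cp)$, which we may take with $\rho=\1_F\rho$.

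The heart of the proof is a gluing step. Append $r-j$ zero rows to $\rho$ to obtain $\tilde\rho\in\m_{r,n}(\cp)$ with $\A(\tilde\rho\bullet X)=\A(\rho\bullet X)=\D$, and set $Z:=(\1_F\tilde\rho+\1_{F^c}\theta)\bullet X\in\cc_r$. Using Theorem \ref{xt1}(2),(5) one computes $\1_F\circ\A(Z)=\A(\1_F\tilde\rho\bullet X)=\1_F\circ\D=\D$ and $\1_{F^c}\circ\A(Z)=\A(\1_{F^c}\theta\bullet X)=\1_{F^c}\circ\B$, so that $\A(Z)=\D+\1_{F^c}\circ\B$, $\A(Z)\subseteq\A(X)=\A$, and $\B=\1_F\circ\B+\1_{F^c}\circ\B\subseteq\A(Z)$. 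It then remains to verify $dim(\A(Z))=r$: clearly $r(Z)\le r$ since $Z\in\cc_r$, while $\B\subseteq\A(Z)$ forces $Y=M\bullet Z$ for some predictable matrix $M$, hence $Y\in\cc_{r,r(Z)}$ and $r=r(Y)\le r(Z)$ by Definition \ref{d3}; Theorem \ref{tt1} then gives $dim(\A(Z))=r(Z)=r$. Therefore $\A(Z)\in\b_r(\A)$, and the maximality of $\B$ in $\b_r(\A)$ forces $\B=\A(Z)$; applying $\1_F\circ(\cdot)$ gives $\1_F\circ\B=\1_F\circ\A(Z)=\D$, which is what we wanted.

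I expect the decisive point to be the identity $dim(\A(Z))=r$, i.e. that reassembling $\D$ over $F$ with $\1_{F^c}\circ\B$ over $F^c$ cannot lose any rank. This is exactly where the maximality hypothesis on $\B$ is used — it forces $\B$ to have full rank $r$ (cf. Proposition \ref{p22}) — and it enters through the monotonicity of the rank under predictable matrix integration. The rest, namely the section calculus of Theorem \ref{xt1} and the zero-row padding, is routine bookkeeping once these identities are in place.
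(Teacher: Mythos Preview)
Your proof is correct and follows the same strategy as the paper: set $j=\dim(\1_F\circ\B)$, glue any $\D\in\b_j(\1_F\circ\A)$ containing $\1_F\circ\B$ with $\1_{F^c}\circ\B$ to produce an element of $\b_r(\A)$ containing $\B$, and then invoke maximality. The paper's proof simply asserts $\D+\1_{F^c}\circ\B\in\b_r(\A)$ without elaboration, whereas you supply the explicit generator $Z\in\cc_r$ (via zero-row padding and the rank inequality $r(Y)\le r(Z)$), which is exactly the detail needed to substantiate that step.
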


\begin{proof}First $\1_{F}\circ\B\in\b_j(\1_{F}\circ\A)$ for $j=dim(\1_{F}\circ\B)\in\{0\ldots r\}$. Now let $\D\in\b_j(\1_{F}\circ\A)$ such that $\1_{F}\circ\B\subseteq \D$. Then $\D=\1_F\circ \D$ and $\B\subseteq \D+\1_{F^{c}}\circ\B\in\b_r(\A)$, therefore $\B=\D+\1_{F^{c}}\circ\B$, which means that $\1_{F}\circ\B=\1_{F}\circ \D=\D$.

\end{proof}

\begin{prop}\label{xp12}Let $\A=\A(X)\in\b_n$ and $\B=\A(Y)\in\b_r(\A)$ with $r\leq n$. Then there exists some $\theta\in\m_{r,n}(\cp)$ with $\tp(rank(\theta)=r)=1$ such that $\tB:=\1_G\circ\A+\1_{G^{c}}\circ\A(\theta\bullet X)$ is maximal in $\b_r(\A)$ and contains $\B$ with $G=\cup_{i=0}^r\,\phi_i(X)$.
\end{prop}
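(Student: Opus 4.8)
The plan is to define $\theta$ separately over $G=\cup_{i=0}^r\phi_i(X)$ and over $G^c=\cup_{i=r+1}^n\phi_i(X)$, and on $G^c$ to work block by block along the partition $(\phi_k(X))_{k>r}$ with the help of the $\phi(X)$-arrangement of $X$. First I would dispose of the trivial cases: if $r=0$, take $\theta$ the $0\times n$ matrix, so $\tB=\Linf_-$ since $\1_{\phi_0(X)}\circ\A=\Linf_-$ by Remark \ref{r3}; if $r=n$, then $G^c=\emptyset$, $\tB=\A$, and $\theta=I_n$ works. So assume $1\le r<n$. Since $\A=\A(X)\in\b_n$, Theorem \ref{tt1} gives $r(X)=n$; since $\B=\A(Y)\subseteq\A$ we have $Y\in m(\q^X)$, hence $Y=\rho\bullet X$ for some $\rho\in\m_{r,n}(\cp)$. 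Let $U$ be the $\phi(X)$-arrangement of $X$ (Proposition \ref{p11}(2)), write $U=\mu\bullet X$ with $\mu\in\m_{n,n}(\cp)$, and for $k\le n$ set $V_k:=(U^1,\dots,U^k)=\mu_{[1:k]}\bullet X$, where $\mu_{[1:k]}\in\m_{k,n}(\cp)$ collects the first $k$ rows of $\mu$. By Proposition \ref{p11}, $\1_{\phi_k(X)}\circ\A=\1_{\phi_k(X)}\circ\A(V_k)$ and $\phi_k(V_k)=\phi_k(X)=:\phi_k$. The one fact I would isolate and reuse is that $V_k$ is nondegenerate on $\phi_k$: whenever $\lam\in\m_{1,k}(\cp)$ satisfies $\1_{\phi_k}\lam\bullet V_k=0$, then $\1_{\phi_k}\lam=0$ (otherwise, on $\{\1_{\phi_k}\lam\ne0\}\subseteq\phi_k$ a component of $V_k$ is an integral of the others, so $r(\1_H\bullet V_k)\le k-1$ for some $H\subseteq\phi_k$, contradicting $\phi_k(V_k)=\phi_k$ via Proposition \ref{p11}(1)); a similar argument shows $rank(\mu_{[1:k]})=k$ a.s.\ on $\phi_k$.

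For each $k\in\{r+1,\dots,n\}$, from $\1_{\phi_k}\circ\B\subseteq\1_{\phi_k}\circ\A=\1_{\phi_k}\circ\A(V_k)$ pick $\bar\rho_k\in\m_{r,k}(\cp)$ supported on $\phi_k$ with $\1_{\phi_k}\circ\B=\1_{\phi_k}\circ\A(\bar\rho_k\bullet V_k)$, and (by a routine measurable selection, possible because $r<k$) pick $\psi_k\in\m_{r,k}(\cp)$ with $rank(\psi_k)=r$ and $span(\bar\rho_k)\subseteq span(\psi_k)$ on $\phi_k$. Put $\C_k:=\1_{\phi_k}\circ\A(\psi_k\bullet V_k)$ and
$$\theta:=\1_G\,[\,I_r\ \ 0\,]+\sum_{k=r+1}^n\1_{\phi_k}\,\psi_k\mu_{[1:k]}\ \in\m_{r,n}(\cp).$$
Because $\mu_{[1:k]}$ has full row rank $k$ on $\phi_k$, one gets $rank(\psi_k\mu_{[1:k]})=rank(\psi_k)=r$ there, so $\tp(rank(\theta)=r)=1$; and $\1_{\phi_k}\circ\A(\theta\bullet X)=\A(\psi_k\bullet(\1_{\phi_k}\bullet V_k))=\C_k$ (Theorem \ref{xt1}(5)), hence $\1_{G^c}\circ\A(\theta\bullet X)=\sum_{k>r}\C_k$. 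Now $\tB\in\b_r(\A)$: it lies in $\A$ (convex cone) and in $\b$ (Theorem \ref{xt1}(5) and Lemma 4.3 of \cite{berk}); $dim(\tB)\le r$ by gluing a generator in $\cc_r$ (the arrangement of $\1_G\bullet X$, whose ranking map vanishes above level $r$ so $\1_G\bullet X\in\cc_{n,r}$, on $G$, and $\1_{G^c}\theta\bullet X\in\cc_r$ on $G^c$); and $dim(\tB)\ge dim(\B)=r$ once $\B\subseteq\tB$ is known. For $\B\subseteq\tB$, split $\B=\1_G\circ\B+\1_{G^c}\circ\B$: the first summand lies in $\1_G\circ\A\subseteq\tB$, while $\1_{\phi_k}\circ\B=\1_{\phi_k}\circ\A(\bar\rho_k\bullet V_k)\subseteq\1_{\phi_k}\circ\A(\psi_k\bullet V_k)=\C_k$ by the span inclusion, so $\1_{G^c}\circ\B\subseteq\sum_{k>r}\C_k=\1_{G^c}\circ\A(\theta\bullet X)\subseteq\tB$.

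For maximality, take $\D\in\b_r(\A)$ with $\tB\subseteq\D$ and match sections. On $G$: $\1_G\circ\A=\1_G\circ\tB\subseteq\1_G\circ\D\subseteq\1_G\circ\A$, so $\1_G\circ\D=\1_G\circ\tB$. On each $\phi_k$ ($k>r$): $\1_{\phi_k}\circ\D\in\b$ has dimension $\le r$ (monotonicity of $dim$, from $dim(\A(Z))=r(Z)$) and sits in $\1_{\phi_k}\circ\A=\1_{\phi_k}\circ\A(V_k)$, so it has a generator of the form $\1_{\phi_k}\sigma\bullet V_k$ with $\sigma\in\m_{r,k}(\cp)$; from $\C_k\subseteq\1_{\phi_k}\circ\D$ and martingale representation, $\1_{\phi_k}\psi_k\bullet V_k=\1_{\phi_k}\tau\sigma\bullet V_k$ for some $\tau\in\m_{r,r}(\cp)$, so $\psi_k=\tau\sigma$ on $\phi_k$ by nondegeneracy of $V_k$; thus $span(\psi_k)\subseteq span(\sigma)$, and since $rank(\psi_k)=r$ while $\sigma$ has $r$ rows this forces $span(\psi_k)=span(\sigma)$ on $\phi_k$, i.e.\ $\1_{\phi_k}\circ\D=\1_{\phi_k}\circ\A(\psi_k\bullet V_k)=\C_k=\1_{\phi_k}\circ\tB$. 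As $(\phi_k)_{k>r}$ partition $G^c$, $\D=\1_G\circ\D+\sum_{k>r}\1_{\phi_k}\circ\D=\1_G\circ\tB+\sum_{k>r}\1_{\phi_k}\circ\tB=\tB$, proving maximality.

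The step I expect to be most delicate is pinning down the nondegeneracy mechanism on $\phi_k$ and its corollaries — the idea that, restricted to $\phi_k$, the arrangement component $V_k$ plays the role of a generator enjoying property $(\gamma)$, which is exactly what makes the blockwise versions of Propositions \ref{pxx}, \ref{p6} and \ref{p22} go through (yielding, for instance, $rank(\psi_k\mu_{[1:k]})=r$, $dim(\C_k)=r$, and the rank matching $span(\psi_k)=span(\sigma)$) — together with the measurable selection producing $\psi_k$ and the bookkeeping that the $G$- and $G^c$-parts of $\tB$ recombine without loss.
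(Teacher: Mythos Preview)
Your proof is correct and follows the same overall strategy as the paper's: show $\B\subseteq\tB\in\b_r(\A)$, then verify maximality by matching sections over $G$ (trivially) and over each $\phi_s(X)$ for $s>r$, using that the restricted generator on $\phi_s(X)$ enjoys property $(\gamma)$ so that the full-rank integrand forces equality. The one notable difference is the construction of $\theta$. The paper proceeds more directly: writing $Y=\al\bullet X$, it simply factorizes the predictable matrix $\al=M\theta$ with $M\in\m_{r,r}(\cp)$ and $\theta\in\m_{r,n}(\cp)$ of full rank $r$ everywhere (a measurable completion of the row-span of $\al$ to dimension $r$), and then invokes property $(\gamma)$ of $\1_{\phi_s(X)}\bullet X$ and Proposition~\ref{p22}/Corollary~\ref{c5} to conclude that $\1_{\phi_s(X)}\circ\tB$ is maximal in $\b_r(\1_{\phi_s(X)}\circ\A)$. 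You instead pass through the $\phi(X)$-arrangement $U$ and assemble $\theta$ block by block as $\psi_k\mu_{[1:k]}$. Your route makes the nondegeneracy mechanism on each $\phi_k$ fully explicit and self-contained, at the price of extra bookkeeping (the arrangement, the pieces $V_k$, $\bar\rho_k$, $\psi_k$); the paper's factorization $\al=M\theta$ achieves the same end in one line but relies more implicitly on the earlier propositions. Either way the heart of the argument is identical.
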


\begin{proof}First $Y=\al\bullet X$ for some $\al\in\m_{r,n}(\cp)$ and there exists some $M\in\m_{r,r}(\cp)$ and $\theta\in\m_{r,n}(\cp)$ with $\tp(rank(\theta)=r)=1$ such that $\al=M\theta$. So $\B\subseteq\tB$ and $\tB\in\b_r(\A)$. Now suppose there exists some $\C\in\b_r(\A)$ such that $\tB\subseteq\C$. Then $\1_G\circ\A=\1_G\circ\tB\subseteq\1_G\circ\C\subseteq\1_G\circ\A$, so $\1_G\circ\tB=\1_G\circ\C$. For $s=r+1\ldots n$ and $F=\phi_s(X)$, we define $\A^s=\1_F\circ\A=\A(\1_F\bullet X)$ with $\1_F \bullet X$ satisfying the property $(\gamma)$ on $F$, $\tB^s=\1_F\circ\tB=\A(\1_F\theta\bullet X)$ and $\C^s=\1_F\circ\C$. We deduce that $\tB^s$ is maximal in $\b_r(\A^s)$ and $\C^s\in\b_r(\A^s)$, then $\tB^s=\C^s$ and therefore $\1_F\circ\tB=\1_{F}\circ\C$, which means that $\tB=\C$.

\end{proof}


\section{Complementarity and plug-in vector space.}

\subsection{Complementarity.}

As we know, in any finite dimensional vector space $E$, we can associate an orthogonal vector space $N^\perp$ to any vector subspace $N$ in $E$ which satisfies $E=N\oplus N^\perp$ and $dim(E)=dim(N)+dim(N^\perp)$. Similarly we investigate the notions of complementarity and strict complementarity of a element $\B\in\b_r(\A)$ with $\A\in\b_n$.

\begin{defi}\label{d4} Let $\A\in\b_n$ and $\B\in\b_r(\A)$ for some integers $r\leq n$.
\begin{enumerate}
\item We say that $\B$ has a complement set in $\A$ if there exists a minimal set $\C\in\b$ such that $\A=\B+\C$.
\item We say that $\B$ has a strict complement set in $\A$ if $\B$ has a complement set $\C$ in $\A$ such that $dim(\A)=dim(\B)+dim(\C)$.
\end{enumerate}
\end{defi}

We show the existence of a complement set and a strict complement set for $\B\in\b_r(\A)$. We start by the case where $\A=\A(X)$ and $X$ satisfies the property $(\gamma)$.

\begin{thm}\label{t3}Let $\A=\A(X)\in\b_n$ and $\B=\A(Y)\in\b_r(\A)$ for some integers $r\leq n$. Suppose $X\in\cc_n$ satisfies the property $(\gamma)$, then
\begin{enumerate}
\item $\B$ has a complement set in $\A$.
\item $\B$ has a strict complement set in $\A$ if and only if $\B$ is maximal in $\b_r(\A)$.

\end{enumerate}
\end{thm}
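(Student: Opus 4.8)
To prove Theorem \ref{t3}, I would exploit the property $(\gamma)$ of $X$ to reduce everything to matrix algebra on the predictable integrands. Since $\B=\A(Y)\in\b_r(\A)$ and $Y=\al\bullet X$ for some $\al\in\m_{r,n}(\cp)$, Proposition \ref{pt1}/\ref{p6} (via the factorization $\al=M\theta$ with $\tp(rank(\theta)=r)=1$) lets me replace $Y$ by $\tY:=\theta\bullet X\in\cc_r$, which satisfies $(\gamma)$ and has $\A(Y)\subseteq\A(\tY)$. So the first move is to reduce to the maximal representative $\tB:=\A(\tY)\supseteq\B$ supplied by Proposition \ref{xp12}.

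For part (1), the idea is to build the complement from an "orthogonal'' family of rows. Consider $\theta\in\m_{r,n}(\cp)$ with full predictable rank $r$; I would choose a predictable matrix $\eta\in\m_{n-r,n}(\cp)$ whose rows span, pointwise a.s., the orthogonal complement of $span(\theta)$ in $\R^n$ (such a measurable selection exists by the same closed-vector-space-valued measurable selection argument used in the proof of Theorem \ref{tt1}). Set $Z:=g\bullet X$ with $g=\eta/(1+|\eta|)$, so $Z\in\cc_{n-r}$ and, by the rank bookkeeping of Proposition \ref{p6}, $Z$ satisfies $(\gamma)$; put $\C:=\A(Z)$. Since the rows of $\theta$ and $\eta$ together span all of $\R^n$ a.s., Jacka's theorem (as used in Proposition \ref{xp1}) and Lemma 4.3 in \cite{berk} give $\A=\A(\tY,Z)=\tB+\C$; combined with $\B\subseteq\tB$ this yields $\A=\B+\C$. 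For minimality of $\C$ among elements of $\b$ realizing $\A=\B+\C'$: any such $\C'=\A(Z')$ with $Z'=\mu\bullet X$ must satisfy $span(\theta)\oplus span(\mu)\supseteq\R^n$ a.s., hence $rank(\mu)\geq n-r$ a.s., so $r(Z')\geq n-r=r(Z)$, i.e. $dim(\C')\geq dim(\C)$; one then upgrades this to inclusion-minimality using maximality of $\tB$ in $\b_r(\A)$ together with Lemma \ref{ll1} / Proposition \ref{p22}.

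For part (2), the "only if'' direction: suppose $\B$ has a strict complement $\C$, so $\A=\B+\C$ with $dim(\A)=n=dim(\B)+dim(\C)=r+dim(\C)$, forcing $dim(\C)=n-r$. Writing $\C=\A(Z')$ with $Z'=\mu\bullet X$, $r(Z')=n-r$ and $\A=\A(Y,Z')$ force $span(\al)\oplus span(\mu)=\R^n$ a.s.\ (a dimension count: the spans must sum to $\R^n$ and their ranks add to $n$). Hence $rank(\al)=r$ a.s., i.e. $\tp(\phi_r(Y)) = 1$, so $Y$ satisfies $(\gamma)$ by Proposition \ref{xp5}, and by Proposition \ref{p22} (its second half applies since $Y$ inherits $(\tgamma)$ from $X$) $\B=\A(Y)$ is maximal in $\b_r(\A)$. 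The "if'' direction: if $\B$ is maximal in $\b_r(\A)$ then by Proposition \ref{p22} and Proposition \ref{xp12} we may take $\al=\theta$ with $\tp(rank(\theta)=r)=1$; then the complement $\C=\A(Z)$ constructed in part (1) satisfies $r(Z)=n-r$, and $r(\tY,Z)=n$ because the combined rows span $\R^n$ a.s., giving $n=dim(\A)=dim(\B)+dim(\C)$.

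The main obstacle I expect is the measurable-selection step producing the orthogonal-complement matrix $\eta$ with the correct pointwise rank behaviour across the predictable partition, and then keeping careful track of ranks when $\al$ does \emph{not} have full rank everywhere (i.e. when $\B$ is not maximal): one must localize on each $\phi_k(Y)$ via Proposition \ref{p11} and Proposition \ref{p6}, reassemble the pieces as in Proposition \ref{pp1}/Remark \ref{r5}, and verify that the resulting $\C$ is genuinely inclusion-minimal rather than merely dimension-minimal. That gluing argument, rather than any single estimate, is where the real work lies.
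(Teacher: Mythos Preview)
Your overall strategy---reduce to predictable matrix algebra via property $(\gamma)$, take an orthogonal-complement matrix to the integrand of $Y$, and localize on the pieces $\phi_k(Y)$---is exactly the paper's approach. However, the argument you give for part (1) before localizing contains a genuine error, not merely an omission.

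You write that $\A=\tB+\C$ together with $\B\subseteq\tB$ ``yields $\A=\B+\C$''. The implication goes the wrong way: from $\B\subseteq\tB$ you only get $\B+\C\subseteq\tB+\C=\A$, not the reverse. Concretely, in the two-dimensional Brownian setting take $X=W$, $Y=\1_F\bullet W^1$ for a proper $F\in\cp$; then $\B=\A(Y)\in\b_1(\A)$, $\tB=\A(W^1)$, and your $\C=\A(W^2)$. One has $\tB+\C=\A$ but $\B+\C=\A(\1_F\bullet W^1,W^2)\subsetneq\A$ since on $F^c$ only $W^2$ is available. So the single complement of rank $n-r$ built from the full-rank $\theta$ is \emph{not} a complement of $\B$ when $\B$ is not maximal. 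The paper does not pass through $\tB$ at all for (1): it decomposes $\B=\oplus_{i}\1_{\phi_i(Y)}\circ\A(\gamma^i\bullet X)$ with $rank(\gamma^i)=i$ and takes $\delta^i\in\m_{n-i,n}(\cp)$ orthogonal to $\gamma^i$ with $rank(\delta^i)=n-i$, so that $\C=\oplus_i\1_{\phi_i(Y)}\circ\A(\delta^i\bullet X)$ has local rank $n-i$ (not $n-r$) on $\phi_i(Y)$. The localization is needed for the \emph{construction} of $\C$, not only for the minimality check; your last paragraph suggests you may have thought otherwise.

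A second, smaller gap: your minimality argument shows $\dim(\C')\ge\dim(\C)$ for any competitor $\C'$, i.e.\ dimension-minimality, and then proposes to ``upgrade'' to inclusion-minimality. The definition requires inclusion-minimality, and dimension-minimality does not yield it directly (at this point in the paper Proposition \ref{pe1} is not yet available). The paper instead argues directly: if $\D\subseteq\C$ and $\A=\B+\D$, then on each $\phi_i(Y)$ one has $span(\gamma^i)+span(\theta)=\R^n$ with $span(\theta)\subseteq span(\delta^i)=span(\gamma^i)^{\perp}$, forcing $span(\theta)=span(\delta^i)$ and hence $\D=\C$. Your part (2) is essentially correct and close to the paper's, though note that ``$span(\al)\oplus span(\mu)=\R^n$'' is a conclusion of the dimension count, not an assumption: one first gets $span(\al)+span(\mu)=\R^n$ from $\A=\B+\C'$, and then $rank(\mu)\le n-r$ forces $rank(\al)=r$.
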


\begin{proof}
(1) Thanks to Proposition \ref{p11}, we get that $\B=\A(Y)=\oplus_{i=1}^r\,\1_{\phi_i(Y)}\circ\A(U^{1:i})$ where the process $U$ is the $\phi(Y)$-arrangement of $Y$ and $U^{1:i}=(U^1,\ldots,U^i)$. There exists then for each $i=1\ldots r$, some $\gamma^i\in\m_{i,n}(\cp)$ with $\tp(rank(\gamma^i)=i)=1$ such that $\B=\oplus_{i=0}^r\1_{\phi_i(Y)}\circ\A(\gamma^i\bullet X)$. Let for each $i=1\ldots r$, some $\delta^i\in\m_{n-i,n}(\cp)$ with $\tp(rank(\delta^i)=n-i)=1$ and orthogonal to $\gamma^i$. Then for $\C:=\oplus_{i=1}^r\1_{\phi_i(Y)}\circ\A(\delta^i\bullet X)$, we have $\B+\C=\oplus_{i=1}^r\1_{\phi_i(Y)}\circ\A(\gamma^i\bullet X)+\oplus_{i=1}^r\1_{\phi_i(Y)}\circ\A(\delta^i\bullet X)=\A(K\bullet X)$ with $K:=\oplus_{i=1}^r\1_{\phi_i(Y)}\left(\begin{array}{c}
\gamma^i \\
\delta^i
\end{array}\right)$ is of full rank and then invertible, so $\A(K\bullet X)=\A$. Now suppose $\A=\B+\D$ for some $\D\in\b(\C)$, then $\D=\A(\theta\bullet X)$ for some $\theta\in\m_{n,n}(\cp)$. We remark that for each $i=1\ldots n$, the vector space $span(\1_{\phi_i(Y)}\delta^i)$ is the orthogonal vector space of $span(\1_{\phi_i(Y)}\gamma^i)$ and that $\1_{\phi_i(Y)}\R^n=span(\1_{\phi_i(Y)}\gamma^i)+span(\1_{\phi_i(Y)}\theta)$ with $span(\1_{\phi_i(Y)}\theta)\subseteq span(\1_{\phi_i(Y)}\delta^i)$. Then $span(\1_{\phi_i(Y)}\delta^i)=span(\1_{\phi_i(Y)}\theta)$ and therefore $\C=\D$.

(2) Suppose $\B$ is maximal in $\b_r(\A)$, then the set $\C$ defined above is given by $\C=\A(\delta^r\bullet X)$ and therefore $\C$ has exactly dimension $n-r$. Conversely suppose that $\B$ has a strict complement set $\C$ in $\A$ and suppose that $\1_F\circ\B\in\b_s$ for some $F\in\cp$ with $s<r$. Then $\1_F\circ\A=\1_F\circ\B+\1_F\circ\C$ and therefore $dim(\1_F\circ\A)\leq dim(\1_F\circ\B)+dim(\1_F\circ\C)=s+n-r<n$, which is a contradiction.

\end{proof}

\begin{rem}\label{ryy1}We remark that there is more than one complement set for any $\B\in\b(\A)$. In the Brownian setting with $W=(W^1,W^2)$, we define $\A=\A(W^1,W^2)$ and $\B=\A(W^1)$. Then any set of the form $\A(a W^1+W^2)$ for a scalar $a$, is a complement set of $\B$ in $\A$.

\end{rem}

The general version of Theorem \ref{t3} is given below.

\begin{thm}\label{tt3}Let $\A=\A(X)\in\b_n$ and $\B=\A(Y)\in\b_r(\A)$ for some integers $r\leq n$. Then
\begin{enumerate}
\item $\B$ has a complement set in $\A$.
\item $\B$ has a strict complement set in $\A$ if and only if for all $i=n-r+1\ldots n$ and $j=0\ldots i+r-n-1$, we have that $\tp(\phi_i(X)\cap\phi_j(Y))=0$.

\end{enumerate}
\end{thm}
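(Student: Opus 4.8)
The strategy is to reduce Theorem \ref{tt3} to the special case Theorem \ref{t3} by passing to a $(\gamma)$-enveloping process. The key observation is that although $X$ need not satisfy the property $(\gamma)$, we may still arrange $\A(X)$ and $\A(Y)$ simultaneously along the predictable ranking maps $\phi(X)$ and $\phi(Y)$ so that on each atom of the common refinement the situation is governed by a process with full-rank structure. Concretely, by Proposition \ref{p11} applied to $X$ we obtain the $\phi(X)$-arrangement $U\in\cc_n$ with $\A(X)=\A(U)$ and $\1_{\phi_i(X)}\circ\A(X)=\1_{\phi_i(X)}\circ\A(U^{1:i})$; on each $\phi_i(X)$ the truncated process $\1_{\phi_i(X)}\bullet U^{1:i}$ satisfies $(\gamma)$. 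Then on each predictable set $\phi_i(X)$ one is in the situation of Theorem \ref{t3} with ambient dimension $i$, and $\1_{\phi_i(X)}\circ\B$ is an element of $\b_{j}(\1_{\phi_i(X)}\circ\A)$ for the appropriate $j$ (by Lemma \ref{ll1} and Proposition \ref{p11}).

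**Proof of (1).** For the existence of a complement, I would argue $\phi_i(X)$ by $\phi_i(X)$. Fix $i=1\ldots n$, write $F_i=\phi_i(X)$, and set $Z^i=\1_{F_i}\bullet U^{1:i}\in\cc_i$, which satisfies $(\gamma)$ by construction, with $\1_{F_i}\circ\A=\A(Z^i)$. The set $\1_{F_i}\circ\B$ lies in $\b_{j_i}(\A(Z^i))$ for some $j_i\le\min(r,i)$. Apply Theorem \ref{t3}(1) on $F_i$: there is a minimal $\C^i\in\b$ with $\A(Z^i)=\1_{F_i}\circ\B+\C^i$ and $\C^i=\1_{F_i}\circ\C^i$. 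Now put $\C:=\oplus_{i=1}^n\C^i$. Since the $F_i$ are disjoint, Theorem \ref{xt1}(2)--(3) gives $\B+\C=\oplus_{i=1}^n\bigl(\1_{F_i}\circ\B+\C^i\bigr)=\oplus_{i=1}^n\1_{F_i}\circ\A=\A$ (using $\phi_0(X)$ contributes $\Linf_-$ by Remark \ref{r3}). Minimality of $\C$ follows from minimality of each $\C^i$: if $\A=\B+\D$ with $\D\in\b(\C)$, then $\1_{F_i}\circ\A=\1_{F_i}\circ\B+\1_{F_i}\circ\D$ with $\1_{F_i}\circ\D\in\b(\C^i)$, forcing $\1_{F_i}\circ\D=\C^i$ for each $i$, hence $\D=\C$.

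**Proof of (2).** Here $dim(\A)=n$ and $dim(\B)=r$, and by the piecewise construction $dim(\C)=\sum_{i=1}^n dim(\C^i)$ where on $F_i=\phi_i(X)$ one has, by Theorem \ref{t3}(2) and the strict-complement dimension count, $dim(\C^i)=i-j_i$ precisely when $\1_{F_i}\circ\B$ is maximal in $\b_{j_i}(\A(Z^i))$, and $dim(\C^i)>i-j_i$ otherwise. Meanwhile $dim(\B)=r$ decomposes as the ``pieces'' $j_i$ distributed over the $F_i$, governed by $\phi(Y)$: on $\phi_i(X)\cap\phi_\ell(Y)$ the section $\1_{\phi_i(X)\cap\phi_\ell(Y)}\circ\B$ has dimension $\ell$, but this can only be realized inside an ambient space of dimension $i$, and it is \emph{maximal} in $\b_\ell$ of the corresponding section of $\A$ exactly when $\ell+(n-i)\ge n$, i.e.\ $\ell\ge n-i$, equivalently $i\ge n-\ell$ — and non-maximal (leaking dimension) precisely on $\phi_i(X)\cap\phi_j(Y)$ with $j<i+r-n$, after matching the global count $dim(\B)=r$ against the bound from $\phi(Y)$. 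So the identity $dim(\A)=dim(\B)+dim(\C)$ holds iff no ``leakage'' occurs on any $\phi_i(X)\cap\phi_j(Y)$ with $i\ge n-r+1$ and $j\le i+r-n-1$, which is the stated condition $\tp(\phi_i(X)\cap\phi_j(Y))=0$ for those index pairs.

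**Main obstacle.** The delicate point is the exact bookkeeping in part (2): translating ``$\B$ restricted to $\phi_i(X)$ is maximal in the right $\b_{j_i}$'' into the combinatorial condition on the pairs $(i,j)$ via $\phi(Y)$. One must show that the dimension of $\B$ genuinely concentrates (in the sense of $\tp$) on the intersections $\phi_i(X)\cap\phi_j(Y)$, and that maximality of the corresponding section — which by Lemma \ref{ll1} and Proposition \ref{p22} is equivalent to the property $(\gamma)$ of the truncated process — fails precisely when the ``room'' $i$ available is smaller than the dimension $j$ demanded plus the codimension $n-r$ needed for the complement, i.e.\ $j+(n-r)>i$, which rearranges to $j<i+r-n$. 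Verifying this requires carefully invoking Propositions \ref{pxx} and \ref{p6} to control $\phi_i(\theta\bullet X)$ in terms of $rank(\theta)$ on each piece. Once that index analysis is pinned down, combining with part (1)'s piecewise complement and summing dimensions completes the equivalence.
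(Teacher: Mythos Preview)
Your proof of part (1) is correct and is essentially identical to the paper's: both partition $\Omega\times\T$ along the predictable ranking map $\phi(X)$, observe that on each $F_i=\phi_i(X)$ the truncated process satisfies $(\gamma)$, invoke Theorem \ref{t3}(1) piecewise, and glue. The minimality argument is also the same.

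In part (2), however, there is a genuine error. You write ``by the piecewise construction $dim(\C)=\sum_{i=1}^n dim(\C^i)$'' and similarly that ``$dim(\B)=r$ decomposes as the pieces $j_i$ distributed over the $F_i$''. This confuses the integer-valued dimension $dim(\cdot)$ (the rank, i.e.\ the essential supremum of $\dd$) with the random dimension $\dd(\cdot)$. Over a disjoint predictable partition one has $\dd(\C)=\sum_i \1_{F_i}\dd(\C^i)$, but the integer dimension satisfies $dim(\C)=\max_{i:\tp(F_i)>0} dim(\C^i)$, not a sum. With the sum formula the subsequent index bookkeeping collapses: for instance, if $n=2$, $\tp(\phi_1(X))>0$ and $\tp(\phi_2(X))>0$, your formula would give $dim(\A)\ge 1+2=3>n$.

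The paper handles part (2) by refining along \emph{both} ranking maps, setting $F^{ij}=\phi_i(X)\cap\phi_j(Y)$. For the necessity of the condition it argues directly by contradiction on a single bad piece: if $\tp(F^{ij})>0$ with $i-j>n-r$ and $\C$ is any strict complement, then restricting $\A=\B+\C$ to $F^{ij}$ gives $i\le j+(n-r)$, which is impossible. For sufficiency, on each $F^{ij}$ with $\tp(F^{ij})>0$ the section $\B^{ij}$ is automatically maximal in $\b_j(\A^{ij})$ (both truncated processes satisfy $(\gamma)$ there), so Theorem \ref{t3}(2) yields $\C^{ij}\in\b_{i-j}(\A^{ij})$; gluing gives $\C$ with $dim(\C)=\max_{(i,j)}(i-j)\le n-r$, and the general lower bound $dim(\C)\ge n-r$ (from $\A=\B+\C$ and $dim(\A)=n$, $dim(\B)=r$) forces equality. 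Your outline gestures at this double refinement, but the incorrect additivity of $dim$ prevents the argument from going through as written.
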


\begin{proof}
(1) For $i=0\ldots n$, we define $F^{i}=\phi_i(X)$ and $\A^{i}=\1_{F^{i}}\circ\A$. So the process $\1_{F^{i}}\bullet X$ satisfies the property $(\gamma)$ on $F^i$. We apply assertion (1) in Theorem \ref{t3} for $\B^{i}:=\1_{F^{i}}\circ \B$ as an element in $\b(\A^{i})$, and obtain that there exists a minimal set $\C^{i}$ in $\b(\A^{i})$ such that $\A^{i}=\B^{i}+\C^{i}$. We deduce that $\A=\B+\C$ with $\C=\oplus_{i}\1_{F^{i}}\circ\C^{i}$. Suppose $\A=\B+\D$ for some $\D\in\b(\C)$, then $\A^{i}=\B^{i}+\1_{F^{i}}\circ \D$ with $\1_{F^{i}}\circ \D\subseteq \1_{F^{i}}\circ \C=\C^i$. From the minimality of $\C^{i}$ we conclude that $\C^{i}=\1_{F^{i}}\circ \D$ and therefore $\C=\D$.

(2) For the direct implication, we suppose that $\tp(\phi_i(X)\cap\phi_j(Y))>0$ for some $i=n-r+1\ldots n$ and $j=0\ldots i+r-n-1$. Then $\1_{F^{ij}}\circ \A=\1_{F^{ij}}\circ \B+\1_{F^{ij}}\circ\C$ for $F^{ij}=\phi_i(X)\cap\phi_j(Y)$. We deduce that $i=dim(\1_{F^{ij}}\circ \A)\leq dim(\1_{F^{ij}}\circ \B)+dim(\1_{F^{ij}}\circ\C)\leq j+n-r$, which means that $i-j\leq n-r$ but $i-j>n-r$ which is a contradiction.

Conversely we define $I=\{(i,j)\in\{1,\ldots,n\}\times\{0,\ldots,r\}:\;0\leq i-j\leq n-r\}$, $J=\{(i,j)\in I:\;\tp(\phi_i(X)\cap\phi_j(Y))>0\}$ and for any $(i,j)\in J$ we define $F^{ij}=\phi_i(X)\cap\phi_j(Y)$, $\A^{ij}:=\1_{F^{ij}}\circ \A=\A(\1_{F^{ij}}\bullet X)$ and $\B^{ij}:=\1_{F^{ij}}\circ \B$, with $\1_{F^{ij}}\bullet X$ satisfying the property $(\gamma)$ on $F^{ij}$. We remark that $\B^{ij}$ is maximal in $\b_j(\A^{ij})$, so by Theorem \ref{t3}, we conclude that there exists some $\C^{ij}\in\b_{i-j}(\A^{ij})$ such that $\A^{ij}=\B^{ij}+\C^{ij}$ and then $\A=\B+\C$ with $\C=\oplus_{(i,j)\in J}\1_{F^{ij}}\circ\C^{ij}\in\b_{n-r}(\A)$.

\end{proof}

\begin{rem}In Theorem \ref{tt3} and under the condition that $X\in\cc_n$ satisfies the property $(\gamma)$, we have that the strict completeness is equivalent to maximality of $\B$ in $\b_r(\A)$, since $\tp(\phi_i(X))=0$ for all $i<n$ and then $\tp(\phi_j(Y))=0$ for all $j<r$, which means that $\B$ is maximal in $\b_r(\A)$. In the general case, we show next that maximality is a sufficient condition.
\end{rem}

\begin{prop}\label{xp13}Let $\A\in\b_n$ and $\B\in\b_r(\A)$ with $r\leq n$. Suppose that $\B$ is maximal in $\b_r(\A)$, then $\B$ admits a strict complement in $\A$.
\end{prop}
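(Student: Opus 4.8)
The plan is to reduce the assertion to Theorem~\ref{tt3}(2) and then to verify, from the maximality hypothesis, the combinatorial condition that appears there. Fix a generator $X\in\cc_n$ of $\A$ and a generator $Y\in\cc_r$ of $\B$; since $\B\subseteq\A$ we may write $Y=\al\bullet X$ for some $\al\in\m_{r,n}(\cp)$, and by Theorem~\ref{tt1} we have $r(X)=n$ and $r(Y)=r$. By Theorem~\ref{tt3}(2) it then suffices to show that $\tp(\phi_i(X)\cap\phi_j(Y))=0$ for every $i\in\{n-r+1,\dots,n\}$ and every $j\in\{0,\dots,i+r-n-1\}$, and I would prove this by contradiction.

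So suppose $\tp(F)>0$ for $F=\phi_i(X)\cap\phi_j(Y)$ with $i-j>n-r$. Two elementary inequalities are available: $j<i$ (because $i-j>n-r\ge 0$) and $j\le r-1$ (because $j\le i+r-n-1$ and $i\le n$), so in particular $j<k:=\min(i,r)$. Over $F$ one has $\1_F\circ\A=\A(\1_F\bullet X)$ of dimension $i$, with $\1_F\bullet X$ satisfying property~$(\gamma)$ on~$F$, and $\1_F\circ\B=\A(\1_F\bullet Y)$ of dimension $j$ (since $r(\1_G\bullet Y)=j$ for every $G\subseteq F$ with $\tp(G)>0$, using $F\subseteq\phi_j(Y)$ and Theorem~\ref{tt1}). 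The key construction is to enlarge $\B$ only over $F$: choose $\theta\in\m_{k,n}(\cp)$ with $span(\1_F\theta)\supseteq span(\1_F\al)$ and with $r(\1_G\theta\bullet X)=k$ for every $G\subseteq F$ with $\tp(G)>0$ (possible exactly because $j\le k\le i$), pad $\theta$ with $r-k$ zero rows to obtain $\theta'\in\m_{r,n}(\cp)$, and set $\B'=\A\big((\1_{F^{c}}\al+\1_F\theta')\bullet X\big)$. It is then routine to check that $\1_{F^{c}}\circ\B'=\1_{F^{c}}\circ\B$ and $\1_F\circ\B'=\A(\1_F\theta\bullet X)\supseteq\1_F\circ\B$, so that $\B\subseteq\B'\subseteq\A$ and $\B'\in\b$.

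To close the argument I would verify $\B'\in\b_r(\A)$ and $\B\ne\B'$. The displayed generator of $\B'$ lies in $\cc_r$, so $dim(\B')\le r$; on the other hand $\B\subseteq\B'$ gives $dim(\B')\ge dim(\B)=r$, so $\B'\in\b_r(\A)$. Finally $dim(\1_F\circ\B)=j<k=dim(\1_F\circ\B')$, so $\B\subsetneq\B'$, contradicting the maximality of $\B$ in $\b_r(\A)$. Hence all the intersections in question are null, and Theorem~\ref{tt3}(2) produces a strict complement of $\B$ in $\A$.

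The step I expect to be the main obstacle is the enlargement, that is, the predictable choice of $\theta$: one has to extend the predictable vector-space-valued map $span(\1_F\al)$, read modulo the integrands $\lam$ with $\1_F\lam\bullet X=0$, to a $k$-dimensional predictable vector-space-valued map lying inside the rank-$i$ map carried by $\1_F\bullet X$, and then lift it back to a matrix-valued predictable process. This is a measurable-selection argument of the type already used in the paper (for instance for the complementary matrices $\delta^i$ in the proof of Theorem~\ref{t3}), but it has to be carried out together with the localized form of property~$(\gamma)$ on~$F$; once $\theta$ is in hand, the dimension bookkeeping for the decomposition $\1_{F^{c}}\circ\B'\oplus\1_F\circ\B'$ is automatic because $\B'$ comes with an explicit $r$-component generator.
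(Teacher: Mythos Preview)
Your argument is correct and takes a genuinely different route from the paper. The paper's proof is constructive: it invokes Proposition~\ref{xp12} (the maximality of $\B$ forces $\B=\tB$, so $\B=\1_G\circ\A+\1_{G^{c}}\circ\A(\theta\bullet X)$ with $G=\cup_{i\le r}\phi_i(X)$ and $\theta\in\m_{r,n}(\cp)$ of full rank), and then writes down the complement explicitly by choosing on each $\phi_j(X)$ with $j>r$ a predictable $\ttheta^j\in\m_{j-r,n}(\cp)$ orthogonal to $\theta$; the resulting $\C=\1_{G^{c}}\circ\A(\ttheta\bullet X)$ visibly lies in $\b_{n-r}$. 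Your approach instead reduces to Theorem~\ref{tt3}(2) and verifies its combinatorial condition by contradiction, which has the merit of explaining \emph{why} maximality is the right hypothesis: a positive mass on $\phi_i(X)\cap\phi_j(Y)$ with $i-j>n-r$ is precisely room to enlarge $\B$ locally without raising its global dimension. Two small caveats: the phrase ``$\1_F\bullet X$ satisfying property~$(\gamma)$ on $F$'' is not literally correct when $i<n$ (the process still has $n$ components; what you need and use is only that $r(\1_G\bullet X)=i$ for all $G\subseteq F$), and the span inclusion $span(\1_F\theta)\supseteq span(\1_F\al)$ must be taken modulo $\{\lam:\1_F\lam\bullet X=0\}$ rather than in $\R^n$, since the row span of $\1_F\al$ in $\R^n$ may already exceed $k$---you acknowledge this second point in your last paragraph, and both are cleanly handled by working inside the $i$-dimensional plug-in space $W(\1_F\bullet X)$.
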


\begin{proof}
By applying Proposition \ref{xp12}, we have $\B=\1_G\circ\A+\1_{G^{c}}\circ\A(\theta\bullet X)$ for some $\theta\in\m_{r,n}(\cp)$ and $G=\cup_{i=0}^r\,\phi_i(X)$. For $j=r+1\ldots n$ we define $\ttheta^j\in\m_{j-r,n}(\cp)$ with rank $j-r$, to be orthogonal to $\theta$ and set $\ttheta=\oplus_{j=r+1}^n\,\1_{\phi_j(X)}\ttheta^j$. Then $\C=\1_{G^{c}}\circ\A(\ttheta\bullet X)\in\b_{n-r}(\A)$ and $\B+\C=\A(K\bullet X)$ where $K:=\1_G\,I_n+\1_{G^{c}}\left(\begin{array}{c}
\theta \\
\ttheta
\end{array}\right)$ is of full rank and then invertible, where $I_n$ is the identity matrix, so $\B+\C=\A$.

\end{proof}


\subsection{The plug-in vector space.}

For an integer $n$ and $X\in\cc_n$, we define the plug-in vector space:
$$
\v(X)=\overline{\{\al\in\Linf(\cp;\R^n):\;\al\bullet X\in m(\q^X)\}}^0,
$$
where the closure is taken in the sense of convergence in measure, and denote by $\w_n$, the set of all $\cp$-measurable mappings with vector space values in $\R^n$. We will explore the link between the two sets $\cc_n$ and $\w_n$.

\begin{prop}\label{pp2}Let $X\in\cc_n$, then there exists a $\cp$-measurable mapping $W(X)\in\w_n$ such that $\v(X)=\{\al\in\L^0(\cp;\R^n):\;\al\in W(X)\;\mbox{a.s.}\}$, with its random algebraic dimension $\dd=\dd(X)$ and its algebraic basis $f_X=(f^1_X,\ldots,f^{\dd}_X)$. Moreover $\A(X)=\A(g_X\bullet X)$ with $g_X=f_X/(1+|f_X|)$, $\dd(X)=\oplus_{k=0}^n\,k\1_{\phi_k(X)}$ and the rank of $X$ is the essential supremum of $\dd(X)$.

\end{prop}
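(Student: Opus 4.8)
The plan is to build the mapping $W(X)$ from the definition of $\v(X)$, then identify its random dimension and basis using Jacka-type representation, and finally pin down $\dd(X)$ pointwise via the predictable ranking map $\phi(X)$ of Proposition \ref{p11}.

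First I would check that $\v(X)$ is stable under multiplication by bounded positive $\cp$-measurable scalars and is a $\cp$-measurable family of vector spaces: if $\al\bullet X,\beta\bullet X\in m(\q^X)$ and $\eta$ is a bounded positive predictable scalar, then $(\al+\beta)\bullet X$ and $\eta\al\bullet X$ are again local $\q^X$-martingales, and closure in $\L^0$ preserves this. This is exactly the structure invoked in the proof of Theorem \ref{tt1} (the set $K$ there), so by the same measurable-selection argument there is a $\cp$-measurable mapping $W(X)\in\w_n$ with $\v(X)=\{\al\in\L^0(\cp;\R^n):\al\in W(X)\text{ a.s.}\}$, a well-defined random algebraic dimension $\dd(X)$, and a measurable generating family $f_X=(f^1_X,\ldots,f^{\dd}_X)$ of $W(X)$. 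Setting $g_X=f_X/(1+|f_X|)$ keeps the span unchanged pointwise while making $g_X$ bounded, and the identity $\A(X)=\A(g_X\bullet X)$ follows by the same two-way computation as in Theorem \ref{tt1}: any $R\in m(\q^X)$ is $\delta\bullet X$ with $\delta\in\v(X)$, hence $\delta=\delta' g_X$ on $\{g_X\neq 0\}$ and $R=\delta'\bullet(g_X\bullet X)$; conversely any $R\in m(\q^{g_X\bullet X})$ is $\theta g_X\bullet X$ with $\theta g_X\in\v(X)$, hence in $m(\q^X)$.

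The substantive step is the formula $\dd(X)=\oplus_{k=0}^n k\,\1_{\phi_k(X)}$, i.e. that $\dd(X)=k$ a.s. on each $\phi_k(X)$. I would argue both inequalities on $G\subseteq\phi_k(X)$. For $\dd(X)\le k$ on $\phi_k(X)$: by Proposition \ref{p11}(1), $\1_{\phi_k(X)}\bullet X=\1_{\phi_k(X)}\al\bullet U$ with $U\in\cc_k$ and $\al\in\m_{n,k}(\cp)$, and $\1_{\phi_k(X)}\v(X)$ is then contained in $\1_{\phi_k(X)}\operatorname{span}(\al)$ (any integrand $\delta$ with $\delta\bullet X\in m(\q^X)$ restricts, via Jacka's theorem applied to $U$, to a combination of the $k$ rows of $\al$), whose pointwise rank is at most $k$. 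For $\dd(X)\ge k$ on $\phi_k(X)$: since $r(\1_G\bullet X)=k$ for all predictable $G\subseteq\phi_k(X)$, one cannot have $\dd(X)\le k-1$ on a positive-measure subset $G$ of $\phi_k(X)$, for then $\1_G\bullet X$ would factor through a $\cc_{k-1}$-process, forcing $r(\1_G\bullet X)\le k-1$, a contradiction. The main obstacle I anticipate is the careful measurable-selection bookkeeping linking the pointwise span of the integrand family to the predictable matrices $\al^F$ produced in Proposition \ref{p11} — one must ensure the rank computed pointwise agrees with the rank parameter $r$ used to define $\cc_{n,r}$, which is where Jacka's theorem (as in Proposition \ref{xp1}) does the real work.

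Finally, the essential supremum of $\dd(X)=\oplus_{k=0}^n k\,\1_{\phi_k(X)}$ over $\Omega\times\T$ is the largest $k$ with $\tp(\phi_k(X))>0$, which by Corollary \ref{cc2} and the construction of $\phi(X)$ is precisely $r(X)$: indeed $r(X)\le r$ iff $X\in\cc_{n,r}$ iff $\tp(\phi_k(X))=0$ for all $k>r$, using that $\phi_k(X)\subseteq(G^{k-1})^c$ certifies rank $k$ on that set. This gives $r(X)=\operatorname{ess\,sup}\dd(X)$ and completes the proof.
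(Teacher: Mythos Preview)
Your proposal is correct and follows essentially the same route as the paper: both invoke the measurable-selection lemma (Schachermayer/Brannath--Schachermayer) on the $\L^0$-closed, scalar-stable set $\v(X)$ to obtain $W(X)$ and its basis $f_X$, then use the two-sided span argument (exactly as in Theorem~\ref{tt1}) to get $\A(X)=\A(g_X\bullet X)$, and finally identify $\dd(X)$ via the predictable ranking map $\phi(X)$ of Proposition~\ref{p11}.

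The only noticeable difference is in the pointwise identification of $\dd(X)$. The paper works with the cumulative sets $F_s=\cup_{k\ge s}\phi_k(X)$ and $K_s=\{\dd(X)\ge s\}$ and derives $F_s=K_s$ by two contradictions, using in particular that $\1_K\circ\A(X)=\1_K\circ\A(U)$ with $U\in\cc_{s-1}$ forces $\1_K\dd(X)\le s-1$. You instead argue each level $\phi_k(X)$ directly: the upper bound $\dd(X)\le k$ via the factorisation $\1_{\phi_k(X)}\bullet X=\1_{\phi_k(X)}\al\bullet U$ with $U\in\cc_k$, and the lower bound by the rank invariance $r(\1_G\bullet X)=k$ for $G\subseteq\phi_k(X)$. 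Your direct route is slightly more self-contained (it does not implicitly anticipate Proposition~\ref{pe1}), but the step ``$\1_{\phi_k(X)}\v(X)\subseteq\1_{\phi_k(X)}\operatorname{span}(\al)$'' needs to be stated with care: what you are really using is that, on $\phi_k(X)$, integrands of $X$ are determined (modulo null integrands, i.e.\ those with $\al\bullet X=0$) by integrands of the $k$-dimensional $U$, so the effective fibre dimension of $W(X)$ is at most $k$. Once this is made explicit the argument is complete and equivalent to the paper's.
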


\begin{proof}The vector space $\v(X)$ is closed in $\L^0(\cp,\R^n)$ and closed under multiplication by bounded positive $\cp$-measurable random variables, so thanks to Lemma A.4 in \cite{Sch1} and Lemma 2.5 in \cite{Sch2}, there exists a vector space valued $\cp$-measurable mapping $W(X)$ such that $\v(X)=\{\al\in\L^0(\cp;\R^n):\;\al\in W(X)\;\mbox{a.s.}\}$. Let $\dd(X)$ be the algebraic dimension of $W(X)$ and $f_X=(f^1_X,\ldots,f^{\dd}_X)$ its algebraic basis, then $\A(g_X\bullet X)\subseteq\A(X)$. For the converse inclusion, let $\al\in \v(X)\cap\Linf(\cp;\R^n)$, then $\al=M g_X$ for some predictable matrix process $M$ and therefore $\al\bullet X=Mg_X\bullet X$ and so $\A(X)\subseteq\A(g_X\bullet X)$. Now for the algebraic dimension of $W(X)$, we shall show that $F_s:=\cup_{k=s}^n\phi_k(X)=\{\dd(X)\geq s\}=:K_s$ for $s=0\ldots n$. We suppose by absurd that $\tp(F_s\cap (K_s)^{c})>0$ for some $s$, then $dim(\1_F\circ\A(X))\geq s$ for $F:=F_s\cap (K_s)^{c}$ and $dim(\1_F\circ\A(X))=dim(\1_F\circ\A(g_X\bullet X))\leq s-1$, which is a contradiction. We suppose again by absurd that $\tp(K_s\cap (F_s)^{c})>0$ for some $s$, then for $K:=K_s\cap (F_s)^{c}$, we have $\1_K\circ\A(X)=\1_K\circ\A(U)$ for some $U\in\cc_{s-1}$ and then $\1_K \dd(X)=\1_K \dd(U)\leq s-1$ but $\1_K \dd(X)\geq s$. This is a contradiction. We deduce that $\phi_n(X)=F_n=K_n=\{\dd(X)= n\}$ and that for all $s=0\ldots n-1$ we have $\phi_s(X)=F_{s}\cap(F_{s+1})^{c}=K_{s}\cap(K_{s+1})^{c}=\{\dd(X)=s\}$.
\end{proof}

\begin{lem}\label{l2}Let $X\in\cc_n$ and $F\in\cp$. Then $\dd(\1_{F}\bullet X)=\1_{F}\dd(X)$.

\end{lem}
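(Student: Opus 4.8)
The plan is to prove the two inclusions $\dd(\1_F\bullet X)\leq \1_F\dd(X)$ and $\dd(\1_F\bullet X)\geq \1_F\dd(X)$ separately, using the characterization of $\dd$ via the predictable ranking map obtained in Proposition \ref{pp2}, namely $\dd(Y)=\oplus_k k\1_{\phi_k(Y)}$ for $Y\in\cc_m$. In fact, since $\1_F\bullet X\in\cc_n$, the equality $\dd(\1_F\bullet X)=\1_F\dd(X)$ is equivalent (by Proposition \ref{pp2} applied to both $X$ and $\1_F\bullet X$) to the statement that, for every $k=0\ldots n$, one has $\phi_k(\1_F\bullet X)\cap F=\phi_k(X)\cap F$ up to a $\tp$-null set, together with $F^c\subseteq\phi_0(\1_F\bullet X)$. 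The last point is immediate: on $F^c$ the process $\1_F\bullet X$ vanishes, so $r(\1_G\bullet(\1_F\bullet X))=0$ for $G\subseteq F^c$, whence $F^c\subseteq\phi_0(\1_F\bullet X)$ and $\1_{F^c}\dd(\1_F\bullet X)=0=\1_{F^c}\cdot\1_F\dd(X)$.

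It therefore remains to compare the ranking maps on $F$. First I would observe that for any $G\in\cp$ with $G\subseteq F$ we have $\1_G\bullet(\1_F\bullet X)=\1_G\bullet X$, simply because $\1_G\1_F=\1_G$. Consequently $r(\1_G\bullet(\1_F\bullet X))=r(\1_G\bullet X)$ for all such $G$. Now fix $k$ and take $G\subseteq\phi_k(X)\cap F$; then $r(\1_G\bullet(\1_F\bullet X))=r(\1_G\bullet X)=k$ by Proposition \ref{p11}(1). Since this holds for all predictable $G\subseteq\phi_k(X)\cap F$, the definition of the ranking map $\phi(\1_F\bullet X)$ forces $\phi_k(X)\cap F\subseteq\phi_k(\1_F\bullet X)$ (up to null sets). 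As the sets $(\phi_k(X)\cap F)_{k=0\ldots n}$ partition $F$ and the sets $(\phi_k(\1_F\bullet X))_{k}$ partition $\Omega\times\T$, these inclusions, holding simultaneously for all $k$, must in fact be equalities modulo $\tp$-null sets on $F$: if $\phi_k(X)\cap F$ were strictly contained in $\phi_k(\1_F\bullet X)\cap F$ for some $k$, then some other $\phi_j(X)\cap F$ with $j\neq k$ would have to overlap $\phi_k(\1_F\bullet X)$ on a set of positive measure, contradicting the already-established inclusion $\phi_j(X)\cap F\subseteq\phi_j(\1_F\bullet X)$.

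Putting these together, $\phi_k(\1_F\bullet X)=\phi_k(X)\cap F$ for $k=1\ldots n$ and $\phi_0(\1_F\bullet X)\supseteq F^c\cup(\phi_0(X)\cap F)$, and then
$$
\dd(\1_F\bullet X)=\bigoplus_{k=0}^n k\1_{\phi_k(\1_F\bullet X)}=\bigoplus_{k=1}^n k\1_{\phi_k(X)\cap F}=\1_F\bigoplus_{k=0}^n k\1_{\phi_k(X)}=\1_F\dd(X),
$$
which is the claim. The only delicate point is the bookkeeping in the previous paragraph — turning the family of one-sided inclusions into equalities using that both families are partitions — but this is a routine measure-theoretic argument once the key identity $\1_G\bullet(\1_F\bullet X)=\1_G\bullet X$ for $G\subseteq F$ is in hand; no new estimates are needed. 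Alternatively, one can bypass the ranking map entirely and argue directly at the level of plug-in spaces, noting that $\al\bullet(\1_F\bullet X)\in m(\q^{\1_F\bullet X})$ iff $\1_F\al\bullet X\in m(\q^{\1_F\bullet X})$, so that $\v(\1_F\bullet X)=\1_F\v(X)\oplus\1_{F^c}\R^n$ in the obvious sense, and the random algebraic dimension of the latter mapping is $\1_F\dd(X)$ on $F$ and $0$ on $F^c$; but the ranking-map route reuses Proposition \ref{pp2} most directly and is what I would write up.
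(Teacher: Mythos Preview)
Your proof is correct and follows essentially the same route as the paper: both establish $\phi_k(\1_F\bullet X)=F\cap\phi_k(X)$ for $k\geq 1$ (with $F^c$ absorbed into $\phi_0(\1_F\bullet X)$) and then read off the result from the formula $\dd=\oplus_k k\,\1_{\phi_k}$ of Proposition~\ref{pp2}. The paper reaches the ranking-map identity in one line via the section relation $\1_{F\cap\phi_k(X)}\circ\A=\1_{\phi_k(X)}\circ\A(\1_F\bullet X)$ (using Theorem~\ref{xt1}), whereas you argue directly with ranks through the identity $\1_G\bullet(\1_F\bullet X)=\1_G\bullet X$ for $G\subseteq F$ and a partition argument; your write-up is more explicit about the $k=0$ case and the bookkeeping, but the underlying idea is the same.
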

\begin{proof}We have that $\1_{F\cap\phi_k(X)}\circ\A=\1_{\phi_k(X)}\circ\A(\1_F\bullet X)$ for all $k=0\ldots n$, then $F\cap\phi_k(X)=\phi_k(\1_F\bullet X)$ and therefore
$$
\1_F\dd(X)=\oplus_{k=0}^n\,k\1_{F\cap\phi_k(X)}=\oplus_{k=0}^n\,k\1_{\phi_k(\1_F\bullet X)}=\dd(\1_F\bullet X).
$$
\end{proof}

\begin{prop}\label{pe1}Let $X\in\cc_n$ and $Y\in\cc_r$. Let the following assertions:
\begin{enumerate}
\item $ \A(X)=\A(Y)$.
\item $\dd(X)=\dd(Y)$.
\end{enumerate}
Then $(1)\Rightarrow(2)$. If moreover $\A(X)\subseteq\A(Y)$, then $(1)\Leftrightarrow(2)$.
\end{prop}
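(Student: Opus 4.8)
The plan is to prove both implications by relating the common acceptance set to the predictable ranking maps via Proposition \ref{pp2}. For $(1)\Rightarrow(2)$: assuming $\A(X)=\A(Y)$, I would first show that the two predictable ranking maps coincide, in the sense that $\phi_k(X)=\phi_k(Y)$ for every $k$. This follows because $r(\1_F\bullet X)$ and $r(\1_F\bullet Y)$ both equal $dim(\1_F\circ\A(X))=dim(\1_F\circ\A(Y))$ by Theorem \ref{tt1} (combined with Theorem \ref{xt1}(4), which identifies $\1_F\circ\A(X)$ with $\A(\1_F\bullet X)$), hence $\phi_k(X)$ and $\phi_k(Y)$ are defined by the same maximality condition on the sets $\Psi_k$ and thus are equal. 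Then by the last assertion of Proposition \ref{pp2}, $\dd(X)=\oplus_{k}k\1_{\phi_k(X)}=\oplus_{k}k\1_{\phi_k(Y)}=\dd(Y)$.

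For the converse under the extra hypothesis $\A(X)\subseteq\A(Y)$: here I would argue by contradiction, supposing $\A(X)\subsetneq\A(Y)$. Since $\A(X)\subseteq\A(Y)$ we have $X=\al\bullet Y$ for some $\al\in\m_{n,r}(\cp)$, so comparing ranks on predictable pieces, $dim(\1_F\circ\A(X))\leq dim(\1_F\circ\A(Y))$ for all $F\in\cp$, with equality forced everywhere precisely when $\A(X)=\A(Y)$ (one can localize: if the two sets differ, then on some predictable set $F$ of positive probability the inclusion $\A(\1_F\bullet X)\subseteq\A(\1_F\bullet Y)$ is strict, and on such a piece the rank must drop by maximality considerations as in Proposition \ref{p11} and Proposition \ref{p6}). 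This strict drop of rank on $F$ gives $\1_F\dd(X)<\1_F\dd(Y)$ via Proposition \ref{pp2} and Lemma \ref{l2}, contradicting $\dd(X)=\dd(Y)$. Hence $\A(X)=\A(Y)$.

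The delicate point — and the main obstacle — is the localization step in the converse: one must show that a strict inclusion $\A(X)\subsetneq\A(Y)$ (with $\A(X)\subseteq\A(Y)$) genuinely forces the rank to drop on a predictable set of positive probability, rather than merely staying equal while the acceptance sets differ. This is where I would lean on the structure established in Proposition \ref{p6}: writing $X=\al\bullet Y$, the rank $r(\1_F\bullet X)$ equals $rank(\al)$ on the corresponding pieces of $\phi(Y)$, so if $\A(X)\neq\A(Y)$ while $\1_F\dd(X)=\1_F\dd(Y)$ everywhere, then $rank(\al)=r$ a.s. on the support of $Y$, making $\al$ invertible there and forcing $Y=\al^{-1}\bullet X$, hence $\A(Y)\subseteq\A(X)$ and equality — the desired contradiction. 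Once this is in place, the chain of equalities through $\dd$ and the ranking maps is routine bookkeeping using the results already proved.
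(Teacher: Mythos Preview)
Your approach is essentially the paper's. For $(1)\Rightarrow(2)$ your route through $\phi_k(X)=\phi_k(Y)$ and the formula $\dd=\sum_k k\,\1_{\phi_k}$ is exactly right (and arguably cleaner than the paper's terse line ``$W(X)=W(Y)$'', which is awkward when $n\neq r$). For $(2)\Rightarrow(1)$ the paper also localizes on the common ranking partition $F^k:=\phi_k(X)=\phi_k(Y)$ and shows the connecting matrix has full rank on each piece --- precisely your final paragraph.

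Two small technical points to tighten. First, Proposition~\ref{p6} requires the \emph{base} process to satisfy property~$(\gamma)$, so you cannot apply it to $X=\al\bullet Y$ directly; the paper handles this by passing to the $\phi$-arrangements $U_X,U_Y$, so that on each $F^k$ one has $\1_{F^k}\bullet U_X^{1:k}=\1_{F^k}\theta^k\bullet U_Y^{1:k}$ with $\theta^k\in\m_{k,k}(\cp)$ and $\1_{F^k}\bullet U_Y^{1:k}$ satisfying~$(\gamma)$ on $F^k$. Second, your $\al$ is $n\times r$, hence not square; ``invertible'' should read ``left-invertible'' (or, after passing to arrangements as above, $\theta^k$ is genuinely square of full rank, hence invertible, and one recovers $\1_{F^k}\bullet U_Y^{1:k}=(\theta^k)^{-1}\1_{F^k}\bullet U_X^{1:k}$). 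With these adjustments your argument is complete and coincides with the paper's.
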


\begin{proof} $(1)\Rightarrow(2)$ Suppose $\A(X)=\A(Y)$, then $r(X)=r(Y)$. We deduce then that $W(X)=W(Y)$ and therefore $\dd(X)=\dd(Y)$.

$(2)\Rightarrow(1)$ Now suppose further $\A(X)\subseteq\A(Y)$ and $\dd(X)=\dd(Y)$, then $r(X)=r(Y)=:s$, $\phi(X)=\phi(Y)=:F$ and for all $k=1\ldots s$ we have $\1_{F^k}\bullet U^{1:k}_X=\1_{F^k}\theta^k\bullet U^{1:k}_Y$ for some $\theta^k\in\m_{k,k}(\cp)$ where $U_X$ and $U_Y$ are respectively the $F$-arrangement processes of $X$ and $Y$. Suppose $\tp(G)>0$ with $G:=\{rank(\theta^k)<k;F^k\}$, then $r(\1_{G}\bullet U_X^{1:k})<k$, which is a contradiction. Therefore $\tp(G)=0$ and $\1_{F^k}\bullet U^{1:k}_Y=\1_{F^k}(\theta^k)^{-1}\bullet U^{1:k}_X$, which means that $\1_{F^k}\circ\A(Y)=\1_{F^k}\circ\A(X)$ and then $\A(Y)=\A(X)$.

\end{proof}

The converse of Proposition \ref{pp2} is given below.

\begin{prop}\label{xp10}Let $W\in\w_n$ with its random algebraic dimension $\dd$ and its algebraic basis $f=(f^1,\ldots,f^{\dd})$. Let us suppose that there exists some $R\in\cc_n$ satisfying the property $(\gamma)$ and define $X=g\bullet R$ for $g=f/(1+|f|)$, then $W=W(X)$.

\end{prop}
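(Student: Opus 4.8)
The plan is to show the two inclusions $\v(X)=W(X)$ a.s. by exploiting the fact that $R$ satisfies the property $(\gamma)$, which makes the integrand map of $R$ injective in the appropriate sense, so that the plug-in vector space of $X=g\bullet R$ is precisely the vector space generated by the rows of $g$, i.e. the span of $f$, which is $W$ by construction. First I would note that, since $R\in\cc_n$ satisfies $(\gamma)$, by Proposition \ref{xp5} we have $\tp(\phi_n(R))=1$, and more importantly: for any predictable matrix $\theta$, the process $\theta\bullet R$ determines $\theta$ uniquely (if $\theta_1\bullet R=\theta_2\bullet R$ then $(\theta_1-\theta_2)\bullet R=0$, hence $\theta_1=\theta_2$ by $(\gamma)$ applied row by row). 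This is the key rigidity I will use repeatedly.

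The second step is the inclusion $W\subseteq W(X)$. Here I would take $\al\in\L^0(\cp;\R^n)$ with $\al\in W=span(f)$ a.s.; then $\al=Mf$ for some predictable (row) coefficient process $M$, so that $\al$, suitably truncated to be bounded, satisfies $\al\bullet R=Mf\bullet R$. Since $g=f/(1+|f|)$ differs from $f$ only by multiplication by the bounded positive predictable scalar $1/(1+|f|)$, we get $\al\bullet R = M'g\bullet R = M'\bullet X$ for a predictable $M'$, so $\al\bullet X\in m(\q^X)$ by Jacka-type characterization of $m(\q^X)$ as stochastic integrals of $X$ (as used in Proposition \ref{xp1}), whence $\al\in\v(X)=W(X)$. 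One must be a little careful that the closures match: $\v(X)$ is the $\L^0$-closure of bounded integrands, and $W(X)$ is its associated vector-valued mapping, so it suffices to produce, for a.e. $\omega$, a dense set of elements of the fibre $W(\omega)$ inside $W(X)(\omega)$; taking $\al$ ranging over bounded multiples of $f^1,\ldots,f^{\dd}$ does this.

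The third step, the reverse inclusion $W(X)\subseteq W$, is where the property $(\gamma)$ of $R$ does the real work and is the main obstacle. I would take $\al\in\v(X)\cap\Linf(\cp;\R^n)$, so $\al\bullet X\in m(\q^X)$; by the same characterization of $m(\q^X)$ this gives $\al\bullet X=\delta\bullet X$ for some predictable $\delta$, hence $\al\bullet X=\delta g\bullet R$. On the other hand $\al\bullet X=\al g\bullet R$ by definition of $X$ — wait, that is not quite it; rather $X=g\bullet R$ so $\alpha\bullet X = (\alpha g)\bullet R$ is false dimensionally, so instead one writes $\al\bullet X$ as a stochastic integral of $X$, and since $X=g\bullet R$, the associativity of the stochastic integral gives $\al\bullet X=(\text{row combination of }g)\bullet R$, i.e. $\al\bullet X = h\bullet R$ with $h\in span(g)=span(f)=W$ a.s. But also $\al\bullet X$ must, by the definition of the plug-in space and Proposition \ref{xp1}/\ref{xp5} applied to $X$, force $h$ to lie in $W(X)$; the point is to conclude $\al\in W$. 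To close this I would argue fibrewise: the map $\al\mapsto \al\bullet X$ from $\L^0(\cp;\R^n)$ factors through $span(g)$ because $X=g\bullet R$ and $R$ satisfies $(\gamma)$ (so the integrand of $R$ is uniquely determined); hence the image integrands of $X$ are exactly $span(g)=W$, giving $\v(X)\subseteq W$ and therefore $W(X)=W$ a.s. The delicate part is making the "factors through $span(g)$" statement precise in $\L^0$ with predictable coefficients and verifying it fibrewise using the measurable selection of algebraic bases (Lemma A.4 in \cite{Sch1}, Lemma 2.5 in \cite{Sch2}) together with the uniqueness coming from $(\gamma)$; this is exactly the mechanism already used in the proof of Theorem \ref{tt1} and Proposition \ref{pp2}, and I would model the argument on those.
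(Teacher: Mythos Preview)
Your approach differs from the paper's, and your Step~3 has a genuine gap. The paper does \emph{not} attempt the reverse inclusion $W(X)\subseteq W$ directly. Instead it first proves the equality of random dimensions $\dd=\dd(X)$ by invoking Proposition~\ref{p6} (applied to $X=g\bullet R$ with $R$ satisfying $(\gamma)$, giving $\phi_s(X)=\{rank(g)=s\}=\{rank(f)=s\}$) together with Proposition~\ref{pp2} (giving $\{\dd(X)=s\}=\phi_s(X)$), and then uses $\dd=rank(f)$. Once $\dd=\dd(X)$ is known, the single easy inclusion $W\subseteq W(X)$ (which follows from the observation that the rows of $f$ lie in $W(X)$) forces $W=W(X)$ by fibrewise comparison of finite-dimensional vector spaces of equal dimension. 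The whole proof is three lines.

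Your Step~3 argument does not close. From $\al\bullet X=(\al g)\bullet R$ and the property $(\gamma)$ of $R$ you correctly extract that the $R$-integrand $h:=\al g$ is uniquely determined and lies in $span(g)=W$; but this tells you $h\in W$, not $\al\in W$. The ``factors through $span(g)$'' claim says that $\al\mapsto\al\bullet X$ depends on $\al$ only through $\al g$, which is true, but it does not by itself restrict $\al$ to $W$: indeed, any $\al$ orthogonal to the rows of $g$ satisfies $\al g=0$ and hence $\al\bullet X=0\in m(\q^X)$, yet such $\al$ need not lie in $W$. So the direct inclusion $W(X)\subseteq W$ cannot be obtained from injectivity of the $R$-integrand alone; you need the dimension count, which is precisely the mechanism the paper uses and which you omit. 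Replace your Step~3 entirely by the dimension argument via Propositions~\ref{p6} and~\ref{pp2}.
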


\begin{proof}Thanks to Propositions \ref{p6} and \ref{pp2}, we have $\{\dd(X)=s\}=\phi_s(X)=\{rank(f)=s\}$ for all $s=0\ldots n$ and since $\dd=rank(f)$, then $\dd=\dd(X)$. In order to show that $W=W(X)$, we remark that $f\in W(X)$, then $W\subseteq W(X)$ and therefore $W=W(X)$.

\end{proof}

As it was mentioned before, there is no uniqueness of a complement set. Next we use the tool of the plug-in vector space to construct a complement set, which satisfies the property of orthogonality and by consequent unique.

\begin{thm}\label{ttt5}Let $X\in\cc_n$ and $Y\in\cc_r$ satisfying $\A(Y)\subseteq\A(X)$. Then there exists $s\leq n$, $\theta\in\m_{r,n}(\cp)$ and $\theta'\in\m_{s,n}(\cp)$ such that:
\begin{enumerate}
\item $Y':=\theta'\bullet X\in\cc_s$ is orthogonal to $Y$.
\item $W(X)=W(Y).\theta+W(Y').\theta'$.
\item $\dd(X)=\dd(Y)+\dd(Y')$.
 \end{enumerate}

\end{thm}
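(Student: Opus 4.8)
The plan is to reduce everything to the case where $X$ itself satisfies the property $(\gamma)$, then invoke the orthogonal-complement construction inside the plug-in vector space, and finally transfer the conclusions back through the ranking map. Concretely, I would proceed as follows.

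\emph{Step 1: reduce to property $(\gamma)$.} Since $\A(Y)\subseteq\A(X)$, we have $Y=\al\bullet X$ for some $\al\in\m_{r,n}(\cp)$, and $m(\q^Y)\subseteq m(\q^X)$, so every integrand of $Y$ can be written through $X$. By Proposition \ref{pp2} applied to $X$, we replace $X$ by $g_X\bullet X$ without changing $\A(X)$; I would instead work on each block $\phi_k(X)$ separately, where $\1_{\phi_k(X)}\bullet X$ satisfies $(\gamma)$ (as used repeatedly in Theorems \ref{t3} and \ref{tt3}). On such a block the plug-in vector space is all of $\R^k$ and the machinery of Theorem \ref{t3}(1)/Proposition \ref{xp13} produces, for the given $\B=\A(Y)$, a predictable matrix $\theta$ (of rank $r$ on $\cup_{i\le r}\phi_i$, suitably arranged) with $\1_G\circ\A+\1_{G^c}\circ\A(\theta\bullet X)$ maximal in $\b_r(\A)$ and containing $\B$ (Proposition \ref{xp12}). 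This gives the $\theta\in\m_{r,n}(\cp)$ of the statement; note $Y$ and $\theta\bullet X$ need not be equal, but $\A(Y)\subseteq\A(\theta\bullet X)$ and they have the same rank block-by-block.

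\emph{Step 2: build the orthogonal complement $\theta'$.} Working block-by-block on $\phi_j(X)$ where $\1_{\phi_j}\bullet X$ satisfies $(\gamma)$, let $W(Y)$ be the plug-in mapping of $Y$ from Proposition \ref{pp2}; its value inside $\R^j$ is a random subspace of dimension $\dd(Y)$ there. Take $\theta'^j\in\m_{\cdot,n}(\cp)$ to be a predictable selector of a generating family of the orthogonal complement (within the relevant $\phi_j(X)\cdot\R^n$) of $\mathrm{span}$ of the rows realizing $W(Y)$, exactly as the orthogonal $\delta^i$ were chosen in the proof of Theorem \ref{t3}(1). Assembling these over $j$ by $\theta'=\oplus_j \1_{\phi_j(X)}\theta'^j$ yields $\theta'\in\m_{s,n}(\cp)$ with $s=\mathrm{ess\,sup}$ of the fibrewise complement dimension. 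Setting $Y'=\theta'\bullet X\in\cc_s$, orthogonality of $Y$ and $Y'$ (in the sense of the notation section: their spanned vector spaces, i.e. $W(Y)$ and $W(Y')$, are orthogonal a.s.) is built in by construction, giving (1). Since $W(X)$ restricted to each block is the full ambient space and $W(Y)\cdot\theta \oplus W(Y')\cdot\theta'$ exhausts it by the direct-sum decomposition $\R^{(\cdot)} = \mathrm{span} \oplus \mathrm{span}^\perp$, we get $W(X)=W(Y).\theta+W(Y').\theta'$, which is (2).

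\emph{Step 3: dimension count.} Assertion (3) follows from (2) by taking random algebraic dimensions: on each fibre $\mathrm{span}(W(Y)) $ and $\mathrm{span}(W(Y'))$ are orthogonal and span $W(X)$, so $\dd(X)=\dd(Y)+\dd(Y')$ pointwise a.s.; alternatively invoke Lemma \ref{l2} together with Proposition \ref{pe1} block-by-block, since $\A(Y)\subseteq\A(X)$ lets us compare $\dd(Y)$ with $\dd(X)$ on each $\phi_k(X)\cap\phi_j(Y)$ as in Theorem \ref{tt3}(2). The main obstacle I anticipate is Step 2: producing a genuinely \emph{predictable} (i.e. $\cp$-measurable) selector of the orthogonal complement of the random subspace $W(Y)$ inside the random subspace $W(X)$, with the complement's dimension varying measurably over the $\phi$-partition, and checking that the assembled $\theta'$ indeed satisfies $\A(Y')=\1_{G^c}\circ\A(\theta'\bullet X)\in\b_s$ with the claimed $s$. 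This is exactly the measurable-selection issue already handled via Lemma A.4 in \cite{Sch1} and Lemma 2.5 in \cite{Sch2} (used in the proof of Proposition \ref{pp2}) and the orthogonal-matrix construction in Theorem \ref{t3}(1); the work is in stitching those together uniformly across the ranking map $\phi(X)$.
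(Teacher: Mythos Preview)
Your outline would yield assertions (2) and (3), but it misses the meaning of assertion (1). The orthogonality required in the statement is \emph{process} orthogonality, i.e. $[Y,Y']=0$ (this is the sense used throughout Section~5.1; see the convention $f\perp g$ iff $M^f\perp M^g$, Proposition~\ref{pop}(v), and the applications to Kunita--Watanabe and Gram--Schmidt in Theorems~\ref{to1} and~\ref{ccc5"}). You explicitly interpret it as algebraic orthogonality of the plug-in subspaces $W(Y)$ and $W(Y')$ inside $\R^n$, and your Step~2 takes $\theta'$ to be a generator of the Euclidean orthogonal complement of the rows of $\theta$, exactly as the $\delta^i$ in Theorem~\ref{t3}(1). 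But $[Y,Y']=\theta\,d[X]\,(\theta')^{\!T}=\theta\,C\,(\theta')^{\!T}\bullet K$ where $[X]=C\bullet K$, and $\theta(\theta')^{\!T}=0$ does \emph{not} force $\theta\,C\,(\theta')^{\!T}=0$ unless $C$ is a scalar multiple of the identity. So your construction produces a complement that is algebraically orthogonal in the integrand coordinates but not orthogonal as a process.

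The paper closes this gap by a whitening step that your proposal lacks. It first writes $[X]=C\bullet K$ with $C$ predictable positive semidefinite, then on each block $F^{ij}=\phi_i(X)\cap\phi_j(Y)$ factors the block covariance $H^i=M^i_X C (M^i_X)^{\!T}=R^iR^i$ (invertibility of $H^i$ coming from property $(\gamma)$ of $\1_F\bullet X^{1:i}$), and passes to $U^i=(R^i)^{-1}\bullet X^{1:i}$, which has $d[U^i]=I\,dK$. Only after this change of coordinates does the Euclidean orthogonal complement of $W(Y^{1:j})\beta^{ij}R^i$ inside $W(U^i)$ (obtained from the special case where both processes satisfy $(\gamma)$) translate into $[Y,Y']=0$; the final paragraph of the paper's proof is precisely this computation. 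If you want to keep your block-by-block scheme, the fix is to take the orthogonal complement with respect to the $C$-inner product rather than the Euclidean one, which amounts to the same whitening.
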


\begin{proof}We adopt the notation $X\eqf Y$ for two processes $X,Y$ and $F\in\cp$ when $\1_F\bullet X=\1_F\bullet Y$. We start first by considering the special case where the two processes $X$ and $Y$ satisfy the assumption $(\gamma)$. Then thanks to assertion (2) in Theorem \ref{t3}, there exists some $\theta'\in\m_{n-r,n}(\cp)$ such that $Y':=\theta'\bullet X\in\cc_{n-r}$ and $\A(X)=\A(Y)+\A(Y')$. So $\A(Y')$ admits a strict complement set in $\A$ and therefore by applying Theorem \ref{t3} again we deduce that $Y'$ satisfies the assumption $(\gamma)$ and then $\dd(Y')=n-r=\dd(X)-\dd(Y)$. Next we show that $\V(X)= \V(Y).\theta+\V(Y').\theta'$. Let $\al\in \V(X)\cap\Linf(\cp;\R^n)$ and since $\Ml(X)=\Ml(Y,Y')$, we get $\al\bullet X=\beta\bullet Y+\beta'\bullet Y'=(\beta\theta+\beta'\theta')\bullet X$ with $\beta\in \V(Y)$ and $\beta'\in \V(Y')$. From the property $(\gamma)$ of $X$ we obtain that $\al=\beta\theta+\beta'\theta'$. For the converse inclusion let $\al\in \V(Y)$, then $\al\theta\bullet X=\al\bullet Y\in m(\q^Y)$ with $m(\q^Y)\subseteq m(\q^X)$, then $\al\theta\in \V(X)$ and since the set $\V(Y).\theta+\V(Y').\theta'$ is closed in $\L^0(\cp;\R^n)$, then $\V(X)=\V(Y).\theta+\V(Y').\theta'$. We conclude that $\V(X)=\{\al\in\L^0(\cp;\R^n):\;\al\in W(Y).\theta+W(Y').\theta'\}$ and deduce that $W(X)=W(Y).\theta+W(Y').\theta'$. To show orthogonality we have that $K:=K_2\cap K_1^{\perp}\in\w_n$ with $K_1=W(Y).\theta$ and $K_2=W(Y').\theta'$. Thanks to Theorem \ref{xp10}, there exists some $u\leq n$ and $\al\in\m_{u,n}(\cp)$ such that $\al\bullet X\in\cc_u$, $K=W(\al\bullet X)$ and since $W(X)=K_1+K$, then $\A=\B+\A(\al\bullet X)$ with $\A(\al\bullet X)\subseteq\A(Y')$, so $\A(Y')=A(\al\bullet X)$ from the minimality of $\A(Y')$ and therefore $K_2=K$ which means that $K_1\perp K_2$.

Now for the general case, we show first that the quadratic variation process $[X]$ of $X$ can be written as $[X]=C\bullet K$ for a positive definite matrix $C\in\m_{n,n}(\cp)$ and a predictable increasing process $K$. Indeed we define the process $K=\sum_{i,j}|[X^i,X^j]|$, where $|[X^i,X^j]|$ is the total variation process of $[X^i,X^j]$. Then each process $[X^i,X^j]$ is absolutely continuous w.r.t $K$ and therefore there exists $C_{ij}$ such that $[X^i,X^j]=C_{ij}\bullet K$ and then $[X]=C\bullet K$. To show that $C$ is positive definite, let $\al\in\m_{1,n}(\cp)$, $\lam\in\Linf_+$ and constat that $\te\left(\lam\al C\al\bullet K\right)=\te|\left(\sqrt{\lam}\al \bullet X\right)|^2\ge 0$, therefore $\al C\al\geq 0$.

For the remainder of the proof, we suppose w.l.g. that $X=U_X$ and $Y=U_Y$. We define $\Lam:=\{(i,j)\in\{1,\ldots,n\}\times\{1,\ldots,r\}:\;\tp(\phi_i(X)\cap\phi_j(Y))>0\}$. Fix $(i,j)\in\Lam$ and $F=F^{ij}:=\phi_i(X)\cap\phi_j(Y)$, so we remark that $\1_{F}\circ\A(Y^{1:j})=\1_{F}\circ\A(Y)\subseteq\1_{F}\circ\A(X)=\1_{F}\circ\A(X^{1:i})$ and therefore $Y^{1:j}\eqf\beta^{ij}\bullet X^{1:i}$, $Y^{1:j}\eqf M^j_Y\bullet Y$, $X^{1:i}\eqf M^i_X\bullet X$, $Y\eqf N^j_Y\bullet Y^{1:j}$ and $X\eqf N^i_X\bullet X^{1:i}$ for some $\beta^{ij}\in\m_{j,i}(\cp)$, $M^j_Y\in\m_{j,r}(\cp)$, $M^i_X\in\m_{i,n}(\cp)$, $N^j_Y\in\m_{r,j}(\cp)$ and $N^i_X\in\m_{n,i}(\cp)$. We also have that
$[X^{1:i}]=H^i\bullet K$ with $H^i:=M^i_X C M^i_X$, we shall show that $H^i$ is invertible. Let $\al\in\m_{1,i}(\cp)$ such that $\al H^i=0$, then $\e(\al\bullet X^{1:i})^2=\e(\al H^i\al\bullet K)=0$ and therefore $\al=0$ on $F$ since the process $X^{1:i}$ satisfies the property $(\gamma)$ on $F$. Since $H^i$ is invertible and positive definite, then there exists $R^i\in\m_{i,i}(\cp)$ such that $H^i=R^iR^i$ and $R^i$ is invertible. We conclude that the process $U^i=D^i\bullet X^{1:i}$ satisfies the property $(\gamma)$ on $F$, where $D^i$ is the inverse matrix of $R^i$, and since $Y^{1:j}\eqf \beta^{ij}R^i\bullet U^i$, we deduce from the first case, that there exists some $\al^{ij}\in\m_{n,i}(\cp)$ such that $Z^{ij}:=\al^{ij}\bullet U^i\in\cc_{n}$, $\1_{F}W(U^i)=\1_{F}W(Y^{1:j}).\beta^{ij}R^i+\1_{F}W(Z^{ij}).\al^{ij}$ and the two vector spaces $\1_{F}W(Y^{1:j}).\beta^{ij}R^i$ and $\1_{F}W(Z^{ij}).\al^{ij}$ are algebraically orthogonal. We remark that $\1_{F}W(U^{i})=\1_{F}W(X^{1:i})R^i=\1_{F}W(X)N^i_X R^i$, that $\1_{F}W(Y^{1:j})=\1_{F}W(Y)M^j_Y$ and that $\1_G W(K)=W(\1_G\bullet K)$ for any $G\in\cp$ and $K\in\cc_n$, we obtain then that $\1_{F}W(X)=\1_{F}W(Y).\theta+\1_{F}W(Y').\theta'$, where $Y'=\sum_{(i,j)\in\Lam}\1_{F^{ij}}\bullet Z^{ij}$, $\theta=\sum_{(i,j)\in\Lam}\1_{F^{ij}}M^j_Y\beta^{ij}M^i_X$ and $\theta'=\sum_{(i,j)\in\Lam}\1_{F^{ij}}\al^{ij}D^i M^i_X$. We take the sum over the set $\Lam$ and obtain that $W(X)=W(Y).\theta+W(Y').\theta'$ with $Y=\theta\bullet X$, $Y'=\theta'\bullet X$. We constat also, by using Lemma \ref{l2}, that $\dd(X)=\sum_{(i,j)\in\Lam}\1_{F^{ij}}\dd(X)=\sum_{(i,j)\in\Lam}\1_{F^{ij}}\dd(X^{1:i})=\sum_{(i,j)\in\Lam}\1_{F^{ij}}\left(\dd(Y^{1:j})+\dd(Z^{ij})\right)$$
=\sum_{(i,j)\in\Lam}\1_{F^{ij}}\left(\dd(Y)+\dd(Y')\right)=\dd(Y)+\dd(Y').
$

It remains to show that the two processes $Y$ and $Y'$ are orthogonal. Let $f\in W(Y)$ and $g\in W(Z)$, then
$$
(f.\theta)C(g.\theta')=\sum_{(i,j)\in\Lam}\1_{F^{ij}}(f.M^j_Y\beta^{ij}M^i_X)C(g.\al^{ij}D^i M^i_X)
$$
$$
=\sum_{(i,j)\in\Lam}\1_{F^{ij}}(f.M^j_Y\beta^{ij})M^i_X C M^i_X D^i(g.\al^{ij})
$$
$$
=\sum_{(i,j)\in\Lam}\1_{F^{ij}}(f.M^j_Y\beta^{ij})H^i D^i(g.\al^{ij})
$$
$$
=\sum_{(i,j)\in\Lam}\1_{F^{ij}}(f.M^j_Y\beta^{ij}R^i)(g.\al^{ij})=0.$$
So $\theta.C.\theta'=0$ and therefore $[Y,Y']=\theta.C.\theta'\bullet K=0$.

\end{proof}


\section{Further results.}

\subsection{The orthogonal complement set.}
For $X\in\cc_{n}$ and $\A=\A(X)$, we denote $\dd(\A):=\dd(X)$, $U_X$ is the $\phi(X)$-arrangement of the process $X$ and $\k(\A)=\A\cap-\A$. For $f\in\k(\A)$, we denote $M^f$ the associated $\q$-martingale, defined by $M^f_t=\eq(f|\F_t)$ for any $\bq\in\q$. We remark that $\k(\A)$ is the set of elements $M_T$, where $M$ is a bounded $\q$-martingale with $M_0=0$, and shall write $f\perp g$ for $f,g\in\k(\A)$ if $M^f\perp M^g$.

The following result will be a basic tool in showing some of the next results.
\begin{prop}\label{pop}Let suppose $(X,Y)\in\cc_{n+r}$ with $\A=\A(X)$ and $\B=\A(Y)$. Then
\begin{enumerate}
\item[(i)] $\A$ is the weak star closure of $\k(\A)+\Linf_-$.
\item[(ii)] $\k(\A)$ is a vector space.
\item[(iii)] $\B\subseteq\A$ iff $\k(\B)\subseteq\k(\A)$.
\item[(iv)] $\k(\A+\B)=\k(\A)+\k(\B)$.
\item[(v)] $\k(\A)\cap\k(\B)=\{0\}$ if $X$ is orthogonal to $Y$.
\end{enumerate}
\end{prop}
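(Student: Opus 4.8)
The plan is to prove the five assertions in an order that lets later parts lean on earlier ones, using the structural description of $\A(X)$ through stochastic integrals and local martingale measures together with the section calculus recalled in the Review section and the plug-in vector space machinery of the previous section. For (i) I would argue that every $h\in\A$ is dominated by $\delta\bullet X_T$ for some admissible $\delta$, and that the stochastic integral $\delta\bullet X$, being a local $\q$-martingale which is bounded below (after a suitable localization/truncation and because $h\in\Linf$), can be approximated in the weak star topology by bounded $\q$-martingales started at $0$; writing $g=(\delta\bullet X_T)\wedge c \vee (-c)$ and noting that such truncations lie in $\k(\A)+\Linf_-$ up to the weak star closure gives the claim. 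The reverse inclusion $\overline{\k(\A)+\Linf_-}^*\subseteq\A$ is immediate since $\k(\A)\subseteq\A$, $\Linf_-\subseteq\A$, and $\A$ is a weak star closed convex cone.

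For (ii) I would use that $\k(\A)=\A\cap-\A$ is the set of $M_T$ for $M$ a bounded $\q$-martingale vanishing at $0$: it is closed under scalar multiplication trivially, and closed under addition because the sum of two bounded $\q$-martingales started at $0$ is again one (this is where we use that $\q=\Ml(X)$ is a genuine set of martingale measures, so "being a $\q$-martingale" is a linear condition). Assertion (iii) then follows from (i): if $\k(\B)\subseteq\k(\A)$ then $\B=\overline{\k(\B)+\Linf_-}^*\subseteq\overline{\k(\A)+\Linf_-}^*=\A$; conversely if $\B\subseteq\A$ and $f\in\k(\B)=\B\cap-\B$ then $f\in\A\cap-\A=\k(\A)$. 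For (iv) the inclusion $\k(\A)+\k(\B)\subseteq\k(\A+\B)$ is clear from linearity; for the reverse, an element of $\k(\A+\B)$ is $M_T$ for a bounded $(\A+\B)$-martingale, i.e. a bounded martingale under every $\bq\in\q^{\A+\B}=\Ml(X,Y)$, and using Jacka's theorem (as in the proof of Proposition \ref{xp1}) one represents the associated martingale as $\gamma\bullet(X,Y)=\gamma_1\bullet X+\gamma_2\bullet Y$ and splits accordingly, truncating each piece to stay bounded. Finally (v): if $X\perp Y$ and $f\in\k(\A)\cap\k(\B)$, then $M^f$ is simultaneously (a version of) a martingale in the space generated by $X$ and one in the space generated by $Y$; orthogonality of $X$ and $Y$ forces $[M^f,M^f]=0$, hence $M^f\equiv 0$ and $f=0$.

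The main obstacle I expect is assertion (i), specifically the truncation-and-approximation argument: one must pass from an admissible integrand $\delta$ with $h\le\delta\bullet X_T$ to genuinely \emph{bounded} $\q$-martingales, and the natural truncation $(\delta\bullet X)^c$ of the stochastic integral need not itself be a stochastic integral with respect to $X$, so one cannot directly conclude it is a $\q$-martingale — one needs instead to exploit that $\delta\bullet X$ is a $\q$-local martingale bounded below by the constant $\mathrm{ess\,inf}\,h>-\infty$, hence a $\q$-supermartingale, and to use an optional-stopping / dominated-convergence argument at stopping times $\tau_c=\inf\{t:\delta\bullet X_t\ge c\}$ to produce bounded martingales $M^c$ with $M^c_T=(\delta\bullet X_T)\wedge c$ converging weak star to something dominating $h$. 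A secondary technical point is the boundedness bookkeeping in (iv): after the Jacka-type decomposition one must re-truncate $\gamma_1\bullet X$ and $\gamma_2\bullet Y$ without destroying the representation, which is handled by first reducing via (i) to approximating bounded martingales and then localizing. The remaining parts (ii), (iii), (v) are short once (i) and the linear structure of $\q$ are in hand.
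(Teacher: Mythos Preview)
Your treatment of (ii), (iii), and (v) is fine and matches the paper's. The substantive differences are in (i) and especially (iv).

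For (i), the paper does not attempt the stopping-time truncation you sketch. It invokes Kramkov's optional decomposition directly: any $h\in\A$ can be written as $h=\ce(h)+M_T-C_T$ with $M$ a local $\q$-martingale and $C$ increasing, where $\ce$ is the sublinear expectation associated to $\q$. One then sets $h^n=\ce(h)+M^n_T-C^n_T$ with $M^n$, $C^n$ truncated versions; since $\ce(h)\le 0$ and $C^n_T\ge 0$, this lies in $\k(\A)+\Linf_-$ and converges to $h$. Your first step ``$h\le\delta\bullet X_T$ for some admissible $\delta$'' is in fact exactly the superhedging/optional decomposition theorem, so you are implicitly using Kramkov anyway; invoking it explicitly in the $M_T-C_T$ form spares you the delicate stopping-time argument you flag as the main obstacle.

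For (iv), your approach via Jacka's representation has a genuine gap that is not repaired by the localization you propose. Once you write the bounded $\Ml(X,Y)$-martingale as $M=\gamma_1\bullet X+\gamma_2\bullet Y$, the individual pieces $\gamma_1\bullet X$ and $\gamma_2\bullet Y$ need not be bounded. Stopping both at a common localizing time gives bounded pieces summing to $M^{\tau_n}_T$, not to $h$, so you are forced into a limit; but nothing guarantees that $\k(\A)+\k(\B)$ is closed in any topology in which that limit exists. ``First reducing via (i)'' does not help: (i) describes $\A$ (or $\A+\B$) itself, not the decomposition of an element of $\k(\A+\B)$ into pieces lying in $\k(\A)$ and $\k(\B)$. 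The paper's argument avoids all of this by a positivity trick: take $h\in\k(\A+\B)$ and write $h=f+g$ with $f\in\A$, $g\in\B$; choose local $\q$- and $\q^{\B}$-martingales $M,N$ with $f\le M_T$, $g\le N_T$; for an equivalent $\bq\in\Ml^e(X,Y)$ one has $0=\bq(h)\le\bq(M_T+N_T)\le 0$, hence $h=M_T+N_T$; then $(M_T-f)+(N_T-g)=0$ with both summands nonnegative forces $f=M_T\in\k(\A)$ and $g=N_T\in\k(\B)$. No truncation, no limit, no closedness issue.
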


\begin{proof} $(i)$ The inclusion $\overline{\k(\A)+\Linf_-}^*\subseteq\A$ is trivial. Let us show the direct one. Let $h\in\A$, then thanks to Kramkov's theorem in \cite{Kram}, there exists a local $\q$-martingale $M$ and an increasing process $C$ such that $h=\ce(h)+M_T-C_T$ where $\ce$ is the sublinear expectation operator associated to $\q$. Therefore $h^n:=\ce(h)+M^n_T-C^n_T\in\k(\A)+\Linf_-$ where $M^n$ and $C^n$ are the truncated versions of $M$ and $C$ respectively. Since $h^n$ converges to $h$, then $h\in \overline{\k(\A)+\Linf_-}^*$.

Assertions $(ii)$ and $(iii)$ are trivial.

$(iv)$ The inclusion $\k(\A)+\k(\B)\subseteq\k(\A+\B)$ is a consequence of assertion $(iii)$. For the other inclusion, let $h\in\k(\A+\B)$, then $h=f+g$ with $f\in\A$ and $g\in\B$. There exists then a local $\q$-martingale $M$ and a local $\q^{\B}$-martingale $N$ such that $f\leq M_T$ and $g\leq N_T$. So $h\leq M_T+N_T$ and since $\bq(h)=0$ and $\bq(M_T+N_T)\leq 0$ for some $\bq\in \Ml^e(X,Y)$, then $h=M_T+N_T$. Therefore $0=(M_T-f)+(N_T-g)$, we conclude that $f=M_T\in\k(\A)$ and $g=N_T\in\k(\B)$, so $h\in\k(\A)+\k(\B)$.

$(v)$ Let $h\in\k(\A)\cap\k(\B)$. Then $h=M^f_T=M^g_T$ for some $f\in\k(\A)$ and $g\in\k(\B)$. Since there exists $\bq\in\Ml^e(X,Y)$, then $M^f=M^g$. Since $M^f=\al^f\bullet X$ and $M^g=\al^g\bullet Y$ for two vector valued predictable processes $\al^f$ and $\al^g$, we conclude that $M^f=0$ and therefore $h=0$.
\end{proof}

Now we state the existence of a complement set with the orthogonality property.

\begin{thm}\label{tt2}Let $\A=\A(X)\in\b_n$ and $\B=\A(Y)\in\b_r(\A)$. Then three exists $s\leq n$ and a unique $Y'\in\cc_s$ up to a class of equivalence, such that:
\begin{enumerate}
\item $Y'$ is orthogonal to $Y$.
\item $\A(Y')$ is a complement of $\B$ in $\A$.
\item $\dd(\A(Y'))=\dd(\A)-\dd(\B)$.
\end{enumerate}

\end{thm}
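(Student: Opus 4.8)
The plan is to reduce Theorem \ref{tt2} to the constructions already carried out in Theorem \ref{ttt5} and to the algebraic facts about $\k(\A)$ assembled in Proposition \ref{pop}. Since $\B=\A(Y)\subseteq\A=\A(X)$, Theorem \ref{ttt5} directly produces an integer $s\le n$, matrices $\theta\in\m_{r,n}(\cp)$ and $\theta'\in\m_{s,n}(\cp)$, and the process $Y'=\theta'\bullet X\in\cc_s$ such that $Y'$ is orthogonal to $Y$, $W(X)=W(Y).\theta+W(Y').\theta'$ with $Y\eqf_{\,\Omega\times\T}\theta\bullet X$, and $\dd(X)=\dd(Y)+\dd(Y')$. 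The last identity is exactly assertion (3) once we recall the notational convention $\dd(\A)=\dd(X)$, $\dd(\B)=\dd(Y)$, $\dd(\A(Y'))=\dd(Y')$. So the substantive points left to verify are (2), that $\A(Y')$ is a \emph{complement} of $\B$ in $\A$, i.e. that $\A=\B+\A(Y')$ and that $\A(Y')$ is minimal with this property, and (1)'s uniqueness clause.

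For (2), first note $\A=\B+\A(Y')$: the identity $W(X)=W(Y).\theta+W(Y').\theta'$ means that every integrand of $X$ splits as $\beta\theta+\beta'\theta'$ with $\beta\in\v(Y)$, $\beta'\in\v(Y')$, hence by Proposition \ref{pp2} applied to $X$ we get $\A(X)=\A(g_X\bullet X)\subseteq\A(Y)+\A(Y')=\B+\A(Y')$, while the reverse inclusion is immediate since both $Y=\theta\bullet X$ and $Y'=\theta'\bullet X$ are stochastic integrals of $X$, so $\A(Y),\A(Y')\subseteq\A(X)$; this also shows $\B+\A(Y')\in\b$ (it equals $\A(X,X)=\A(X)$, or invoke Theorem \ref{xt1}(3)). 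For minimality, suppose $\A=\B+\D$ with $\D\in\b(\A(Y'))$, say $\D=\A(Z)$ with $Z=\eta\bullet Y'$. Passing to $\k$-spaces via Proposition \ref{pop}(iv) gives $\k(\A)=\k(\B)+\k(\D)$ while also $\k(\A)=\k(\B)+\k(\A(Y'))$ with $\k(\D)\subseteq\k(\A(Y'))$; because $Y'$ is orthogonal to $Y$, Proposition \ref{pop}(v) gives $\k(\B)\cap\k(\A(Y'))=\{0\}$, so the sum $\k(\B)+\k(\A(Y'))$ is direct and $\k(\D)$ must fill the whole complementary summand $\k(\A(Y'))$; hence $\k(\D)=\k(\A(Y'))$ and then $\D=\A(Y')$ by Proposition \ref{pop}(i). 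This argument in fact simultaneously delivers uniqueness: if $\tilde Y'\in\cc_{\tilde s}$ also satisfies (1)--(2), then $\A(\tilde Y')$ is a minimal complement orthogonal to $Y$, and the same directness argument forces $\k(\A(\tilde Y'))=\k(\A(Y'))$, whence $\A(\tilde Y')=\A(Y')$; since $Y'$ (resp. $\tilde Y'$) is recovered, up to an invertible predictable change of integrand, from $\A(Y')$ via its $\phi$-arrangement and plug-in basis as in Proposition \ref{pp2}, the two processes agree up to the stated equivalence, and in particular $\tilde s=s=\essup\dd(Y')$.

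The step I expect to be the main obstacle is the minimality/uniqueness part, specifically making rigorous the claim that the orthogonality of $Y'$ to $Y$ forces the decomposition $\k(\A)=\k(\B)\oplus\k(\A(Y'))$ to be ``rigid'' enough that no proper $\b$-subset of $\A(Y')$ can serve as a complement. The potential subtlety is that $\k(\D)\subseteq\k(\A(Y'))$ together with $\k(\B)\oplus\k(\D)=\k(\A)=\k(\B)\oplus\k(\A(Y'))$ yields $\k(\D)=\k(\A(Y'))$ only because the ambient sum is direct; one must be a little careful that $\k(\D)$ is genuinely a linear subspace (Proposition \ref{pop}(ii)) and that equality of $\k$-spaces upgrades to equality of acceptance sets (Proposition \ref{pop}(i)), and one should double-check that $\D\in\b(\A(Y'))$ indeed has the form $\A(Z)$ with $Z$ a stochastic integral of $Y'$, which is the content of the earlier structure theorems (Theorem 3.3 in \cite{berk2}) already used repeatedly above. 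Once these bookkeeping points are in place the proof is essentially a transcription of Theorem \ref{ttt5} plus Proposition \ref{pop}, so the remaining work is routine.
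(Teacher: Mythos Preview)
Your existence argument and the derivation of (3) via Theorem \ref{ttt5}, together with your verification that $\A=\B+\A(Y')$, match the paper's route. Your minimality argument, however, takes a different path from the paper's: you work directly with the $\k$-spaces and the direct sum $\k(\A)=\k(\B)\oplus\k(\A(Y'))$, whereas the paper argues via the dimension inequality $\dd(\C)\ge\dd(\A)-\dd(\B)=\dd(\A(Y'))$ and then invokes Proposition \ref{pe1} (using $\C\subseteq\A(Y')$) to conclude $\C=\A(Y')$. Your route is self-contained and sidesteps the subadditivity of $\dd$, which in the paper is only established afterwards in Theorem \ref{ppp}.

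There is, however, a genuine gap in your uniqueness step. The phrase ``the same directness argument'' does not transfer: in the minimality step the key input was the containment $\k(\D)\subseteq\k(\A(Y'))$ (coming from $\D\in\b(\A(Y'))$), and that is exactly what is missing for a competitor $\tilde Y'$. Knowing merely that both $\k(\A(Y'))$ and $\k(\A(\tilde Y'))$ are direct-sum complements of $\k(\B)$ in $\k(\A)$ does not force them to coincide --- two algebraic complements of a subspace are in general distinct. What singles out a unique complement here is \emph{orthogonality in the martingale sense}, not mere directness: for $h\in\k(\A(\tilde Y'))$ write $h=f+g$ with $f\in\k(\B)$ and $g\in\k(\A(Y'))$ via Proposition \ref{pop}(iv); since $\tilde Y'\perp Y$ and $Y'\perp Y$ one has $h\perp f$ and $g\perp f$, hence $f=h-g\perp f$, so $f=0$ and $h=g\in\k(\A(Y'))$. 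This gives $\k(\A(\tilde Y'))\subseteq\k(\A(Y'))$, and by symmetry equality, whence $\A(\tilde Y')=\A(Y')$ by Proposition \ref{pop}(i). This is precisely the paper's uniqueness argument, and it is the step your sketch glosses over.
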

\begin{proof}Thanks to Theorem \ref{ttt5}, there exists $s\leq n$ and $Y'\in\cc_s$ orthogonal to $Y$ such that $\dd(X)=\dd(Y)+\dd(Y')$ and $\A=\B+\A(Y')$. Now let $\C\in\b(\A(Y'))$ such that $\A=\B+\C$, so $\dd(\C)\geq \dd(\A)-\dd(\B)=\dd(\A(Y'))$ and therefore $\C=\A(Y')$ thanks to Proposition \ref{pe1}. To show uniqueness of $\A(Y')$, we consider $\C=\A(U)\in\b(\A)$ such that $U$ satisfies the same assertions as $Y'$. Let $h\in\k(\C)$ and since $\C\subseteq\A=\B+\A(Y')$, then by applying assertion $(iv)$ in Proposition \ref{pop}, there exists $f\in\k(\B)$ and $g\in\k(\A(Y'))$ such that $h=f+g$. We deduce that $h-g=f$ with $h-g\perp f$, so $h=g\in\A(Y')$ and therefore $\C\subseteq\A(Y')$. In the same way we show the other inclusion and conclude that $\C=\A(Y')$.

\end{proof}
We shall write $\B\perp\C$ if $\B=\A(X)$, $\C=\A(Y)$ and $X\perp Y$.

\begin{defi}\label{d8}The set $\A(Y')$ in Theorem \ref{tt2} will be called the orthogonal complement of $\B$ in $\A$ and will be denoted by $c(\A,\B)$. We write in this case $\A=\B\oplus c(\A,\B)$.
\end{defi}

Some consequences of Theorem \ref{tt2} are given below. We start by stating a general formula for the random dimension of the sum of two subsets of $\b(\A)$ with $\A\in\b_n$.

\begin{thm}\label{ppp}Let $\A\in\b_n$ and $\B,\C\in\b(\A)$. Then $\dd(\B+\C)=\dd(\B)+\dd(\C)-\dd(\B\cap \C)$.
\end{thm}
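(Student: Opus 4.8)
The natural model to keep in mind is the rank/nullity (Grassmann) formula for finite-dimensional vector spaces, and the plan is to transport the identity $\dd(\B+\C)=\dd(\B)+\dd(\C)-\dd(\B\cap\C)$ to the random setting by reducing everything to the orthogonal complement machinery of Theorem \ref{tt2}. First I would observe that $\B\cap\C\in\b(\A)$ (an intersection of acceptance sets whose test-probability sets are martingale-measure sets is again of this form), so that $\dd(\B\cap\C)$ is defined and $\B\cap\C\in\b(\B)$ as well as $\b(\C)$. Then I would apply Theorem \ref{tt2} inside $\B$: write $\B=(\B\cap\C)\oplus c(\B,\B\cap\C)$, and call $\C'=c(\B,\B\cap\C)$, so that $\dd(\C')=\dd(\B)-\dd(\B\cap\C)$ and $\C'\perp(\B\cap\C)$. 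The goal then becomes to show $\B+\C=\C\oplus\C'$, i.e.\ that $\C'$ is an orthogonal complement of $\C$ inside $\B+\C$; once that is established, Theorem \ref{tt2}(3) gives $\dd(\B+\C)=\dd(\C)+\dd(\C')=\dd(\C)+\dd(\B)-\dd(\B\cap\C)$, which is the claim.

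**Key steps in order.** (1) Show $\B+\C\in\b(\A)$ — this follows since $\B=\A(X_\B)$, $\C=\A(X_\C)$ for some $X_\B,X_\C\in\cc$ and $\B+\C=\A(X_\B,X_\C)$ by Lemma 4.3 in \cite{berk} (the same fact used in Proposition \ref{p4}). (2) Produce $\C'$ as above via Theorem \ref{tt2}. (3) Prove $\B+\C=\C+\C'$: the inclusion $\C+\C'\subseteq\B+\C$ is clear since $\C'\subseteq\B$; for the reverse, note $\B=(\B\cap\C)+\C'\subseteq\C+\C'$, hence $\B+\C\subseteq\C+\C'$. (4) Prove orthogonality $\C\perp\C'$ in $\B+\C$: here I would use assertion (v)/(iv) of Proposition \ref{pop}. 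Write $\C'=\A(Y')$ with $Y'$ orthogonal to a generator of $\B\cap\C$; one needs $Y'$ orthogonal to a generator of $\C$. The cleanest route is to work at the level of the vector spaces $\k(\cdot)$: since $\k(\B\cap\C)=\k(\B)\cap\k(\C)$ (assertion (iii) gives $\subseteq$; for $\supseteq$ use the $\b(\A)$ structure and a common equivalent martingale measure, as in the proof of (iv)–(v) of Proposition \ref{pop}), and since $\k(\C')$ is, by construction in Theorem \ref{tt2}, the algebraic orthogonal complement of $\k(\B\cap\C)$ inside $\k(\B)$, any $g\in\k(\C')$ satisfies $g\perp\k(\B\cap\C)$ and $g\in\k(\B)$; combined with $\k(\C)\cap\k(\B)=\k(\B\cap\C)$, a decomposition argument identical to the uniqueness part of the proof of Theorem \ref{tt2} shows $\k(\C')\perp\k(\C)$ within $\k(\B+\C)=\k(\B)+\k(\C)$. (5) Conclude via Theorem \ref{tt2}(3) applied to $\C$ as a subset of $\B+\C$ with orthogonal complement $\C'$: strictly, one should check $\C'$ is \emph{the} orthogonal complement, which follows from the uniqueness clause of Theorem \ref{tt2} once orthogonality, the complement property $\B+\C=\C+\C'$, and minimality of $\C'$ are in hand; minimality of $\C'$ in $\b(\B+\C)$ with $\C+\C'=\B+\C$ follows by the same minimality-via-dimension argument used there (any smaller complement would force $\dd<\dd(\B+\C)$).

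**Main obstacle.** The delicate point is step (4) together with the need, in step (5), to match the \emph{minimality} of $\C'$ as a complement of $\C$ in $\B+\C$ with its construction as $c(\B,\B\cap\C)$ — a priori $\C'$ is only known to be minimal relative to $\B\cap\C$ inside $\B$, not relative to $\C$ inside $\B+\C$. Reconciling these requires carefully chasing the identities $\k(\B+\C)=\k(\B)+\k(\C)$, $\k(\B\cap\C)=\k(\B)\cap\k(\C)$, and the algebraic-orthogonality statement $\k(\B)=\k(\B\cap\C)\oplus\k(\C')$ to deduce $\k(\B+\C)=\k(\C)\oplus\k(\C')$ with $\k(\C)\perp\k(\C')$; granting those linear-algebra-over-$\cp$ facts (each provable by the common-equivalent-measure trick of Proposition \ref{pop}), the dimension count is then immediate from $\dd=$ essential supremum of the pointwise algebraic dimension of $\k(\cdot)$ read through $W(\cdot)$, i.e.\ from Proposition \ref{pe1} and Lemma \ref{l2}. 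I would also double-check the edge cases where $\tp(\phi_0)>0$ on parts of $\Omega\times\T$, where all dimensions drop simultaneously and the formula holds trivially on that predictable set, patching by $\oplus$ over the predictable ranking maps if needed.
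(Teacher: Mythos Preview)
Your overall strategy --- pass to a complement of $\B\cap\C$ and then add dimensions --- is exactly the paper's reduction, and your steps (1)--(3) and the identity $\k(\B\cap\C)=\k(\B)\cap\k(\C)$ are fine. The gap is in step (4): the orthogonality $\C'\perp\C$ you need in order to invoke Theorem \ref{tt2}(3) simply fails. Take the paper's own Example \ref{ex5}: with $\B=\A(W^1)$ and $\C=\A(W^1+W^2)$ one has $\B\cap\C=\Linf_-$, hence $\C'=c(\B,\B\cap\C)=\B=\A(W^1)$, and $[W^1,W^1+W^2]_t=t\neq 0$, so $\C'\not\perp\C$. Your inference ``$g\perp\k(\B\cap\C)$ and $\k(\C)\cap\k(\B)=\k(\B\cap\C)$ imply $g\perp\k(\C)$'' is the false step --- knowing that $g$ is orthogonal to the \emph{intersection} says nothing about orthogonality to the rest of $\k(\C)$. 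Consequently $\C'$ need not be $c(\B+\C,\C)$, and you cannot read off $\dd(\B+\C)=\dd(\C)+\dd(\C')$ from Theorem \ref{tt2}(3).

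What does survive is the weaker fact $\k(\C)\cap\k(\C')=\{0\}$ (by exactly the argument the paper gives with roles swapped: any $f$ in both lies in $\B\cap\C$ and is orthogonal to $\B\cap\C$, hence vanishes). The missing ingredient is therefore a direct proof that $\k(\B)\cap\k(\C)=\{0\}$ forces $\dd(\B+\C)=\dd(\B)+\dd(\C)$ \emph{without} assuming bracket-orthogonality. The paper supplies this as a separate first case: on each cell $F=\phi_i(X_\B)\cap\phi_j(X_\C)$ of the joint predictable ranking partition, the trivial-intersection hypothesis upgrades to the property $(\gamma)$ for the concatenated arrangement process $(\1_F\bullet U_{X_\B}^{1:i},\1_F\bullet U_{X_\C}^{1:j})$, which gives pointwise additivity of $\dd$ on $F$; summing over cells yields $\dd(X_\B,X_\C)=\dd(X_\B)+\dd(X_\C)$. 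Only after this base case does the paper perform your reduction (with $\D=c(\C,\B\cap\C)$), using merely $\k(\B)\cap\k(\D)=\{0\}$ rather than any orthogonality between $\B$ and $\D$. If you insert that $(\gamma)$-on-cells argument in place of your step (4), the rest of your outline goes through.
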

\begin{proof}Let $X,Y\in\cc:=\cup_{k\geq 1}\cc_k$ such that $\B=\A(X)$ and $\C=\A(Y)$. We suppose first that $\k(\B)\cap\k(\C)=\{0\}$. We shall show that for all $i\leq r(X)$ and $j\leq r(Y)$, the process $(\1_F\bullet U_X^{1:i},\1_F\bullet U_Y^{1:j})$ satisfies the property $(\gamma)$ for $F=\phi_i(X)\cap\phi_j(Y)$. Let two vector valued predictable processes $\al$ and $\beta$ such that $\1_F\al\bullet U_X^{1:i}+\1_F\beta\bullet U_Y^{1:j}=0$. Then by truncating if necessary we get that $f:=\left(\1_F\al\bullet U_X^{1:i}\right)_T\in \k(\B)\cap\k(\C)$ and therefore $f=0$ which means that $\1_F\al\bullet U_X^{1:i}=\1_F\beta\bullet U_Y^{1:j}=0$ and by consequent $\al \1_F=\beta\1_F=0$. We deduce that $\1_F\dd(X)+\1_F\dd(Y)=\1_F\dd(U_X^{1:i})+\1_F\dd(U_Y^{1:j})=\1_F\dd(U_X^{1:i},U_Y^{1:j})=\1_F\dd(X,Y)$, therefore $\dd(X)+\dd(Y)=\dd(X,Y)$.

Now for the general case we write $\B+\C=\B+\D$ with $\D=c(\C,\C\cap\B)$. We shall verify that $\k(\B)\cap\k(\D)=\{0\}$. Let $f\in \k(\B)\cap\k(\D)$, then $f\in\B\cap\C$ and $f\perp\B\cap\C$, so $f=0$. We apply the first case and deduce that $\dd(\B)+\dd(\D)=\dd(\B+\D)=\dd(\B+\C)$ and since $\dd(\D)=\dd(\C)-\dd(\B\cap\C)$, we conclude the result.
\end{proof}

A consequence of Theorem \ref{ppp}, characterizing the complementarity property, is given below.

\begin{cor}\label{cccc}Let $\A\in\b_n$ and $\B,\C\in\b(\A)$. Then $\C$ is a complement of $\B$ in $\A$ iff $\A=\B+\C$ and $\B\cap\C=\Linf_-$.
\end{cor}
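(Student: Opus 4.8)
The plan is to prove Corollary \ref{cccc} as a direct application of Theorem \ref{ppp} together with Definition \ref{d4} and the basic properties of the operator $\k$ recorded in Proposition \ref{pop}. Recall that $\C$ is by definition a complement of $\B$ in $\A$ precisely when $\C$ is a \emph{minimal} element of $\b(\A)$ with $\A=\B+\C$; so the content of the corollary is that minimality of such a $\C$ is equivalent to the purely intersection-theoretic condition $\B\cap\C=\Linf_-$. I would treat the two implications separately.

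For the forward direction, suppose $\C$ is a complement of $\B$ in $\A$, so in particular $\A=\B+\C$ and $\C$ is minimal with this property. Set $\D=c(\C,\C\cap\B)$, the orthogonal complement of $\C\cap\B$ inside $\C$ given by Theorem \ref{tt2}, which lies in $\b(\C)\subseteq\b(\A)$ and satisfies $\C=(\C\cap\B)\oplus\D$, hence $\dd(\D)=\dd(\C)-\dd(\C\cap\B)$ and $\C\cap\B\subseteq\B$. Then $\B+\D\supseteq\B+(\C\cap\B)+\D=\B+\C=\A$, and $\B+\D\subseteq\A$, so $\A=\B+\D$ with $\D\in\b(\C)$; minimality of $\C$ forces $\D=\C$, whence $\dd(\C\cap\B)=\dd(\C)-\dd(\D)=0$, and since $\C\cap\B\in\b$ a set of dimension $0$ is trivial (Definition \ref{d2}(4) and Remark \ref{r1}), i.e. $\B\cap\C=\Linf_-$. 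One technical point to check here is that $\C\cap\B$ is again a member of $\b(\A)$, so that the machinery of Theorem \ref{tt2} applies to it; this should follow since $\k(\C\cap\B)=\k(\C)\cap\k(\B)$ is a vector space of bounded $\q$-martingale terminal values which is the $\k$-image of an acceptance set generated by a finite family of adapted processes, using Proposition \ref{pop}(i)--(ii).

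For the converse, suppose $\A=\B+\C$ and $\B\cap\C=\Linf_-$. I must show $\C$ is minimal, i.e. that any $\D\in\b(\C)$ with $\A=\B+\D$ equals $\C$. By Theorem \ref{ppp} applied to $\B,\C\in\b(\A)$, and using $\dd(\B\cap\C)=\dd(\Linf_-)=0$, we get $\dd(\A)=\dd(\B+\C)=\dd(\B)+\dd(\C)$. Applying Theorem \ref{ppp} again to $\B,\D$ gives $\dd(\A)=\dd(\B+\D)=\dd(\B)+\dd(\D)-\dd(\B\cap\D)\leq\dd(\B)+\dd(\D)$. Combining, $\dd(\C)\leq\dd(\D)$; but $\D\subseteq\C$ gives the reverse inequality $\dd(\D)\leq\dd(\C)$ (monotonicity of $\dd$, e.g. via Lemma \ref{l2} and Proposition \ref{pe1}, since $\A(\D)\subseteq\A(\C)$ implies $W(\D)\subseteq W(\C)$ pointwise), together with $\dd(\B\cap\D)\geq 0$. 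Hence $\dd(\D)=\dd(\C)$ and $\dd(\B\cap\D)=0$, so $\D\subseteq\C$ with equal random dimension, and Proposition \ref{pe1}, in its $(2)\Rightarrow(1)$ form (valid because $\A(\D)\subseteq\A(\C)$), yields $\D=\C$.

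The main obstacle I anticipate is bookkeeping around the dimension function rather than anything deep: I need the facts that $\dd$ is monotone under inclusion of acceptance sets in $\b$, that $\dd(\A)=0$ iff $\A=\Linf_-$ for $\A\in\b$, and that $\B\cap\C$ genuinely lies in $\b(\A)$ so that $\dd(\B\cap\C)$ is defined and Theorem \ref{ppp} applies verbatim. Monotonicity and the vanishing criterion come from Proposition \ref{pp2} (the rank is the essential sup of $\dd(X)$, and $\dd(X)\equiv 0$ forces $r(X)=0$, i.e. $\A(X)=\Linf_-$) combined with Proposition \ref{pe1}; the membership of $\B\cap\C$ in $\b$ is the one place I would write a short explicit argument, realizing $\k(\B)\cap\k(\C)$ as $\k$ of an acceptance set associated to a finite family of processes via Proposition \ref{pop}. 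Once these are in place the corollary is two short paragraphs.
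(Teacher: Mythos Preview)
Your proof is correct and follows essentially the same route as the paper's: both directions use the orthogonal complement $\D=c(\C,\B\cap\C)$ for the forward implication and the combination of Theorem~\ref{ppp} with Proposition~\ref{pe1} for the converse. The paper's converse is marginally cleaner, noting directly that $\D\cap\B\subseteq\C\cap\B=\Linf_-$ so that $\dd(\D)=\dd(\A)-\dd(\B)=\dd(\C)$ without passing through the intermediate inequality.
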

\begin{proof}For the direct implication, let $\D$ be the orthogonal complement of $\B\cap\C$ in $\C$. Then $\A=\B+\D$ and $\D\subseteq\C$ and since $\C$ is minimal, we deduce that $\D=\C$ and therefore $\B\cap\C=\Linf_-$. For the converse let $\D\in\b$ satisfying $\A=\B+\D$ and $\D\subseteq\C$, then $\D\cap\B=\Linf_-$ and $\dd(\D)=\dd(\A)-\dd(\B)=\dd(\C)$ and therefore $\C=\D$ thanks to Proposition \ref{pe1}.
\end{proof}

The second consequence is a generalization of the well known Kunita-Watanabe martingale decomposition theorem.

\begin{thm}\label{to1}Let suppose that $(X,Y)$ is an $\R^{d}\times\R^k$-valued local martingale. Then there exists a predictable matrix valued process $\al$ and a vector valued local martingale $L$, orthogonal to $X$ such that $Y=\al\bullet X+L$.
\end{thm}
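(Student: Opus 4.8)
The plan is to realise the pair $(X,Y)$ as a generator of an element of $\b$ and then feed it into the orthogonal decomposition supplied by Theorem \ref{tt2}. Since $(X,Y)$ is a $\P$-local martingale, $\P\in\Ml(X,Y)$, so $(X,Y)\in\cc_{d+k}$; by Proposition \ref{p4} the set $\A(X,Y)$ admits a special decomposition of order $d+k$, hence $\A(X,Y)\in\b_n$ with $n=\dim(\A(X,Y))=r(X,Y)\le d+k$. Similarly $\A(X)\in\b_p$ for some $p\le d$, and since $\Ml(X,Y)\subseteq\Ml(X)$ we get $m(\q^X)\subseteq m(\q^{(X,Y)})$, that is $\A(X)\subseteq\A(X,Y)$ and $\A(X)\in\b_p(\A(X,Y))$. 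Thus Theorem \ref{tt2} applies with $\A(X,Y)$ in the role of ``$\A$'' and $\A(X)$ in the role of ``$\B$''.

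I would then invoke Theorem \ref{tt2} to obtain an integer $s\le n$ and a process $Y'\in\cc_s$ such that $Y'$ is orthogonal to $X$ and $\A(Y')$ is a complement of $\A(X)$ in $\A(X,Y)$; in particular $\A(X,Y)=\A(X)+\A(Y')$. Applying Lemma 4.3 of \cite{berk} to the concatenated process $(X,Y')$ gives $\A(X)+\A(Y')=\A(X,Y')$, hence $\A(X,Y)=\A(X,Y')$ and therefore $m(\q^{(X,Y)})=m(\q^{(X,Y')})$. Moreover $\A(Y')\subseteq\A(X,Y)$ forces $Y'\in m(\q^{(X,Y)})$, so $Y'$ is itself a $\P$-local martingale and $(X,Y')\in\cc_{d+s}$.

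Now, $(X,Y)$ being a $\P$-local martingale, we have $Y\in m(\q^{(X,Y)})=m(\q^{(X,Y')})$, so Jacka's theorem (\cite{jacka}) applied to $(X,Y')$ produces a predictable matrix process $(\al\mid\gamma)$, with $\al\in\m_{k,d}(\cp)$, such that $Y=\al\bullet X+\gamma\bullet Y'$. I then set $L:=\gamma\bullet Y'$, which is a local martingale, and observe that orthogonality of $L$ to $X$ is immediate: since $Y'\perp X$ means $[X^i,(Y')^l]=0$ for all $i,l$, it follows that $[X^i,L]=0$ for all $i$, i.e. $L$ is orthogonal to $X$. Hence $Y=\al\bullet X+L$ is the required decomposition. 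If $Y_0\neq0$ one first runs the argument on $Y-Y_0$ and then absorbs the constant $Y_0$ back into $L$, which leaves the quadratic covariations unchanged.

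The dimension bookkeeping and the covariation check for $L$ are routine; the step requiring genuine care, and the one I regard as the main obstacle, is the application of Theorem \ref{tt2} --- one has to make sure that $\A(X,Y)$ and $\A(X)$ really lie in $\b_n$ and $\b_p(\A(X,Y))$ respectively, and, crucially, that the complement it delivers can be represented by a process $Y'$ which is itself a $\P$-local martingale, since only then is Jacka's representation theorem legitimately applicable to $(X,Y')$. Beyond this I do not anticipate any deeper difficulty.
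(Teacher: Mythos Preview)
Your proposal is correct and follows essentially the same route as the paper: apply Theorem~\ref{tt2} to $\A(X)\subseteq\A(X,Y)$ to produce a $Y'$ orthogonal to $X$ with $\A(X,Y)=\A(X,Y')$, then invoke Jacka's representation for $Y\in m(\q^{(X,Y')})$ and set $L=\gamma\bullet Y'$. Your additional verifications (that $\A(X,Y)\in\b$, that $Y'$ is a $\P$-local martingale, the explicit covariation check for $L\perp X$) are exactly the details the paper leaves implicit.
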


\begin{proof} First $(X,Y)\in\cc_{d+k}$ and $\A(X)\subseteq\A(X,Y)$, so by Theorem \ref{tt2}, there exists some $Y'\in\cc$, orthogonal to $X$ such that $\A(X,Y)=\A(X,Y')$. Since $Y$ is a local $\Ml(X,Y)$-martingale, then it is a local $\Ml(X,Y')$-martingale. By Jacka's theorem in \cite{jacka}, there exists a predictable process $(\al,\beta)$ such that $Y=\al\bullet X+\beta\bullet Y'=\al\bullet X+L$ with $L:=\beta\bullet Y'\perp X$.

\end{proof}

The third consequence is a Gram-Schmidt type procedure applied to a process $X\in\cc_n$.

\begin{thm}\label{ccc5"}Let $X\in\cc_n$. Then there exists $Y=(Y^1,\ldots,Y^n)\in\cc_n$ such that each $Y^i$ and $Y^j$ are orthogonal for $i\neq j$ and $\A(X)=\A(Y)$.
\end{thm}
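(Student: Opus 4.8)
The plan is to build the orthogonal family $Y=(Y^1,\ldots,Y^n)$ inductively, mimicking the classical Gram--Schmidt algorithm but using the orthogonal complement construction of Theorem \ref{tt2} as the ``subtract the projection'' step. First I would set $Y^1=X^1$ (so $\A(Y^1)=\A(X^1)$ is a component), and having constructed orthogonal processes $Y^1,\ldots,Y^{k}$ with $\A(Y^1,\ldots,Y^k)=\A(X^1,\ldots,X^k)$, I would apply Theorem \ref{tt2} to the pair $\B:=\A(Y^1,\ldots,Y^k)\subseteq\A(X^1,\ldots,X^{k+1})=:\A$: this yields a process $Z^{k+1}\in\cc_s$, orthogonal to $(Y^1,\ldots,Y^k)$, whose acceptance set $\A(Z^{k+1})$ is a complement of $\B$ in $\A$, i.e. $\A(X^1,\ldots,X^{k+1})=\A(Y^1,\ldots,Y^k)\oplus\A(Z^{k+1})$.

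The next step is to reduce $Z^{k+1}$ to a single scalar process $Y^{k+1}\in\cc_1$ with $\A(Y^{k+1})=\A(Z^{k+1})$. This is where I would invoke the component/rank machinery: by Proposition \ref{pe1} and Theorem \ref{tt1} the complement $\A(Z^{k+1})$ has dimension $\dd(\A)-\dd(\B)$, which is at most $1$ at each point of the sample-time space since adding one coordinate $X^{k+1}$ raises the random dimension by at most one; hence $r(Z^{k+1})\le 1$, so $\A(Z^{k+1})$ is a component and by Proposition \ref{p1} equals $\A(Y^{k+1})$ for some $Y^{k+1}\in\cc_1$. Concretely one can take $Y^{k+1}=\beta\bullet Z^{k+1}$ for a suitable bounded predictable $\beta$ collapsing the redundant coordinates, and orthogonality of $Y^{k+1}$ to each $Y^i$ ($i\le k$) is inherited from that of $Z^{k+1}$, since $[\beta\bullet Z^{k+1},Y^i]=\beta\bullet[Z^{k+1},Y^i]=0$. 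Iterating from $k=1$ to $n-1$ produces $Y=(Y^1,\ldots,Y^n)$ with $\A(Y)=\A(Y^1,\ldots,Y^n)=\A(X^1,\ldots,X^n)=\A(X)$ and $Y^i\perp Y^j$ for all $i\ne j$, because at each stage $Y^{k+1}$ is orthogonal to the whole block $(Y^1,\ldots,Y^k)$.

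The main obstacle I expect is the pairwise, rather than merely block, orthogonality: Theorem \ref{tt2} guarantees $Z^{k+1}\perp (Y^1,\ldots,Y^k)$ as a vector process, and I need to check that this gives $Z^{k+1}\perp Y^i$ for each individual $i$ and that this survives the passage to $Y^{k+1}=\beta\bullet Z^{k+1}$; this should follow since $[\,(Z^{k+1})^a,(Y^1,\ldots,Y^k)^b\,]=0$ for all components forces in particular $[\,(Z^{k+1})^a,Y^i\,]=0$, and quadratic covariation is bilinear and local in the integrand. A secondary point needing care is the bookkeeping of the random rank: one must argue that $\dd(\A(X^1,\ldots,X^{k+1}))-\dd(\A(X^1,\ldots,X^k))\le 1$ pointwise (which is immediate from $\A(X^1,\ldots,X^{k+1})=\A(X^1,\ldots,X^k)+\A(X^{k+1})$, $\dd(\A(X^{k+1}))\le 1$, and the subadditivity $\dd(\B+\C)\le\dd(\B)+\dd(\C)$ contained in Theorem \ref{ppp}), so that the complement is genuinely rank $\le 1$ and Proposition \ref{p1} applies. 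Once these are in hand the induction closes and yields the stated process $Y$.
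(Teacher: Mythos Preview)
Your proposal is correct and proceeds in the same Gram--Schmidt spirit as the paper, but the induction runs in the opposite direction. The paper works \emph{top-down}: it keeps the full set $\A(X)$ fixed and maintains a decomposition $\A(X)=\A(Y^1,\ldots,Y^k,U_k)$ with $Y^1,\ldots,Y^k,U_k$ orthogonal and $U_k\in\cc_{n-k}$; at each step Theorem \ref{tt2} is applied inside $\A(U_k)$ to peel off $Y^{k+1}:=(U_k)^1$ and a new remainder $U_{k+1}$, with the rank bound $U_{k+1}\in\cc_{n-k-1}$ coming from the observation that on $\phi_0(Y^{k+1})$ the random dimension drops by one. You instead work \emph{bottom-up}: you enlarge $\A(X^1,\ldots,X^k)$ to $\A(X^1,\ldots,X^{k+1})$ and take the orthogonal complement $Z^{k+1}$ of the former inside the latter, using the subadditivity $\dd(\B+\C)\le\dd(\B)+\dd(\C)$ from Theorem \ref{ppp} (together with $\dd(\A(X^{k+1}))\le 1$) to force $r(Z^{k+1})\le 1$ and hence $\A(Z^{k+1})=\A(Y^{k+1})$ for some $Y^{k+1}\in\cc_1$ via Proposition \ref{p1}. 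Both routes rest on Theorem \ref{tt2}; your rank-control is arguably cleaner (it avoids the $\phi_0$ argument), and your scheme tracks the classical Gram--Schmidt more literally. The checks you flag --- that block orthogonality $Z^{k+1}\perp(Y^1,\ldots,Y^k)$ yields pairwise orthogonality, and that this survives passage to $Y^{k+1}=\beta\bullet Z^{k+1}$ via bilinearity of the bracket --- are indeed routine and pose no difficulty.
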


\begin{proof}We show by induction on $k=1\ldots n-1$ that there exists $Y^1,\ldots,Y^k\in\cc_1$ and $U_{k}\in\cc_{n-k}$ such that the processes $Y^1,\ldots,Y^k,U_{k}$ form an orthogonal family and $\A(X)=\A(Y^1,\ldots,Y^k,U_{k})$.

For $k=1$, we define $Y^1=X^1$ and by applying Theorem \ref{tt2}, there exists $s\leq n$ and $U_1\in\cc_s$, orthogonal to $Y^1$ such that $\A(X)=\A(Y^1,U_1)$. We claim that $s\leq n-1$ since $\1_{\phi_0(Y^1)}\dd(X)\leq n-1$ and $\dd(U_1)=\dd(X)-\dd(Y^1)=\1_{\phi_0(Y^1)}(\dd(X)-\dd(Y^1))+\1_{\phi_1(Y^1)}(\dd(X)-\dd(Y^1))\leq n-1$. We suppose now that the induction hypothesis is true until $k$, then there exists $Y^1,\ldots,Y^k\in\cc_1$ and $U_{k}\in\cc_{n-k}$ such that the processes $Y^1,\ldots,Y^k,U_{k}$ form an orthogonal family and $\A(X)=\A(Y^1,\ldots,Y^k,U_{k})$. By applying the case $k=1$ to the process $U_k$, there exists $Y^{k+1}\in\cc_1$ and $U_{k+1}\in\cc_{n-k-1}$ such that the processes $Y^{k+1}$ and $U_{k+1}$ are orthogonal and $\A(U_k)=\A(Y^{k+1},U_{k+1})$. So the processes $Y^1,\ldots,Y^{k+1},U_{k+1}$ form an orthogonal family and $\A(X)=\A(Y^1,\ldots,Y^{k+1},U_{k+1})$.

For $k=n-1$, there exists $Y^1,\ldots,Y^{n-1}\in\cc_1$ and $U_{n-1}\in\cc_{1}$ such that the processes $Y^1,\ldots,Y^{n-1},U_{n-1}$ form an orthogonal family and $\A(X)=\A(Y^1,\ldots,Y^{n-1},U_{n-1})$. We take $Y^n=U_{n-1}$ and obtain the result.
\end{proof}

Now for a fixed $\A\in\b_n$, we study the properties of the mapping $\B\in\b(\A)\rightarrow c(\A,\B)\in\b(\A)$.

\begin{prop}\label{p61}Let $\A\in\b_n$ and $\B,\C\in\b(\A)$. Then
 \begin{enumerate}
 \item $c(\A,c(\A,\B))=\B$ and therefore $c(\A,\B)=c(\A,\C)$ iff $\B=\C$.
 \item Suppose $\C\subseteq\B$, then $c(\A,\C)=c(\A,\B)\op c(\B,\C)$.
 \item Suppose $\C\subseteq c(\A,\B)$, then $c(\A,\C)=\B\op c(c(\A,\B),\C)$ and $c(\A,\B+\C)=c(c(\A,\B),\C)$.
 \item $\C\subseteq c(\A,\B)$ iff $\B\subseteq c(\A,\C)$ and in this case we have that $c(c(\A,\B),\C)=c(c(\A,\C),\B)$.
 \item $\B\subseteq\C$ iff $c(\A,\C)\subseteq c(\A,\B)$.
 \item $\C\perp\B$ iff $\C\subseteq c(\A,\B)$.
\item Suppose $\C\subseteq\B$, then $c(\B,\C)=\B\cap c(\A,\C)$.
\end{enumerate}

\end{prop}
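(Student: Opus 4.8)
The overall strategy is to reduce everything to the orthogonal-complement structure established in Theorem \ref{tt2} and its characterization via the vector spaces $\k(\cdot)$ from Proposition \ref{pop}. The key identity to keep in mind is that, for $\B\in\b(\A)$, the set $c(\A,\B)$ is the \emph{unique} element of $\b(\A)$ whose $\k$-space is the orthogonal complement of $\k(\B)$ inside $\k(\A)$ — orthogonality being the martingale orthogonality $M^f\perp M^g$. Indeed, by Theorem \ref{tt2}, $\A=\B\op c(\A,\B)$ with $c(\A,\B)\perp\B$, so by Proposition \ref{pop}(iv)--(v) we have $\k(\A)=\k(\B)\op\k(c(\A,\B))$ as an orthogonal direct sum of vector spaces. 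Once this is in hand, every assertion (1)--(7) becomes a statement about orthogonal decompositions of the finite(-random)-dimensional inner-product structure on $\k(\A)$, and can be proved by the same elementary linear-algebra reasoning one uses for orthogonal complements of subspaces, now carried out fibrewise/predictably. Throughout I would freely use Proposition \ref{pop}(iii), which says $\B\subseteq\C\iff\k(\B)\subseteq\k(\C)$, to translate inclusions of acceptance sets into inclusions of the associated martingale vector spaces, and Proposition \ref{pe1} (together with Theorem \ref{ppp}) to recover equality of acceptance sets from equality of random dimensions once one inclusion is known.

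I would treat the assertions roughly in the stated order. For (1): from the orthogonal splitting $\k(\A)=\k(\B)\op\k(c(\A,\B))$, the space $\k(c(\A,c(\A,\B)))$ is the orthogonal complement in $\k(\A)$ of $\k(c(\A,\B))$, which is $\k(\B)$; by Proposition \ref{pop}(iii) this gives $c(\A,c(\A,\B))=\B$, and the ``therefore'' clause is immediate by applying $c(\A,\cdot)$ to both sides of $c(\A,\B)=c(\A,\C)$. For (2), with $\C\subseteq\B\subseteq\A$: decompose $\k(\A)=\k(\C)\op\k(c(\A,\C))$ and also $\k(\B)=\k(\C)\op\k(c(\B,\C))$; intersecting the first with $\k(c(\A,\C))^{\perp}$-type considerations, one checks $\k(c(\A,\C))=\k(c(\A,\B))\op\k(c(\B,\C))$ with the two summands orthogonal, and then invokes Theorem \ref{tt2}'s uniqueness (or Proposition \ref{pe1}) to identify $c(\A,\C)=c(\A,\B)\op c(\B,\C)$. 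Assertion (3) is the analogous computation with the roles arranged so that $\C\subseteq c(\A,\B)$: here one uses that $\k(\B)$ and $\k(\C)$ are orthogonal inside $\k(\A)$ and that $c(\A,\B+\C)$ has $\k$-space equal to $(\k(\B)\op\k(\C))^{\perp}$ in $\k(\A)$, which equals the orthogonal complement of $\k(\C)$ inside $\k(c(\A,\B))$, i.e. $\k(c(c(\A,\B),\C))$. Assertion (4) is the symmetry of orthogonality: $\C\perp\B$ (equivalently, after (6), $\C\subseteq c(\A,\B)$) is visibly symmetric in $\B,\C$, and the displayed equality $c(c(\A,\B),\C)=c(c(\A,\C),\B)$ is then the statement that, inside $\k(\A)$, the orthogonal complement of $\k(\B)\op\k(\C)$ does not depend on the order of removal. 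Assertions (5), (6), (7) are short: (5) is monotonicity-reversal of orthogonal complement (from $\k(\B)\subseteq\k(\C)$ pass to complements inside $\k(\A)$), using (1) for the converse; (6) is exactly the definition of $\perp$ combined with $\k(\A)=\k(\B)\op\k(c(\A,\B))$ being an \emph{orthogonal} sum, so $\k(\C)\perp\k(\B)$ inside $\k(\A)$ forces $\k(\C)\subseteq\k(c(\A,\B))$ and conversely; and (7), with $\C\subseteq\B$, follows by applying (2) and rearranging, or directly by noting $\k(c(\B,\C))=\k(\B)\cap\k(\C)^{\perp}=\k(\B)\cap\k(c(\A,\C))$ and then using Proposition \ref{pop}(iii)--(iv) to rewrite the right-hand side as $\k(\B\cap c(\A,\C))$.

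The one genuinely delicate point — and where I expect the real work to sit — is that all of this linear algebra is happening over the predictable $\sigma$-algebra with \emph{random} dimensions $\dd(\cdot)$, which can jump between the strata $\phi_k$. So ``orthogonal complement inside $\k(\A)$'' must be justified as a legitimate operation producing another element of $\b(\A)$, and identities like $(\k(\B)\op\k(\C))^{\perp}=\k(\C)^{\perp}\cap\k(\B)^{\perp}$ must hold after the predictable-measurable-selection machinery (Lemma A.4 of \cite{Sch1}, Lemma 2.5 of \cite{Sch2}, invoked in Proposition \ref{pp2}) is applied; in particular one must be sure the various $c(\A,\cdot)$'s produced actually lie in $\b(\A)$, which is guaranteed by Theorem \ref{tt2} each time it is applicable but needs care when I write down an intersection of $\k$-spaces directly (as in (7)). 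I would handle this by always, at the end of each part, reducing the claimed equality of acceptance sets to: (a) one inclusion, obtained from a $\k$-space inclusion via Proposition \ref{pop}(iii); plus (b) equality of the random dimensions $\dd$, obtained from Theorem \ref{ppp} and Theorem \ref{tt2}(3); and then concluding via Proposition \ref{pe1}. This keeps every step inside results already proved and avoids re-deriving the measurable-selection statements. I expect parts (1), (5), (6) to be essentially immediate, (2), (3), (7) to require one short dimension count each, and (4) to need a brief remark that $\B\subseteq c(\A,\C)\iff\k(\B)\perp\k(\C)$ inside $\k(\A)$, which is patently symmetric.
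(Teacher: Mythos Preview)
Your proposal is correct, but it takes a genuinely different route from the paper. The paper works almost entirely at the level of acceptance sets: for (1)--(5) it simply chains the orthogonal decompositions $\A=\B\oplus c(\A,\B)$ and $\B=\C\oplus c(\B,\C)$ and appeals to the uniqueness clause of Theorem~\ref{tt2} directly (e.g., for (2): from $\A=c(\A,\B)\oplus c(\B,\C)\oplus\C$ and $\A=c(\A,\C)\oplus\C$ one reads off $c(\A,\C)=c(\A,\B)\oplus c(\B,\C)$). It descends to $\k$-spaces only for (6), and (7) is done via (2) plus a one-line orthogonality argument. The logical dependencies also differ: the paper gets (3) from (2), (4) from (3), and (5) from (4), whereas you propose proving (4) via (6) and (5) via (1).

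Your systematic translation to $\k$-spaces is sound, and your identification of $\k(c(\A,\B))$ with the orthogonal complement of $\k(\B)$ in $\k(\A)$ is exactly what underlies the uniqueness argument in Theorem~\ref{tt2}. However, the measurable-selection concern you flag is unnecessary: by always forming sums of already-known elements of $\b(\A)$ and then invoking uniqueness, the paper never has to manufacture an element of $\b(\A)$ from a bare $\k$-subspace, so the machinery of Proposition~\ref{pp2} is not re-entered. Likewise, your dimension-counting closure (Theorem~\ref{ppp} plus Proposition~\ref{pe1}) works but is heavier than needed: once both candidates $\D_1,\D_2\in\b(\A)$ are on the table, $\k(\D_1)=\k(\D_2)$ already forces $\D_1=\D_2$ by applying Proposition~\ref{pop}(iii) in both directions. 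What your approach buys is a uniform template for all seven parts; what the paper's approach buys is brevity and the avoidance of any discussion of predictable selections or random dimensions.
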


\begin{proof}(1) We have $\B=c(\A,c(\A,\B))$ from the uniqueness of the orthogonal complement. Suppose now that $c(\A,\B)=c(\A,\C)$, then $\B=c(\A,c(\A,\B))=c(\A,c(\A,\C))=\C$.

(2) We have $\A=c(\A,\B)\oplus\B=c(\A,\B)\oplus c(\B,\C)\oplus\C$ and $\A=c(\A,\C)\oplus\C$, so by uniqueness we get $c(\A,\C)=c(\A,\B)\oplus
c(\B,\C)$.

(3) We replace in assertion (2) the set $\B$ by $c(\A,\B)$ and get the first statement. For the second one, we have $\A=c(\A,\B)\oplus\B=c(c(\A,\B),\C)\oplus\C\oplus\B$ and $\A=(\B+\C)\oplus c(\A,\B+\C)$, then by uniqueness $c(\A,\B+\C)=c(c(\A,\B),\C)$.

(4) Since there is a symmetry we shall show the direct implication. We apply assertion (3) and deduce that $\B\subseteq c(\A,\C)$ and that $c(c(\A,\B),\C)=c(\A,\B+\C)=c(\A,\C+\B)=c(c(\A,\C),\B)$. The converse implication is obtained by applying the direct implication again.

(5) Suppose $\B\subseteq\C$ and since $\C=c(\A,c(\A,\C))$, then by assertion (4) we have $c(\A,\C)\subseteq c(\A,\B)$.

(6) The inverse implication is trivial since $\B\perp c(\A,\B)$, let us show the direct one. Let $h\in\k(\C)$, we apply assertion $(iv)$ in Proposition \ref{pop} and obtain that $h=f+g$ with $f\in\k(\B)$ and $g\in\k(c(\A,\B))$. Then $h-g=f$ and $h-g\perp f$, therefore $h=g\in c(\A,\B)$.

(7) The direct inclusion is trivial. For the converse we have $c(\A,\C)=c(\A,\B)\oplus c(\B,\C)$ and $\D:=\B\cap c(\A,\C)\perp c(\A,\B)$, so $\D\subseteq c(\B,\C)$.

\end{proof}

Next we state De Morgan laws.

\begin{prop}\label{p8}Let $\A\in\b_n$ and $\B,\C\in\b(\A)$. Then
\begin{enumerate}
\item $c(\A,\B\cap\C)=c(\A,\B)+c(\A,\C)$.
\item $c(\A,\B+\C)=c(\A,\B)\cap c(\A,\C)$.
\end{enumerate}

\end{prop}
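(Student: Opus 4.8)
The plan is to derive both identities purely formally from the structural properties of the orthogonal complement map collected in Proposition \ref{p61}. Two facts suffice: $c(\A,\cdot)$ is an involution of $\b(\A)$, i.e.\ $c(\A,c(\A,\B))=\B$ (Proposition \ref{p61}(1)); and it is order-reversing, $\B\subseteq\C\Leftrightarrow c(\A,\C)\subseteq c(\A,\B)$ (Proposition \ref{p61}(5)), equivalently it satisfies the adjunction $\C\subseteq c(\A,\B)\Leftrightarrow\B\subseteq c(\A,\C)$ (Proposition \ref{p61}(4)). Since every element of $\b(\A)$ is a convex cone containing $0$, the set $\B+\C$ is the least element of $\b(\A)$ above both $\B$ and $\C$ and $\B\cap\C$ is the greatest element below both; so $+$ and $\cap$ are the join and meet of the lattice $\b(\A)$, and Proposition \ref{p8} is exactly the statement that an order-reversing involution of a lattice interchanges joins and meets.

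Concretely I would first prove (2). For $c(\A,\B+\C)\subseteq c(\A,\B)\cap c(\A,\C)$: from $\B\subseteq\B+\C$ and $\C\subseteq\B+\C$, Proposition \ref{p61}(5) gives $c(\A,\B+\C)\subseteq c(\A,\B)$ and $c(\A,\B+\C)\subseteq c(\A,\C)$. For the reverse inclusion, set $\D:=c(\A,\B)\cap c(\A,\C)\in\b(\A)$; from $\D\subseteq c(\A,\B)$ Proposition \ref{p61}(4) yields $\B\subseteq c(\A,\D)$, and likewise $\C\subseteq c(\A,\D)$, so $\B+\C\subseteq c(\A,\D)$ because $c(\A,\D)$ is a convex cone, and a second application of Proposition \ref{p61}(4) gives $\D\subseteq c(\A,\B+\C)$.

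Then I would obtain (1) from (2) via the involution: replacing $\B,\C$ by $c(\A,\B),c(\A,\C)$ in (2) and using Proposition \ref{p61}(1),
$$
c\bigl(\A,\,c(\A,\B)+c(\A,\C)\bigr)=c\bigl(\A,c(\A,\B)\bigr)\cap c\bigl(\A,c(\A,\C)\bigr)=\B\cap\C,
$$
and applying $c(\A,\cdot)$ to the two ends, together with Proposition \ref{p61}(1) once more, gives $c(\A,\B)+c(\A,\C)=c(\A,\B\cap\C)$, which is (1). (Symmetrically one may prove (1) first and read off (2); alternatively both identities can be obtained from the linear description of Proposition \ref{pop}, under which $c(\A,\cdot)$ corresponds to orthogonal complementation of subspaces of $\k(\A)$.) I do not expect a genuine obstacle: the argument is a short formal manipulation once Proposition \ref{p61} is in hand. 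The one point needing care is the bookkeeping check that every set occurring — in particular $\B+\C$, $\B\cap\C$ and $c(\A,\B)\cap c(\A,\C)$ — actually belongs to $\b(\A)$, so that $c(\A,\cdot)$ and the assertions of Proposition \ref{p61} legitimately apply to it; this rests on $\b(\A)$ being closed under $+$ and $\cap$, i.e.\ being a lattice, which is where the real (already established) content lies.
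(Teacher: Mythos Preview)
Your argument is correct, and it takes a genuinely different route from the paper. The paper proves (1) first: in the special case $\A=\B+\C$ it shows $c(\A,\B)\cap c(\A,\C)=\Linf_-$ by an orthogonality argument, then uses the dimension formula of Theorem~\ref{ppp} together with Proposition~\ref{pe1} to upgrade the easy inclusion $c(\A,\B)+c(\A,\C)\subseteq c(\A,\B\cap\C)$ to an equality; the general case is reduced to this one via the chain $c(\A,\B\cap\C)=c(\A,\B+\C)+c(\B+\C,\B\cap\C)$ from Proposition~\ref{p61}(2). Only then is (2) read off via the involution. Your approach bypasses all dimension counting: you prove (2) directly from the order-theoretic content of Proposition~\ref{p61}(4),(5) alone, and then deduce (1) by involution. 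This is shorter and isolates the De~Morgan laws as a purely lattice-theoretic consequence of $c(\A,\cdot)$ being an antitone involution, whereas the paper's proof makes (1) depend on the analytic input of Theorem~\ref{ppp} and Proposition~\ref{pe1}.

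The caveat you raise is the right one and applies equally to both proofs: one must know that $\B\cap\C$ (and hence $c(\A,\B)\cap c(\A,\C)$) lies in $\b(\A)$ so that $c(\A,\cdot)$ and Proposition~\ref{p61} may be applied to it. The paper uses this implicitly already in Theorem~\ref{ppp} (where $\dd(\B\cap\C)$ appears) and again in the inclusion $c(\A,\B)+c(\A,\C)\subseteq c(\A,\B\cap\C)$, so it is not an extra hypothesis peculiar to your argument.
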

\begin{proof}(1) We suppose first that $\A=\B+\C$. We start by showing that $c(\A,\B)\cap c(\A,\C)=\Linf_-$. Let $f\in\k(c(\A,\B)\cap c(\A,\C))$, then $f\perp\B$ and $f\perp\C$ and therefore $f\perp\B+\C$, but $f\in\A=\B+\C$, so $f=0$. Now we have $\D:=c(\A,\B)+c(\A,\C)\subseteq c(\A,\B\cap\C)$ and $\dd(\D)=\dd(\A)-\dd(\B)+\dd(\A)-\dd(\C)=\dd(\A)-\dd(\B\cap\C)$ and by Proposition \ref{pe1} we get the result. For the general case, we have $c(\A,\B\cap\C)=c(\A,\B+\C)+c(\B+\C,\B\cap\C)=c(\A,\B+\C)+c(\B+\C,\B)+c(\B+\C,\C)=c(\A,\B)+c(\A,\C)$.

(2) We apply assertion (1) and obtain that $c(\A,c(\A,\B)\cap c(\A,\C))=c(\A,c(\A,\B))+c(\A,c(\A,\C))=\B+\C$, so $c(\A,\B)\cap c(\A,\C)=c(\A,\B+\C)$.

\end{proof}

For any $\A\in\b_n$ and $\B,\C\in\b(\A)$, we know from Theorem \ref{tt2} that there exists a unique orthogonal complement of $\B$ in $\B+\C$, denoted by $c(\B+\C,\B)$. In the following example we show that there is no direct relationship of inclusion between $c(\B+\C,\B)$ and $C$.

\begin{ex}\label{ex5} We consider the Brownian setting $W=(W^1,W^2)$ and define $\B=\A(W^1)$ and $\C=\A(W^1+W^2)$. Then $c(\B+\C,\B)=\A(W^2)$ and $\A(W^2)$ has no direct link to $\C$.
\end{ex}

By consequent we introduce the following definition.

\begin{defi}\label{d9}Let $\A\in\b_n$ and $\B,\C\in\b(\A)$. We say that the pair $(\B,\C)$ satisfies the star property if $c(\B+\C,\B)\subseteq\C$.

\end{defi}

Some equivalent statements of the star property are given below.

\begin{prop}\label{p7}Let $\A\in\b_n$ and $\B,\C\in\b(\A)$. Then the following assertions are equivalent:
\begin{enumerate}
\item the pair $(\B,\C)$ satisfies the star property.
\item $c(\B+\C,\B)=c(\C,\B\cap\C)$.
\item there exists $Y'\in\cc$ such that $\A(Y')$ is orthogonal to $\B$ and $\C=(\B\cap\C)\oplus\A(Y')$.
\item there exists $\D_1,\D_2\in\b(\A)$ such that $\D_1\subseteq\B$, $\D_2\perp\B$ and $\C=\D_1\oplus\D_2$.
\item $\B+\C=\B\oplus c(\C,\B\cap\C)$.
\item $\B+\C=\B\oplus \D$ for some $\D\in\b(\C)$.

\end{enumerate}

\end{prop}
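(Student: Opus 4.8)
The plan is to route all six statements through the single equivalence $(1)\Leftrightarrow(2)$ and then to read off $(3)$--$(6)$ from it by unwinding the symbol $\op$ and invoking uniqueness of the orthogonal complement (Theorem~\ref{tt2}). Put $E:=c(\B+\C,\B)$ and $D:=c(\C,\B\cap\C)$; these make sense because $\B+\C$ and $\B\cap\C$ lie in $\b(\A)$ with generators in $\cc$, as already exploited in Theorem~\ref{ppp}. Two identities hold with no extra hypothesis. First, $\B+\C=\B+D$, since $\C=(\B\cap\C)\op D$ and $\B\cap\C\subseteq\B$; combined with $\B+\C=\B\op E$ this exhibits $\B\op E$ as the distinguished orthogonal decomposition of $\B+\C$ over $\B$. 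Second, assertion~(3) of Theorem~\ref{tt2} together with the dimension identity $\dd(\B+\C)=\dd(\B)+\dd(\C)-\dd(\B\cap\C)$ of Theorem~\ref{ppp} gives
$$
\dd(E)=\dd(\B+\C)-\dd(\B)=\dd(\C)-\dd(\B\cap\C)=\dd(D).
$$

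I would then prove the pivotal equivalence $(1)\Leftrightarrow(2)$. Here $(2)\Rightarrow(1)$ is immediate, since $D=c(\C,\B\cap\C)\subseteq\C$. For $(1)\Rightarrow(2)$, assume $E\subseteq\C$, so $E\in\b(\C)$. From $E\perp\B$ and $\B\cap\C\subseteq\B$, the monotonicity items (5) and (6) of Proposition~\ref{p61}, applied inside $\B+\C$, give $E\perp\B\cap\C$; item~(6), applied this time inside $\C$, then yields $E\subseteq c(\C,\B\cap\C)=D$. Since moreover $\dd(E)=\dd(D)$, Proposition~\ref{pe1} upgrades this inclusion to $E=D$, which is precisely $(2)$.

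The remaining links are bookkeeping with $\op$. For $(2)\Leftrightarrow(5)$: statement $(5)$ says $\B+\C=\B\op D$, i.e.\ that $D$ is an orthogonal complement of $\B$ in $\B+\C$, which by uniqueness in Theorem~\ref{tt2} means $D=E$, i.e.\ $(2)$; conversely $(2)$ and $\B+\C=\B\op E$ give $(5)$. Then $(5)\Rightarrow(6)$ by taking $\D:=D\in\b(\C)$, while $(6)\Rightarrow(1)$ because any $\D\in\b(\C)$ with $\B+\C=\B\op\D$ must equal $E$ by uniqueness, so $E=\D\subseteq\C$. Finally $(1)\Rightarrow(3)$: let $Y'$ generate $D$; then by $(1)\Leftrightarrow(2)$ we have $D=E\perp\B$, and $\C=(\B\cap\C)\op D$ by construction, which is $(3)$. $(3)\Rightarrow(4)$: put $\D_1:=\B\cap\C\subseteq\B$ and $\D_2:=\A(Y')\perp\B$. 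And $(4)\Rightarrow(1)$: from $\C=\D_1\op\D_2$ with $\D_1\subseteq\B$ one gets $\B+\C=\B+\D_2=\B\op\D_2$, hence $\D_2=E$ by uniqueness, and $\D_2\subseteq\C$ gives $(1)$.

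The genuinely non-formal step is $(1)\Rightarrow(2)$: it is the only place where the additivity of the random algebraic dimension (Theorem~\ref{ppp}) and the rigidity of Proposition~\ref{pe1} (an inclusion of acceptance sets with equal random dimension is an equality) enter. Everything else is the preliminary check that $\B+\C$ and $\B\cap\C$ lie in $\b(\A)$ with generators in $\cc$ --- so that Theorems~\ref{tt2} and~\ref{ppp} and Propositions~\ref{p61} and~\ref{pe1} all apply --- together with the mechanical expansion of the notation $\op$.
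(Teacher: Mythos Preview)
Your proof is correct, and the bookkeeping portions $(2)\Leftrightarrow(5)$, $(5)\Rightarrow(6)\Rightarrow(1)$, $(1)\Rightarrow(3)\Rightarrow(4)\Rightarrow(1)$ are essentially identical to the paper's $(2)\Rightarrow(3)\Rightarrow(4)\Rightarrow(3)\Rightarrow(5)\Rightarrow(6)\Rightarrow(1)$, all relying on the uniqueness clause of Theorem~\ref{tt2} and the meaning of the symbol~$\oplus$.

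The one genuine difference is in the pivotal step $(1)\Rightarrow(2)$. The paper argues both inclusions $E\subseteq D$ and $D\subseteq E$ by element-chasing in the linear spaces $\k(\cdot)$, using the orthogonal splitting from Proposition~\ref{pop}(iv): an element $h\in\k(E)\subseteq\k(\C)$ is written as $f+g$ with $f\in\k(\B\cap\C)$, $g\in\k(D)$, and then $h-g\perp f$ forces $h=g$; and symmetrically for the reverse inclusion. You instead prove only the inclusion $E\subseteq D$, via the monotonicity and orthogonality criteria of Proposition~\ref{p61}(5),(6), and then close the gap by the dimension identity $\dd(E)=\dd(D)$ (from Theorem~\ref{ppp}) combined with the rigidity of Proposition~\ref{pe1}. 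Your route avoids handling $\k(\cdot)$ directly and stays at the level of the random-dimension calculus developed earlier in the paper; the paper's route is more hands-on but entirely self-contained at this point, not needing Theorem~\ref{ppp} or Proposition~\ref{pe1}. Both are clean; yours makes slightly heavier use of the machinery already built, while the paper's is a more local argument.
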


\begin{proof}$(1)\Rightarrow (2)$ Let $\D:=c(\B+\C,\B)$ and $h\in\k(\D)$ and since $\D\subseteq\C$ and $\C=(\B\cap\C)\oplus c(\C,\B\cap\C)$, then $h=f+g$ with $f\in\k(\B\cap\C)$ and $g\in\k(c(\C,\B\cap\C))$. So $h-g=f$ with $h-g\perp f$, and then $h=g\in c(\C,\B\cap\C)$. Inversely Let $h\in\k(c(\C,\B\cap\C))$ and since $c(\C,\B\cap\C)\subseteq\C\subseteq\B+\C=\B\oplus\D$, then $h=f+g$ with $f\in\k(\B)$ and $g\in\k(\D)$. So $h\in\B\cap\C$, $h-g=f$ with $h-g\perp f$, then $h=g\in \D$.

$(2)\Rightarrow (3)$ Thanks to Theorem \ref{tt2}, there exists some $s\leq n$ and $Y'\in\cc_s$ such that $\A(Y')$ is orthogonal to $\B$ and $c(\B+\C,\B)=\A(Y')$. So $\C=(\B\cap\C)\oplus c(\C,\B\cap\C)=(\B\cap\C)\oplus \A(Y')$.

$(3)\Rightarrow (4)$ is trivial.

$(4)\Rightarrow (3)$ Suppose $\D_2=\A(Y')$. It suffices to show that $\D_1=\B\cap\C$. The direct inclusion is trivial, for the converse we have that $\B\cap\C\perp\D_2$, then $\B\cap\C\subseteq c(\C,\D_2)=\D_1$.

$(3)\Rightarrow (5)$ We have $\B+\C=\B+ \A(Y')$ and since $Y'$ is orthogonal to $X$ and $\A(Y')=c(\C,\B\cap\C)$, then $\B+\C=\B\oplus c(\C,\B\cap\C)$.

$(5)\Rightarrow (6)$ We take $\D=c(\C,\B\cap\C)$ and then $\D\subseteq\C$.

$(6)\Rightarrow (1)$ First $\D=c(\B+\C,\B)$ from the uniqueness of the orthogonal complement and $\D\subseteq\C$, so we get the result.

\end{proof}

\begin{prop}\label{p11xx}Let $\A\in\b_n$ and $\B,\C\in\b(\A)$. We shall say that $\B\simeq\C$ if the pair $(\B,\C)$ satisfies the star property. Then $\simeq$ is a reflexive and symmetric relation.
\end{prop}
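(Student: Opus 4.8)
The plan is to treat the two properties separately, leaning on the equivalent reformulations of the star property gathered in Proposition \ref{p7}.

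For reflexivity, I would observe that $\B$ is a convex cone, so $\B+\B=\B$, and that the trivial set $\Linf_-$ is an element of $\b(\B)$. Hence $\B+\B=\B\oplus\Linf_-$ is a decomposition of exactly the form demanded by assertion $(6)$ of Proposition \ref{p7} (one could instead invoke assertion $(4)$ with $\D_1=\B\subseteq\B$ and $\D_2=\Linf_-$, the trivial set being orthogonal to every element of $\b(\A)$). Therefore $(\B,\B)$ satisfies the star property, i.e. $\B\simeq\B$.

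For symmetry, I would assume $(\B,\C)$ satisfies the star property and use the equivalence $(1)\Leftrightarrow(3)$ of Proposition \ref{p7} to obtain $\D_2\in\b(\A)$ with $\D_2\perp\B$ and $\C=(\B\cap\C)\oplus\D_2$. Then I set $\D_1:=c(\B,\B\cap\C)$, the orthogonal complement of $\B\cap\C$ inside $\B$ furnished by Theorem \ref{tt2}, so that $\B=(\B\cap\C)\oplus\D_1$. Applying assertion $(3)$ of Proposition \ref{p7} to the pair $(\C,\B)$ (with generator of $\D_1$ in the role of $Y'$) then gives $\C\simeq\B$, provided one knows $\D_1\perp\C$. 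That verification is the only step requiring work: since $\D_1\subseteq\B$, Proposition \ref{p61}$(5)$ gives $c(\A,\B)\subseteq c(\A,\D_1)$, which combined with $\D_2\subseteq c(\A,\B)$ (this being $\D_2\perp\B$ read through Proposition \ref{p61}$(6)$) yields $\D_1\perp\D_2$; moreover $\D_1\perp(\B\cap\C)$ by construction. Reading both orthogonalities through Proposition \ref{p61}$(6)$ and using the De Morgan identity of Proposition \ref{p8}$(2)$,
$$
\D_1\subseteq c(\A,\D_2)\cap c\bigl(\A,\B\cap\C\bigr)=c\bigl(\A,\D_2+(\B\cap\C)\bigr)=c(\A,\C),
$$
so a final application of Proposition \ref{p61}$(6)$ gives $\D_1\perp\C$, as needed.

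The main (and only) delicate point is thus the implication ``$\D_1\perp\D_2$ and $\D_1\perp(\B\cap\C)$ $\Rightarrow$ $\D_1\perp\C$'', which the complement-calculus of Propositions \ref{p61} and \ref{p8} resolves cleanly. I would also remark that transitivity is not claimed — and indeed fails in general, as the geometry of Example \ref{ex5} already suggests — so reflexivity and symmetry are all that is asserted here.
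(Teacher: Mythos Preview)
Your argument is correct. The reflexivity step is handled as the paper does (the paper simply calls it ``trivial''), and your symmetry argument reaches the same target as the paper's: showing that $\D_1:=c(\B,\B\cap\C)$ is orthogonal to $\C$, so that $\B=(\B\cap\C)\oplus\D_1$ witnesses assertion (3) of Proposition~\ref{p7} for the pair $(\C,\B)$.

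The route differs, however. The paper works entirely inside $\B+\C$: it applies Proposition~\ref{p61}(2) to the chain $\B\cap\C\subseteq\B\subseteq\B+\C$ to write
\[
\B+\C=(\B\cap\C)\oplus c(\B+\C,\B)\oplus c(\B,\B\cap\C),
\]
then uses the star hypothesis in the form $c(\B+\C,\B)=c(\C,\B\cap\C)$ (Proposition~\ref{p7}(2)) to regroup this as $\C\oplus c(\B,\B\cap\C)$, and finally compares with the orthogonal decomposition $\B+\C=\C\oplus c(\B+\C,\C)$ to conclude $c(\B+\C,\C)=c(\B,\B\cap\C)$, which is criterion (2) of Proposition~\ref{p7} for $(\C,\B)$. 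Your approach instead climbs up to the ambient set $\A$ and uses the complement calculus there: Proposition~\ref{p61}(5),(6) to transport orthogonalities, and the De~Morgan identity of Proposition~\ref{p8}(2) to combine $\D_1\perp\D_2$ and $\D_1\perp(\B\cap\C)$ into $\D_1\perp\C$. The paper's argument is more self-contained (it never leaves $\B+\C$ and does not invoke De~Morgan), while yours is perhaps more transparent about \emph{why} the orthogonality $\D_1\perp\C$ holds, making explicit a step the paper leaves implicit in its appeal to uniqueness of the orthogonal complement.
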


\begin{proof}The reflexivity property of the relation $\simeq$ is trivial. Now suppose $\B\simeq\C$, then $\B+\C=(\B\cap\C)\oplus c(\B+\C,\B\cap\C)=(\B\cap\C)\oplus c(\B+\C,\B)\oplus c(\B,\B\cap\C)=(\B\cap\C)\oplus c(\C,\B\cap\C)\oplus c(\B,\B\cap\C)=\C\oplus c(\B,\B\cap\C)$, and since $\B+\C=\C\oplus c(\B+\C,\C)$. So $c(\B,\B\cap\C)=c(\B+\C,\C)$ and therefore $\C\simeq\B$.

\end{proof}

\begin{rem}\label{rr}The relation $\simeq$ is not transitive. Indeed Let $\A\in\b_n$ and $\B,\C\in\b(\A)$ such that the pair $(\B,\C)$ does not satisfy the star property, but the pairs $(\B+\C,\B)$ and $(\B+\C,\C)$ satisfy the star property.
\end{rem}

\begin{prop}\label{pppppp8}Let $\A\in\b_n$ and $\B,\C\in\b(\A)$. Suppose the pair $(\B,\C)$ satisfies the star property, then
\begin{enumerate}
\item the pair $(c(\A,\B),c(\A,\C))$ satisfies the star property.
\item the pair $(\B,c(\A,\C))$ satisfies the star property.
\end{enumerate}

\end{prop}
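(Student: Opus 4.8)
The plan is to reduce both assertions to characterization~(4) of Proposition~\ref{p7}, once the orthogonal complement $c(\A,\C)$ has been written down explicitly. First I would apply Proposition~\ref{p7} to the pair $(\B,\C)$: the star property gives $\D_1,\D_2\in\b(\A)$ with $\D_1\subseteq\B$, $\D_2\perp\B$ and $\C=\D_1\op\D_2$, and by Proposition~\ref{p61}(6) the relation $\D_2\perp\B$ just says $\D_2\subseteq c(\A,\B)$.

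Next I would split $\B=\D_1\op c(\B,\D_1)$ and $c(\A,\B)=\D_2\op c(c(\A,\B),\D_2)$ and substitute into $\A=\B\op c(\A,\B)$. Since $\D_1,c(\B,\D_1)\subseteq\B$ and $\D_2,c(c(\A,\B),\D_2)\subseteq c(\A,\B)$ while $\B\perp c(\A,\B)$, the four blocks are pairwise orthogonal, so that $\A=\D_1\op c(\B,\D_1)\op\D_2\op c(c(\A,\B),\D_2)$. Regrouping, $\A=\C\op\big(c(\B,\D_1)\op c(c(\A,\B),\D_2)\big)$, the second summand being orthogonal to $\C=\D_1\op\D_2$; and by Theorem~\ref{ppp} together with the dimension identity of Theorem~\ref{tt2} its random dimension equals $(\dd(\B)-\dd(\D_1))+(\dd(c(\A,\B))-\dd(\D_2))=\dd(\A)-\dd(\C)$. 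Thus it is a minimal complement of $\C$ in $\A$ that is orthogonal to $\C$, and by the uniqueness part of Theorem~\ref{tt2} I would conclude $c(\A,\C)=c(\B,\D_1)\op c(c(\A,\B),\D_2)$.

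Both statements then come out of characterization~(4) of Proposition~\ref{p7} applied to this formula. For~(2): here $c(\B,\D_1)\subseteq\B$ and $c(c(\A,\B),\D_2)\subseteq c(\A,\B)$, hence $c(c(\A,\B),\D_2)\perp\B$, so $c(\A,\C)$ is displayed as the orthogonal sum of a subset of $\B$ and a set orthogonal to $\B$, which is precisely condition~(4) for the pair $(\B,c(\A,\C))$. For~(1): the same formula writes $c(\A,\C)$ as the orthogonal sum of $c(c(\A,\B),\D_2)\subseteq c(\A,\B)$ and $c(\B,\D_1)$, the latter being orthogonal to $c(\A,\B)$ because $c(\B,\D_1)\subseteq\B$ and $\B\perp c(\A,\B)$; this is condition~(4) for the pair $(c(\A,\B),c(\A,\C))$.

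The step I expect to be the main obstacle is the identification of $c(\A,\C)$: one has to be sure that the regrouping of the four pairwise-orthogonal blocks is legitimate and, above all, that $c(\B,\D_1)\op c(c(\A,\B),\D_2)$ is the \emph{minimal} complement of $\C$ --- which is where the dimension bookkeeping through Theorem~\ref{ppp}, the dimension identity of Theorem~\ref{tt2}, and Proposition~\ref{pe1} (uniqueness of a subset of given dimension) are needed, a purely set-theoretic argument not sufficing.
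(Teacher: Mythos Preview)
Your argument is correct. The key identification $c(\A,\C)=c(\B,\D_1)\oplus c\bigl(c(\A,\B),\D_2\bigr)$ goes through cleanly: since this set is orthogonal to $\C$, Proposition~\ref{p61}(6) gives the inclusion into $c(\A,\C)$, and then the dimension count together with Proposition~\ref{pe1} forces equality --- so you do not even need to argue minimality separately before invoking Theorem~\ref{tt2}. Once that formula is in hand, both (1) and (2) drop out of criterion~(4) of Proposition~\ref{p7} exactly as you say.

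Your route is genuinely different from the paper's. For~(1) the paper works ``from the outside'': it uses the De~Morgan law of Proposition~\ref{p8} to write $c(\A,\B)+c(\A,\C)=c(\A,\B\cap\C)$, then Proposition~\ref{p61}(2) applied to $\B\cap\C\subseteq\B$ to get $c(\A,\B\cap\C)=c(\A,\B)\oplus c(\B,\B\cap\C)$, and finally the symmetry of the star property (Proposition~\ref{p11xx}) to recognise $c(\B,\B\cap\C)=c(\B+\C,\C)\subseteq c(\A,\C)$, landing on criterion~(6) of Proposition~\ref{p7}. For~(2) the paper again goes through Proposition~\ref{p61}(2) to decompose $c(\A,\C)$. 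Your approach, by contrast, builds $c(\A,\C)$ explicitly from the four-block orthogonal splitting of $\A$ induced by $\D_1,\D_2$, and then reads off both assertions from the \emph{same} formula. The paper's method for~(1) is slicker once De~Morgan is available; your method is more self-contained (no Proposition~\ref{p8}, no Proposition~\ref{p11xx}), treats (1) and (2) uniformly, and makes the orthogonal geometry completely transparent.
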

\begin{proof}(1) We have $c(\A,\B)+c(\A,\C)=c(\A,\B\cap\C)=c(\A,\B)\oplus c(\B,\B\cap\C)=c(\A,\B)\oplus c(\B+\C,\C)$ withy $c(\B+\C,\C)\subseteq c(\A,\C)$, then by assertion (4) in Proposition \ref{p7}, we get the result.

(2) We apply assertion (2) in Proposition \ref{p61} and get $c(\A,\C)=c(\A,\B)\oplus c(\B,\C)$ and then $\B+c(\A,\C)=\B+c(\A,\B)+c(\B,\C)=\A=\B\oplus c(\A,\B)$ and $c(\A,\B)\subseteq c(\A,\C)$, so the pair $(\B,c(\A,\C))$ satisfies the star property.
\end{proof}

The distributive laws are stated next.

\begin{prop}\label{p8dd}Let $\A\in\b_n$ and $\B,\C,\D\in\b(\A)$. Suppose the pairs $(\D,\B)$ and $(\D,\C)$ satisfy the star property, then
\begin{enumerate}
\item $\D+(\B\cap\C)=(\D+\B)\cap(\D+\C)$.
\item $\D\cap(\B+\C)=(\D\cap\B)+(\D\cap\C)$.
\end{enumerate}

\end{prop}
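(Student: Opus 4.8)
The plan is to establish assertion (2) first, by a direct argument on the linearity cones $\k(\cdot)$, and then to deduce assertion (1) from (2) by applying the orthogonal-complement bijection $\B\mapsto c(\A,\B)$ together with the De Morgan laws of Proposition~\ref{p8}.

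For assertion (2), put $\A_0:=(\D\cap\B)+(\D\cap\C)$. The inclusion $\A_0\subseteq\D\cap(\B+\C)$ is immediate, since $\D\cap\B$ and $\D\cap\C$ both lie in $\D$ and in $\B+\C$, which are convex cones. For the reverse inclusion I would exploit the star property via assertion (3) of Proposition~\ref{p7}: applied to the pairs $(\D,\B)$ and $(\D,\C)$, it yields decompositions $\B=(\D\cap\B)\oplus\B'$ and $\C=(\D\cap\C)\oplus\C'$ with $\B'\perp\D$ and $\C'\perp\D$. By Proposition~\ref{p61}(6) this gives $\B',\C'\subseteq c(\A,\D)$, so that $\B+\C=\A_0+(\B'+\C')$ with $\A_0\subseteq\D$ and $\B'+\C'\subseteq c(\A,\D)$. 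Now let $h\in\k(\D\cap(\B+\C))$. Since $\k$ distributes over intersections (immediate from the definition $\k(\cdot)=(\cdot)\cap(-\cdot)$) and over sums (Proposition~\ref{pop}(iv)), we may write $h=e+g$ with $e\in\k(\A_0)$ and $g\in\k(\B'+\C')$, and also $h\in\k(\D)$. Then $e\in\k(\D)$ by Proposition~\ref{pop}(iii) (as $\A_0\subseteq\D$), hence $g=h-e\in\k(\D)$ by Proposition~\ref{pop}(ii); on the other hand $g\in\k(\B'+\C')\subseteq\k(c(\A,\D))$, again by Proposition~\ref{pop}(iii). Since $c(\A,\D)\perp\D$, Proposition~\ref{pop}(v) forces $\k(\D)\cap\k(c(\A,\D))=\{0\}$, so $g=0$ and $h=e\in\k(\A_0)$. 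Thus $\k(\D\cap(\B+\C))\subseteq\k(\A_0)$, and Proposition~\ref{pop}(iii) yields $\D\cap(\B+\C)\subseteq\A_0$, completing (2).

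For assertion (1), recall that $c(\A,\cdot)$ is a bijection of $\b(\A)$ by Proposition~\ref{p61}(1). Applying Proposition~\ref{p8} repeatedly, $c(\A,\D+(\B\cap\C))=c(\A,\D)\cap(c(\A,\B)+c(\A,\C))$ while $c(\A,(\D+\B)\cap(\D+\C))=(c(\A,\D)\cap c(\A,\B))+(c(\A,\D)\cap c(\A,\C))$. Equality of these two right-hand sides is precisely assertion (2) for the triple $c(\A,\D),c(\A,\B),c(\A,\C)$; moreover the pairs $(c(\A,\D),c(\A,\B))$ and $(c(\A,\D),c(\A,\C))$ satisfy the star property by Proposition~\ref{pppppp8}(1), so assertion (2) applies to them. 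Since $c(\A,\cdot)$ is injective, (1) follows.

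The step I expect to be the crux is the reverse inclusion in (2): it is there that the star property is genuinely used --- to split $\B$ and $\C$ relative to $\D$ into a piece lying in $\D$ and a piece orthogonal to $\D$ --- and the decisive point is that the ``transverse'' component $g$ then lies both in $\k(\D)$ and in $\k(c(\A,\D))$, and hence vanishes. The trivial inclusions and the duality passage from (2) to (1) are routine once the De Morgan laws and the stability of the star property under $c(\A,\cdot)$ (Proposition~\ref{pppppp8}) are available.
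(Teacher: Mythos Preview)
Your argument for (2) is correct and is essentially the paper's own proof, only packaged differently: where the paper splits $f\in\k(\B)$ and $g\in\k(\C)$ separately into their $\D$- and $\D^\perp$-parts, you first combine them into $e\in\k(\A_0)$ and a single transverse part $g\in\k(\B'+\C')$, and then kill $g$ via $\k(\D)\cap\k(c(\A,\D))=\{0\}$. The key mechanism---use the star decomposition of Proposition~\ref{p7}(3) and then observe that the ``orthogonal'' residue lies simultaneously in $\k(\D)$ and in $\k(c(\A,\D))$---is identical.

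For (1) your route genuinely differs from the paper's. The paper proves (1) directly, by the same kind of orthogonality trick: writing $\D+\B=\D\oplus\D_1$ and $\D+\C=\D\oplus\D_2$ with $\D_1\subseteq\B$, $\D_2\subseteq\C$ (Proposition~\ref{p7}(6)), any $h\in\k((\D+\B)\cap(\D+\C))$ has two decompositions $h=f+g=f'+g'$, and $f-f'=g'-g$ with the two sides mutually orthogonal forces $g=g'\in\B\cap\C$. Your approach instead pushes everything through the bijection $c(\A,\cdot)$: De Morgan (Proposition~\ref{p8}) turns (1) into an instance of (2) for the triple $c(\A,\D),c(\A,\B),c(\A,\C)$, and Proposition~\ref{pppppp8}(1) guarantees the star hypotheses survive. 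This is correct and arguably more economical, since it reuses (2) rather than rerunning the orthogonality argument; the price is that it leans on the full De~Morgan machinery and on the stability of the star property under $c(\A,\cdot)$, whereas the paper's direct proof of (1) needs only Proposition~\ref{p7} and Proposition~\ref{pop}.
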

\begin{proof}
(1) We have $\D+(\B\cap\C)\subseteq(\D+\B)\cap(\D+\C)$. For the inverse inclusion, let $h\in \k((\D+\B)\cap(\D+\C))$ and since $\D+\B=\D\oplus\D_1$ and $\D+\C=\D\oplus\D_2$ with $\D_1\subseteq\B$ and $\D_2\subseteq\C$, then $h=f+g=f'+g'$ with $f,f'\in \k(\D)$, $g\in \k(\D_1)$ and $g'\in \k(\D_2)$. So $f-f'=g'-g$ with $f-f'\perp g'-g$ and therefore $g=g'\in\B\cap\C$ and $h=f+g\in\D+(\B\cap\C)$.

(2) We have $(\D\cap\B)+(\D\cap\C)\subseteq\D\cap(\B+\C)$. For the inverse inclusion, let $h\in \k(\D\cap(\B+\C))$, then $h=f+g\in\k(\D)$ with $f\in \k(\B)$ and $g\in \k(\C)$, and since $\B=\B\cap\D\oplus\D_1$ and $\C=\C\cap\D\oplus\D_2$ with $\D_1,\D_2\perp\D$, so $f=f_1+f_2$ and $g=g_1+g_2$ with $f_1\in\B\cap\D$, $g_1\in\C\cap\D$, $f_2\in\k(\D_1)$ and $g_2\in\k(\D_2)$. Therefore $h-f_1-g_1=g_2+f_2$ with $h-f_1-g_1\perp g_2+f_2$, then $f_2+g_2=0$ and by consequent $h=f_1+g_1\in (\D\cap\B)+(\D\cap\C)$.

\end{proof}

The star property in Proposition \ref{p8dd} is a necessary condition. Next we provide an example to illustrate that.

\begin{ex}\label{ex6}In a two dimensional Brownian setting we define $\D=\A(W^1)$, $\B=\A(W^1+W^2)$ and $\C=\A(W^1+2W^2)$. Then $c(\D+\B,\D)=\A(W^2)$ and $c(\D+\C,\D)=\A(W^2)$, so the pairs $(\D,\B)$ and $(\D,\C)$ do not satisfy the star property. Moreover $\D+(\B\cap\C)=\D$ and $(\D+\B)\cap(\D+\C)=\A(W^1,W^2)$ and $\A(W^1)\neq \A(W^1,W^2)$.

\end{ex}

\begin{cor}\label{cc3}Let $\A\in\b_n$ and $\B,\C,\D\in\b(\A)$. Suppose the pairs $(\D,\B)$ and $(\D,\C)$ satisfy the star property, then the pairs $(\D,\B\cap\C)$ and $(\D,\B+\C)$ satisfy the star property.

\end{cor}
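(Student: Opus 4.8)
The plan is to reduce both assertions to part (4) of Proposition \ref{p7}, which characterises the star property of a pair $(\D,\cdot)$ by the existence of a direct-sum decomposition of its second entry into an element of $\b(\A)$ contained in $\D$ and an element of $\b(\A)$ orthogonal to $\D$. So I would first apply Proposition \ref{p7}(4) to the two hypotheses and fix decompositions $\B=\B_1\oplus\B_2$ and $\C=\C_1\oplus\C_2$ with $\B_1,\C_1\in\b(\A)$ contained in $\D$ and $\B_2,\C_2\in\b(\A)$ orthogonal to $\D$. Two routine facts will be used throughout: that $\k(\cdot)$ carries $\oplus$ and $\cap$ of elements of $\b(\A)$ to $+$ and $\cap$ of the associated lineality spaces (Proposition \ref{pop}(iv) together with the definition of $\cap$ in $\b(\A)$), and that $\k(\D)\cap\k(c(\A,\D))=\{0\}$ since $\D\perp c(\A,\D)$ (Proposition \ref{pop}(v), with Proposition \ref{p61}(6) identifying orthogonality to $\D$ with being contained in $c(\A,\D)$).

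For the pair $(\D,\B+\C)$ I would set $\G_1=\B_1+\C_1$ and $\G_2=\B_2+\C_2$. Since $\k(\D)$ is a vector space, $\G_1\subseteq\D$; since $\B_2,\C_2\subseteq c(\A,\D)$ and $\k(c(\A,\D))$ is a vector space, $\G_2\subseteq c(\A,\D)$, that is $\G_2\perp\D$. Applying Proposition \ref{pop}(iv) repeatedly, $\k(\B+\C)=\k(\B_1)+\k(\B_2)+\k(\C_1)+\k(\C_2)=\k(\G_1)+\k(\G_2)=\k(\G_1+\G_2)$, so $\B+\C=\G_1+\G_2$ by Proposition \ref{pop}(i), and the sum is direct because $\k(\G_1)\cap\k(\G_2)\subseteq\k(\D)\cap\k(c(\A,\D))=\{0\}$. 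Thus $\B+\C=\G_1\oplus\G_2$ is a decomposition of the required type, and Proposition \ref{p7}(4) yields the star property for $(\D,\B+\C)$.

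For the pair $(\D,\B\cap\C)$ I would show $\B\cap\C=(\B_1\cap\C_1)\oplus(\B_2\cap\C_2)$; the inclusion $\supseteq$ is immediate. For $\subseteq$, take $h\in\k(\B\cap\C)$ and write $h=e_1+e_2$ with $e_i\in\k(\B_i)$ and $h=f_1+f_2$ with $f_i\in\k(\C_i)$. Then $e_1-f_1=f_2-e_2$, with left-hand side in $\k(\D)$ and right-hand side in $\k(c(\A,\D))$, so both vanish; hence $e_1=f_1\in\k(\B_1)\cap\k(\C_1)=\k(\B_1\cap\C_1)$ and $e_2=f_2\in\k(\B_2\cap\C_2)$, which gives $h\in\k(\B_1\cap\C_1)+\k(\B_2\cap\C_2)$. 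Since $\B_1\cap\C_1\subseteq\D$ and $\B_2\cap\C_2\perp\D$, the sum is direct, so again this is a decomposition of the form required by Proposition \ref{p7}(4), and the star property for $(\D,\B\cap\C)$ follows.

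The only step I expect to require real care is the componentwise matching in the intersection case: the two expansions $h=e_1+e_2$ and $h=f_1+f_2$ of the same element are a priori unrelated, and forcing $e_i=f_i$ relies entirely on the orthogonality $\D\perp c(\A,\D)$ and the vector-space structure of the lineality spaces. I would therefore make sure those two ingredients, and the closure of $\b(\A)$ under the finite sums and intersections used above, are firmly in place before carrying out the computation.
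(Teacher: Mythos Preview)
Your argument is correct. The componentwise matching in the intersection case is exactly the point that needs care, and you handle it properly: the identity $\k(\B_1)\cap\k(\C_1)=\k(\B_1\cap\C_1)$ is purely set-theoretic from the definition $\k(\cdot)=(\cdot)\cap-(\cdot)$, and the closure of $\b(\A)$ under intersection follows from Proposition~\ref{p8}(2) together with the involution $c(\A,c(\A,\cdot))=\mathrm{id}$ of Proposition~\ref{p61}(1).

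The paper states the corollary without proof, immediately after Proposition~\ref{p8dd}, so the intended argument is to invoke that distributive law rather than redo its content. Concretely: writing $\E_1=c(\D+\B,\D)\subseteq\B$ and $\E_2=c(\D+\C,\D)\subseteq\C$ (both orthogonal to $\D$), one gets $\D+\B+\C=\D\oplus(\E_1+\E_2)$ with $\E_1+\E_2\subseteq\B+\C$, giving the star property for $(\D,\B+\C)$ via Proposition~\ref{p7}(6). For $(\D,\B\cap\C)$, Proposition~\ref{p8dd}(1) gives $\D+(\B\cap\C)=(\D\oplus\E_1)\cap(\D\oplus\E_2)$, and a second application of Proposition~\ref{p8dd}(1) to the pairs $(\D,\E_1)$, $(\D,\E_2)$ (which satisfy the star property trivially since $\E_i\perp\D$) yields $\D\oplus(\E_1\cap\E_2)$ with $\E_1\cap\E_2\subseteq\B\cap\C$. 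Your route is the same in spirit: the hand computation you perform in the intersection case is precisely the proof of Proposition~\ref{p8dd}(1), just inlined. What the paper's approach buys is brevity; what yours buys is self-containment and a clearer view of why the decomposition from Proposition~\ref{p7}(4) is the right invariant to track.
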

\begin{cor}\label{pk1}Let $\A\in\b_n$ and $\B,\C\in\b(\A)$. Then the pair $(\B,\C)$ satisfy the star property iff $\C=(\C\cap\B)\op(\C\cap c(\A,\B))$.

\end{cor}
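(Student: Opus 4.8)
The plan is to read the star property through assertion (4) of Proposition \ref{p7} — the existence of a decomposition $\C=\D_1\op\D_2$ with $\D_1\subseteq\B$ and $\D_2\perp\B$ — and then to pin the two summands down as $\C\cap\B$ and $\C\cap c(\A,\B)$ using the orthogonality criterion of Proposition \ref{p61}(6). Both $\C\cap\B$ and $\C\cap c(\A,\B)$ are elements of $\b(\A)$, being intersections of members of $\b(\A)$, as used freely throughout the paper.

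For the direct implication I would start from the star property of $(\B,\C)$ and invoke $(1)\Rightarrow(4)$ of Proposition \ref{p7} to obtain $\D_1,\D_2\in\b(\A)$ with $\D_1\subseteq\B$, $\D_2\perp\B$ and $\C=\D_1\op\D_2$. Since $\D_1\subseteq\C$ this forces $\D_1\subseteq\C\cap\B$; since $\D_2\perp\B$, Proposition \ref{p61}(6) gives $\D_2\subseteq c(\A,\B)$, hence $\D_2\subseteq\C\cap c(\A,\B)$. Then, $\C$ being a convex cone, the chain $\C=\D_1+\D_2\subseteq(\C\cap\B)+(\C\cap c(\A,\B))\subseteq\C$ collapses to equalities. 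The middle sum is direct because $\C\cap\B\subseteq\B$, $\C\cap c(\A,\B)\subseteq c(\A,\B)$ and $\B\cap c(\A,\B)=\Linf_-$, so we may legitimately write $\C=(\C\cap\B)\op(\C\cap c(\A,\B))$.

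For the converse I would set $\D_1=\C\cap\B$ and $\D_2=\C\cap c(\A,\B)$, observe $\D_1\subseteq\B$ trivially and $\D_2\perp\B$ by Proposition \ref{p61}(6) (since $\D_2\subseteq c(\A,\B)$), so that the hypothesis $\C=\D_1\op\D_2$ is exactly assertion (4) of Proposition \ref{p7}; the implication $(4)\Rightarrow(1)$ there then yields the star property for $(\B,\C)$.

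The step most likely to need care is the direct implication: one must make sure the inclusions genuinely collapse to equalities (they do, since the whole chain is wedged between $\C$ and itself) and that the $\op$ notation is justified for $(\C\cap\B)+(\C\cap c(\A,\B))$, which rests on the orthogonality of $\B$ and $c(\A,\B)$ inside $\A$ together with the stability of $\b(\A)$ under intersections. Everything else is a direct application of Propositions \ref{p7} and \ref{p61}.
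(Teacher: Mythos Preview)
Your proof is correct, and the converse implication matches the paper's exactly (both invoke assertion (4) of Proposition \ref{p7}). For the direct implication, however, you take a different route from the paper. The paper writes $\C=\C\cap\A=\C\cap(\B+c(\A,\B))$ and then applies the distributive law of Proposition \ref{p8dd}(2), after first checking via Proposition \ref{pppppp8}(2) (and the symmetry of the star relation, Proposition \ref{p11xx}) that the pairs $(\C,\B)$ and $(\C,c(\A,\B))$ satisfy the star property. Your argument instead pulls the decomposition $\C=\D_1\op\D_2$ directly out of Proposition \ref{p7}(4), uses Proposition \ref{p61}(6) to place $\D_2$ inside $c(\A,\B)$, and closes with a sandwich $\C\subseteq(\C\cap\B)+(\C\cap c(\A,\B))\subseteq\C$. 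This is more elementary: it bypasses both the distributive law and Proposition \ref{pppppp8}, relying only on Propositions \ref{p7} and \ref{p61}. The paper's route is shorter once those auxiliary results are available and highlights the distributivity structure; yours is more self-contained and makes the identification of the two summands explicit.
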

\begin{proof}The converse implication is a consequence of assertion (4) in Proposition \ref{p7}. For the direct one, we have that $\A=\B\oplus c(\A,\B)$ and $\C=\C\cap\A$, so by assertion (2) in Proposition \ref{pppppp8} and assertion (2) in Proposition \ref{p8dd}, we get the result.
\end{proof}


\subsection{Metric and correlation.}
As application of the random dimension $\dd$ of the plug-in vector space, we will introduce the notion of metric and correlation for pair of sets in $\b(\A)$ with $\A\in\b_n$ for some integer $n\geq 1$. In order to allege notation, we define the mapping $\mu(\B)=\te(\dd(\B))$ for $\B\in\b(\A)$.

\begin{prop}\label{p5}Let $\B,\C\in\b(\A)$ with $\A\in\b_n$ for some integer $n\geq 1$. Then
\begin{enumerate}
\item the relation defined by $\B\sim \C$ if $\dd(\B)=\dd(\C)$, is an equivalence relation.
\item the mapping $\varphi$ defined by $\varphi(\B,\C)={\tilde \e}\left(|\dd(\B)-\dd(\C)|\right)$, is a metric w.r.t. the equivalence relation $\sim$.
\item the mapping $\eta$ defined by $\eta(\B,\C)=\mu(\B+\C)-\mu(\B\cap\C)$, is a metric.
\item $\eta(\B,\C)=2\mu(\B+\C)-\mu(\B)-\mu(\C)$.
\item $\eta(\B,\C)=\mu(\B)+\mu(\C)-2\mu(\B\cap\C)$.
\item $\eta(\B,\C)=\varphi(\B+\C,\B)+\varphi(\B+\C,\C)$.
\item $\varphi(\B,\C)\leq\eta(\B,\C)$.
\item $\varphi(\B,\C)=\eta(\B,\C)$ if $\B\subseteq\C$.
\end{enumerate}

\end{prop}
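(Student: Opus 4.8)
The plan is to reduce every assertion to the modularity identity
\[
\dd(\B+\C)+\dd(\B\cap\C)=\dd(\B)+\dd(\C)\qquad\text{a.s.}
\]
of Theorem \ref{ppp}, together with one preliminary fact: $\dd$ is monotone on $\b(\A)$, i.e. $\B\subseteq\C$ implies $\dd(\B)\le\dd(\C)$ a.s. To prove this, let $\D:=c(\C,\B)$ be the orthogonal complement of $\B$ in $\C$ furnished by Theorem \ref{tt2}; then $\C=\B+\D$ and $\B\cap\D=\Linf_-$ by Corollary \ref{cccc}, so $\dd(\B\cap\D)=0$ and the modularity identity gives $\dd(\C)=\dd(\B)+\dd(\D)\ge\dd(\B)$ a.s., $\dd$ being non-negative by Proposition \ref{pp2}. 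Note also that $\dd$ is $\{0,\dots,n\}$-valued, so $\mu$ is finite and there are no integrability issues.

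With monotonicity available, most items are bookkeeping. Item (1) is immediate, a.s.-equality of random variables being reflexive, symmetric and transitive. For (2): $\varphi\ge 0$, $\varphi(\B,\C)=0$ iff $\dd(\B)=\dd(\C)$ a.s. iff $\B\sim\C$, symmetry is clear, and the triangle inequality follows by integrating $|\dd(\B)-\dd(\D)|\le|\dd(\B)-\dd(\C)|+|\dd(\C)-\dd(\D)|$. For (4) and (5), take expectations in the modularity identity to get $\mu(\B\cap\C)=\mu(\B)+\mu(\C)-\mu(\B+\C)$ and substitute into $\eta(\B,\C)=\mu(\B+\C)-\mu(\B\cap\C)$, which yields both formulas. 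For (6), monotonicity gives $\dd(\B)\vee\dd(\C)\le\dd(\B+\C)$ a.s., so $\varphi(\B+\C,\B)=\mu(\B+\C)-\mu(\B)$ and $\varphi(\B+\C,\C)=\mu(\B+\C)-\mu(\C)$; adding and using (4) proves the identity. For (7), monotonicity gives $\dd(\B\cap\C)\le\dd(\B),\dd(\C)\le\dd(\B+\C)$ a.s., hence $|\dd(\B)-\dd(\C)|\le\dd(\B+\C)-\dd(\B\cap\C)$ a.s.; integrating gives $\varphi(\B,\C)\le\eta(\B,\C)$. For (8), if $\B\subseteq\C$ then $\B+\C=\C$, $\B\cap\C=\B$, and $\dd(\B)\le\dd(\C)$ a.s., so both $\eta(\B,\C)$ and $\varphi(\B,\C)$ equal $\mu(\C)-\mu(\B)$.

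The one item with genuine content is (3). Non-negativity of $\eta$ is monotonicity applied to $\B\cap\C\subseteq\B+\C$, and symmetry is clear. If $\eta(\B,\C)=0$ then $\dd(\B+\C)=\dd(\B\cap\C)$ a.s.; since monotonicity gives $\dd(\B\cap\C)\le\dd(\B)\le\dd(\B+\C)$ a.s., equality of the two ends forces $\dd(\B)=\dd(\B\cap\C)$ a.s., whence $\B=\B\cap\C$ by Proposition \ref{pe1} applied under the inclusion $\B\cap\C\subseteq\B$; so $\B\subseteq\C$, and symmetrically $\C\subseteq\B$, giving $\B=\C$ (the converse is trivial). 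For the triangle inequality I would use (4) to reduce $\eta(\B,\D)\le\eta(\B,\C)+\eta(\C,\D)$ to the submodularity-type estimate $\mu(\B+\D)+\mu(\C)\le\mu(\B+\C)+\mu(\C+\D)$, and prove the latter by applying Theorem \ref{ppp} to the pair $\B+\C,\,\C+\D$: as $(\B+\C)+(\C+\D)=\B+\C+\D\supseteq\B+\D$ and $(\B+\C)\cap(\C+\D)\supseteq\C$, monotonicity gives $\dd(\B+\D)\le\dd(\B+\C+\D)$ and $\dd(\C)\le\dd((\B+\C)\cap(\C+\D))$, which added to $\dd(\B+\C)+\dd(\C+\D)=\dd(\B+\C+\D)+\dd((\B+\C)\cap(\C+\D))$ yield $\dd(\B+\D)+\dd(\C)\le\dd(\B+\C)+\dd(\C+\D)$ a.s.; taking $\te$ finishes the argument. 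The only routine point to keep an eye on is that all the sums and intersections occurring stay within $\b(\A)$, so that Theorems \ref{ppp} and \ref{tt2} and Proposition \ref{pe1} apply; this follows from the closure properties of $\b(\A)$ used throughout the paper.
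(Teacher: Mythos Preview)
Your proof is correct and follows essentially the same strategy as the paper's: both reduce everything to the modularity identity of Theorem~\ref{ppp} together with monotonicity of $\dd$ (which you make explicit, while the paper leaves it implicit) and Proposition~\ref{pe1}. The only visible difference is in the triangle inequality for~$\eta$: the paper works via formula~(5) and applies modularity to the pair $\B\cap\D,\,\D\cap\C$ (showing $\mu(\B\cap\D)+\mu(\D\cap\C)\le\mu(\D)+\mu(\B\cap\C)$), whereas you work via formula~(4) and apply modularity to the pair $\B+\C,\,\C+\D$; these are exactly dual arguments on the lattice $\b(\A)$, and neither buys more than the other.
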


\begin{proof}Assertions (1), (2) and (6) are easy to show. Assertions (4) and (5) are immediate consequences of Proposition \ref{ppp}.

For (3), we show first that $\eta$ is reflexive. $\eta(\B,\B)=0$ and if $\eta(\B,\C)=0$ for some $\B,\C\in\b(\A)$, then $\B+\C=\B\cap\C$ thanks to Proposition \ref{pe1} and therefore $\B=\C$. For the transitivity property, let $\B,\C,\D\in\b(\A)$ and define $q:=\eta(\B,\D)+\eta(\D,\C)-\eta(\B,\C)$, then by assertion (5) we get
$$
q=\mu(\B)+\mu(\D)-2\mu(\B\cap\D)+\mu(\D)+\mu(\C)-2\mu(\D\cap\C)-\mu(\B)-\mu(\C)+2\mu(\B\cap\C)
$$
$$
=2\mu(\D)-2\mu(\B\cap\D)-2\mu(\D\cap\C)+2\mu(\B\cap\C).
$$
Since $\mu(\B\cap\D)+\mu(\D\cap\C)=\mu((\B\cap\D)+(\D\cap\C))+\mu(\B\cap\D\cap\C)$ with $((\B\cap\D)+(\D\cap\C))\subseteq\D$ and $(\B\cap\D\cap\C)\subseteq\B\cap\C$, therefore $q\geq 0$.

Assertions (7) and (8) are immediate consequences of assertion (6).

\end{proof}

Now we introduce the correlation coefficient.

\begin{defi}\label{d6}
Let $\B,\C\in\b(\A)$ with $\A\in\b_n$ for some integer $n\geq 1$. We define the correlation coefficient of $\B$ and $\C$ by:
$$
\Corr(\B,\C)=\dfrac{\mu(\B\cap\C)}{\mu(\B+\C)},
$$
if $\mu(\B+\C)>0$ and $\Corr(\B,\C)=0$ if not.

\end{defi}

\begin{thm}\label{t6}
Let $\B,\C\in\b(\A)$ with $\A\in\b_n$ for some integer $n\geq 1$. Then we have the following:
\begin{enumerate}
\item $\Corr(\B,\C)\in [0,1]$.
\item $\Corr(\B,\C)=\Corr(\C,\B)$.
\item $\Corr(\B,\C)=0$ iff $\B\cap\C=\Linf_-$.
\item $\Corr(\B,\C)=1$ iff $\B=\C$.

\end{enumerate}
\end{thm}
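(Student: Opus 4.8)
The plan is to derive all four statements from two facts already available. The first is that $\mu=\te\circ\dd$ is non-negative and monotone on $\b(\A)$, with $\mu(\D)=0$ exactly when $\D=\Linf_-$: non-negativity is clear since $\dd$ is $\{0,\ldots,n\}$-valued; monotonicity follows from Theorems \ref{tt2} and \ref{ppp}, because $\D\subseteq\D'$ gives $\D'=\D\op c(\D',\D)$ with $\D\cap c(\D',\D)=\Linf_-$ by Proposition \ref{pop}, whence $\dd(\D')=\dd(\D)+\dd(c(\D',\D))\ge\dd(\D)$ $\tp$-a.s.; and $\mu(\D)=0$ forces $\dd(\D)=0$ $\tp$-a.s., hence $\D=\Linf_-$ by Proposition \ref{pp2} together with Remark \ref{r3}. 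The second fact is the modular identity $\dd(\B+\C)=\dd(\B)+\dd(\C)-\dd(\B\cap\C)$ of Theorem \ref{ppp}, which after applying $\te$ reads $\mu(\B+\C)=\mu(\B)+\mu(\C)-\mu(\B\cap\C)$.

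Part (1): since $\B\cap\C\subseteq\B+\C$, monotonicity gives $0\le\mu(\B\cap\C)\le\mu(\B+\C)$, so when $\mu(\B+\C)>0$ the ratio $\rho(\B,\C)$ lies in $[0,1]$, and otherwise $\rho(\B,\C)=0\in[0,1]$ by the convention. Part (2) is immediate from the commutativity of $\cap$ and $+$.

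Part (3): if $\mu(\B+\C)>0$, then $\rho(\B,\C)=0$ iff $\mu(\B\cap\C)=0$ iff $\B\cap\C=\Linf_-$; if $\mu(\B+\C)=0$, then $\B+\C=\Linf_-$, and since $\Linf_-$ is contained in both $\B$ and $\C$ which are in turn contained in $\B+\C$, we get $\B=\C=\B\cap\C=\Linf_-$, while $\rho(\B,\C)=0$ by convention. Hence in every case $\rho(\B,\C)=0$ iff $\B\cap\C=\Linf_-$.

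Part (4): if $\rho(\B,\C)=1$ then necessarily $\mu(\B+\C)>0$ and $\mu(\B\cap\C)=\mu(\B+\C)$; feeding this into the chains $\mu(\B\cap\C)\le\mu(\B)\le\mu(\B+\C)$ and $\mu(\B\cap\C)\le\mu(\C)\le\mu(\B+\C)$ forces $\mu(\B)=\mu(\C)=\mu(\B\cap\C)$. As $\dd(\B\cap\C)\le\dd(\B)$ $\tp$-a.s.\ with the same $\te$-value, in fact $\dd(\B\cap\C)=\dd(\B)$ $\tp$-a.s., so Proposition \ref{pe1} applied to the inclusion $\B\cap\C\subseteq\B$ yields $\B\cap\C=\B$, and symmetrically $\B\cap\C=\C$, hence $\B=\C$. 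Conversely, if $\B=\C$ then $\B\cap\C=\B+\C=\B$, so $\rho(\B,\C)=\mu(\B)/\mu(\B)=1$ as soon as $\B$ is non-trivial, the only exceptional pair being $\B=\C=\Linf_-$ (where $\rho=0$ by convention). The main — and essentially only — obstacle is the bookkeeping that turns $\te$-level equalities into $\tp$-a.s.\ equalities and then, via Proposition \ref{pe1}, into equalities of sets; all the structural input has already been supplied by Theorems \ref{tt2} and \ref{ppp}.
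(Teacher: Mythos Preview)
Your proof is correct and follows essentially the same line as the paper's: both hinge on the pointwise inequality $\dd(\B\cap\C)\le\dd(\B+\C)$ together with Proposition~\ref{pe1} to upgrade equality of $\mu$-values to equality of sets. The paper's argument in~(4) is a touch more direct (it applies Proposition~\ref{pe1} once to the inclusion $\B\cap\C\subseteq\B+\C$ rather than twice to $\B\cap\C\subseteq\B$ and $\B\cap\C\subseteq\C$), and you are right to flag the edge case $\B=\C=\Linf_-$, where the convention forces $\rho=0$; the paper's statement tacitly excludes this trivial pair.
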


\begin{proof}(1) Since $0\leq \dd(\B\cap\C)\leq\dd(\B+\C)$, then $\Corr(\B,\C)\in[0,1]$. Assertions (2) and (3) are trivial.

(4) The converse implication is trivial, let us show the direct one. We have that $\B+\C=\B\cap\C$ thanks to Proposition \ref{pe1}. So $\B=\C$.

\end{proof}

\begin{cor}\label{c7}Let suppose the vector process $(X,Y)\in\cc$ satisfy the property $(\gamma)$. Then $\varphi(\A(X),\A(Y))=|r(X)-r(Y)|$, $\eta(\A(X),\A(Y))=r(X)+r(Y)$ and $\rho(\A(X),\A(Y))=0$.

\end{cor}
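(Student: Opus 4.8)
The plan is to compute everything through the random dimensions of the four sets $\A(X)$, $\A(Y)$, $\A(X)\cap\A(Y)$ and $\A(X)+\A(Y)$, all of which I expect to be $\tp$-a.s. deterministic once the hypothesis is exploited.

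Write $X\in\cc_n$ and $Y\in\cc_r$, so that $(X,Y)\in\cc_{n+r}$. First I would record that $X$ and $Y$ \emph{separately} satisfy the property $(\gamma)$: if $\lam\bullet X=0$ for some $\lam\in\m_{1,n}(\cp)$, then $(\lam,0)\bullet(X,Y)=0$, whence $(\lam,0)\equiv 0$ by the property $(\gamma)$ of $(X,Y)$, so $\lam\equiv 0$; the argument for $Y$ is identical (this is exactly the reasoning used in Corollary \ref{c1}). Applying Proposition \ref{xp5} to $X$, to $Y$ and to $(X,Y)$ then gives $\tp(\phi_n(X))=\tp(\phi_r(Y))=\tp(\phi_{n+r}(X,Y))=1$, hence, by Proposition \ref{pp2}, $\dd(X)=n$, $\dd(Y)=r$ and $\dd(X,Y)=n+r$ hold $\tp$-a.s., and $r(X)=n$, $r(Y)=r$. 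Since $\A(X)\subseteq\A(X,Y)$ and $\A(Y)\subseteq\A(X,Y)$, the sets $\A(X)$ and $\A(Y)$ lie in $\b(\A(X,Y))$ with $\A(X,Y)\in\b_{n+r}$, so $\varphi$, $\eta$ and $\rho$ are genuinely defined on the pair $(\A(X),\A(Y))$, and $\mu(\A(X))=r(X)=n$, $\mu(\A(Y))=r(Y)=r$, $\mu(\A(X,Y))=n+r$.

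Next I would use $\A(X)+\A(Y)=\A(X,Y)$ (as in the proof of Proposition \ref{p4}) together with Theorem \ref{ppp} to obtain
$$
n+r=\dd(\A(X,Y))=\dd(\A(X)+\A(Y))=\dd(\A(X))+\dd(\A(Y))-\dd(\A(X)\cap\A(Y))=n+r-\dd(\A(X)\cap\A(Y)),
$$
so that $\dd(\A(X)\cap\A(Y))=0$ $\tp$-a.s. and therefore $\mu(\A(X)\cap\A(Y))=0$.

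It then remains only to substitute into the definitions. Since $\tp$ is a probability, $\te$ of a constant equals that constant, so $\varphi(\A(X),\A(Y))=\te(|\dd(\A(X))-\dd(\A(Y))|)=|n-r|=|r(X)-r(Y)|$; from the definition of $\eta$ in Proposition \ref{p5}, $\eta(\A(X),\A(Y))=\mu(\A(X)+\A(Y))-\mu(\A(X)\cap\A(Y))=(n+r)-0=r(X)+r(Y)$; and since $n+r\geq 2>0$, Definition \ref{d6} gives $\rho(\A(X),\A(Y))=\mu(\A(X)\cap\A(Y))/\mu(\A(X)+\A(Y))=0$. The only points that need care are the \emph{constancy} of $\dd(X)$, $\dd(Y)$, $\dd(X,Y)$ — which is precisely the content of Propositions \ref{xp5} and \ref{pp2} once $(\gamma)$ is known for each of the three processes — and the fact that $\A(X)\cap\A(Y)$ is an admissible argument for Theorem \ref{ppp}; the latter is already built into the statement of that theorem, so I do not expect any substantial obstacle.
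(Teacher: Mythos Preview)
Your argument is correct and is exactly the computation the paper intends: the corollary is stated without proof in the paper, and the route you take---inheriting property $(\gamma)$ to $X$ and to $Y$ as in Corollary~\ref{c1}, reading off $\dd(X)=r(X)$, $\dd(Y)=r(Y)$, $\dd(X,Y)=r(X)+r(Y)$ from Propositions~\ref{xp5} and~\ref{pp2}, and then invoking Theorem~\ref{ppp} to get $\dd(\A(X)\cap\A(Y))=0$---is precisely the combination of earlier results that makes this a corollary. Nothing further is needed.
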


\begin{cor}\label{c22}Let $\B,\C,\D\in\b(\A)$ with $\A\in\b_n$ for some integer $n\geq 1$ and suppose that $\D\subseteq\C\subseteq\B$. Then $\Corr(\B,\D)=\Corr(\B,\C)\Corr(\C,\D)$.

\end{cor}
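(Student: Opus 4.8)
The plan is to reduce the statement to the defining formula $\Corr(\B,\C)=\mu(\B\cap\C)/\mu(\B+\C)$ together with two elementary observations. First I would record that, since every element of $\b(\A)$ is a convex cone containing $\0$, an inclusion between two such sets collapses the relevant lattice operations: from the chain $\D\subseteq\C\subseteq\B$ one gets $\B\cap\D=\D$, $\B+\D=\B$, $\B\cap\C=\C$, $\B+\C=\B$, $\C\cap\D=\D$ and $\C+\D=\C$. Hence the three denominators appearing in the three correlations are $\mu(\B+\D)=\mu(\B)$, $\mu(\B+\C)=\mu(\B)$ and $\mu(\C+\D)=\mu(\C)$, and the corresponding numerators are $\mu(\D)$, $\mu(\C)$ and $\mu(\D)$.

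Second I would establish monotonicity of the random dimension along the chain. Applying Theorem \ref{tt2} to the inclusion $\C\subseteq\B$ produces the orthogonal complement $c(\B,\C)$ with $\dd(\B)=\dd(\C)+\dd(c(\B,\C))\ge\dd(\C)$, since $\dd$ is nonnegative; likewise $\dd(\C)\ge\dd(\D)$, and $\dd(\B)\le\dd(\A)\le n$. Taking $\te$ gives $0\le\mu(\D)\le\mu(\C)\le\mu(\B)\le n$, all finite. In particular $\mu(\B)=0$ forces $\mu(\C)=\mu(\D)=0$, and $\mu(\C)=0$ forces $\mu(\D)=0$.

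It then remains to combine these facts and to split into cases according to whether $\mu(\B)$ and $\mu(\C)$ vanish. In the main case $\mu(\B)>0$ and $\mu(\C)>0$ we read off $\Corr(\B,\D)=\mu(\D)/\mu(\B)$, $\Corr(\B,\C)=\mu(\C)/\mu(\B)$ and $\Corr(\C,\D)=\mu(\D)/\mu(\C)$, so that $\Corr(\B,\C)\,\Corr(\C,\D)=\tfrac{\mu(\C)}{\mu(\B)}\cdot\tfrac{\mu(\D)}{\mu(\C)}=\tfrac{\mu(\D)}{\mu(\B)}=\Corr(\B,\D)$. If $\mu(\C)=0$ then $\mu(\D)=0$, hence $\Corr(\C,\D)=0$ by the convention of Definition \ref{d6} (its denominator $\mu(\C+\D)=\mu(\C)$ vanishes), while $\Corr(\B,\D)=0$ and $\Corr(\B,\C)=0$ as well, so both sides of the identity are $0$. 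If $\mu(\B)=0$ then all three of $\mu(\B),\mu(\C),\mu(\D)$ vanish and every correlation in the identity is $0$ by definition.

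I do not expect a genuine obstacle: the content is bookkeeping around the definition of $\Corr$ and the cone axioms. The only step that is not a one-line observation is the monotonicity $\dd(\D)\le\dd(\C)\le\dd(\B)$, for which the orthogonal complement with additive random dimension supplied by Theorem \ref{tt2} is the clean tool; the degenerate cases $\mu(\C)=0$ or $\mu(\B)=0$ then require the convention in Definition \ref{d6} to be invoked explicitly rather than dividing by zero.
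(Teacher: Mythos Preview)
Your argument is correct and is exactly the computation the paper has in mind: the corollary is stated without proof because, once the chain $\D\subseteq\C\subseteq\B$ collapses the lattice operations to $\B\cap\C=\C$, $\B+\C=\B$, etc., the identity is an immediate cancellation in the defining fractions, with the degenerate cases handled by the convention in Definition~\ref{d6}. Your use of Theorem~\ref{tt2} to obtain $\dd(\D)\le\dd(\C)\le\dd(\B)$ is a clean way to justify the monotonicity needed for the degenerate cases, though one could equally read it off from $\dd(\B\cap\C)\le\dd(\B+\C)$.
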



\section{Application to incomplete markets in the Brownian setting.}

\subsection{Degree of completeness.}

Going back to the idea of measuring the degree of incompleteness of an incomplete financial market, we need first to assure that a complete market exists. Thanks to Theorem \ref{t5}, we should work in a setting where the filtration satisfies the martingale representation property and then without loss of generality we consider working in a Browian setting along this section. A variety of models for the price process are built in this framework, namely the well known Black-Scholes model and the family of stochastic volatility models. We consider an $\R^n$-valued Browian motion $B=(B^1,\ldots,B^n)$ with the associated filtration. In fact we follow the idea discussed briefly in the introduction and assume the following assumption: (i) all financial risks are modelled by a filtration $\bF:=(\F_t)_{t\in\T}$.
(ii) the filtration $\bF$ is generated by a Brownian motion.

The question is to determine how much gap the filtration $\bF^X$ is filling, where $X$ is the proposed model for the price process.

We will show first some properties in this setting, in particular that the property $(\gamma)$ and maximality property in $\b$ are equivalent.

\begin{thm}\label{t1}We have the following for $r\leq n$:
\begin{enumerate}
\item For any $\bq\in\p$, the set $\A^{\bq}:=\{X\in\Linf:\;\eq(X)\leq 0\}$ is maximal in $\b_n$.
\item Any process $X\in\cc_r$ satisfies the property $(\tgamma)$.
\item A process $X\in\cc_r$ satisfies the property $(\gamma)$ if and only if $\A(X)$ is maximal in $\b_r$.
\item For any set $\B\in\b_r$, there exits some maximal set $\A$ in $\b_r$ containing $\B$.
\item A process $X\in\cc_n$ satisfies the property $(\gamma)$ if and only if $\q^X$ is reduced to a singleton.

\end{enumerate}

\end{thm}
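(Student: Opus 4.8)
The plan is to obtain all five assertions from assertion~(1), which is the only place where the martingale representation property of the Brownian filtration is really used; (2)--(5) then follow formally from Propositions~\ref{p22} and~\ref{pt1}. Throughout, recall that $r\le n$ as in the hypothesis.

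\emph{Assertion (1).} Fix $\bq\in\p$ (for $\A^{\bq}$ to be an acceptance set at all one needs $\bq\in\p^e$, so we work with such $\bq$). By martingale representation the density martingale of $\bq$ is a stochastic exponential, and a drift transform of $B$ produces an $n$-dimensional $\bq$-Brownian motion $\tilde B\in\cc_n$. Since $[\tilde B^i,\tilde B^j]_t=\delta_{ij}\,t$ is nondegenerate, any $\lam\in\m_{1,n}(\cp)$ with $\lam\bullet\tilde B=0$ satisfies $\sum_i(\lam^i)^2\bullet t=0$, hence $\lam\equiv 0$; thus $\tilde B$ satisfies property $(\gamma)$. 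Next I show $\A(\tilde B)=\A^{\bq}$: the inclusion $\subseteq$ is the $\bq$-supermartingale property of admissible stochastic integrals against $\tilde B$, and $\supseteq$ follows because any $h\in\Linf$ with $\eq(h)\le 0$ can be written, via martingale representation under $\bq$, as $h=\eq(h)+\alpha\bullet\tilde B_T\le\alpha\bullet\tilde B_T$. Hence $\A^{\bq}=\A(\tilde B)$, which by Proposition~\ref{p22} is maximal in $\b_n$, and in particular $\A^{\bq}\in\b_n$. I also record that $\q^{\A^{\bq}}=\{\bq\}$: a probability taking non-positive values on the half-space $\A^{\bq}$ must, by a separation argument, equal $\bq$.

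\emph{Assertions (2)--(4).} For (2), given $X\in\cc_r$ pick $\bq\in\Ml^e(X)$; by admissibility $\eq(h)\le 0$ for every $h\in\A(X)$, so $\A(X)\subseteq\A^{\bq}=\A(\tilde B)$ with $\tilde B\in\cc_n$, $n\ge r$, satisfying $(\gamma)$, which is precisely property $(\tgamma)$ for $X$. Assertion (3) is then immediate from Proposition~\ref{p22}, whose implication $(\gamma)\Rightarrow$ maximality in $\b_r$ is unconditional while the converse holds under $(\tgamma)$, i.e.\ by (2). For (4), write $\B=\A(X)$ with $X\in\cc_r$ and $r(X)=r$ (Corollary~\ref{c2}); by (2) and Proposition~\ref{pt1} there is $Y\in\cc_r$ satisfying $(\gamma)$ with $\A(X)\subseteq\A(Y)$, and since $(\gamma)$ for $Y\in\cc_r$ forces $\tp(\phi_r(Y))=1$ we get $r(Y)=r$, so $\A(Y)\in\b_r$ and, by (3), is maximal in $\b_r$; this is the desired $\A\supseteq\B$.

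\emph{Assertion (5).} If $X\in\cc_n$ satisfies $(\gamma)$ then $\A(X)$ is maximal in $\b_n$ by (3); choosing $\bq\in\Ml^e(X)$ we have $\A(X)\subseteq\A^{\bq}\in\b_n$ by (1), hence $\A(X)=\A^{\bq}$ and $\q^X=\q^{\A^{\bq}}=\{\bq\}$ is a singleton. Conversely, if $\q^X=\{\bq\}$ then $\bq$ is equivalent (as $\Ml^e(X)\neq\emptyset$), and since $\A(X)\subseteq\A^{\bq}$ while the two acceptance sets share the same singleton set of test probabilities, a separation argument gives $\A(X)=\A^{\bq}$; the latter is maximal in $\b_n$ by (1), and since $X\in\cc_n$, assertion (3) forces $X$ to satisfy $(\gamma)$. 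The main obstacle is making assertion~(1) rigorous: dealing with the admissibility and integrability technicalities concealed in $\A(X)=(B(X)-\L^0_+)\cap\Linf$, and verifying that the drift transform $\tilde B$ still generates the Brownian filtration, so that $\Ml(\tilde B)=\{\bq\}$ and $\A^{\bq}=\A(\tilde B)$ as claimed; once that is in place the remaining assertions are routine applications of Propositions~\ref{p22} and~\ref{pt1}.
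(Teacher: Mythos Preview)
Your proof is correct and follows essentially the same route as the paper: for (1) you use Girsanov to produce the $\bq$-Brownian motion $\tilde B=B^{\bq}$, identify $\A^{\bq}=\A(\tilde B)$ via martingale representation, and conclude maximality in $\b_n$ from property $(\gamma)$ and Proposition~\ref{p22}; assertions (2)--(5) are then deduced from (1) together with Propositions~\ref{p22} and~\ref{pt1} exactly as in the paper. Your argument is in fact slightly more explicit than the paper's (you spell out why $\tilde B$ satisfies $(\gamma)$ and why both inclusions in $\A(\tilde B)=\A^{\bq}$ hold, and you unpack both directions of (5) separately), and your closing caveats about admissibility and filtration generation are fair technical remarks the paper leaves implicit.
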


\begin{proof}(1) We know that the density process $Z$ of $\bq$, is the solution of the stochastic differential equation $dZ/Z=\lam.dW$ for some $\lam\in\m_{1,n}(\cp)$. So the process $B^{\bq}=B-\int_0^.\,\lam_s\,ds$ is a $\bq$-Brownian motion, which means that $\A^{\bq}=\A(B^{\bq})$ and then $\A^{\bq}\in\b_n$ and therefore $\A^{\bq}$ is maximal in $\b_n$.

(2) Let $X\in\cc_r$ and $\bq\in\q^X$ with $\bq\sim\P$. Then $\A(X)\subseteq\A^{\bq}=\A(B^{\bq})$ and the process $B^{\bq}$ satisfies the property $(\gamma)$.

(3) It is an immediate consequence of Proposition \ref{p22}.

(4) Let $\B\in\b_r$, so $\B=\A(Y)$ for some $Y\in\cc_r$ which satisfies the property $(\tgamma)$ thanks to assertion (3). Thanks to Proposition \ref{pt1}, there exists some $X\in\cc_r$ satisfying the property $(\gamma)$ such that $\A(Y)\subseteq\A(X)$ and $\A(X)$ is maximal in $\b_r$ thanks to assertion (3).

(5) Thanks to assertion (3), the process $X\in\cc_n$ satisfies the property $(\gamma)$ if and only if $\A(X)$ is maximal in $\b_n$, which is equivalent to $\q^X$ being reduced to a singleton.

\end{proof}

Now we define the two proportional average degrees of completeness and incompleteness of an incomplete financial market.

\begin{defi}\label{dd1}Let $X\in\cc$. We define respectively the proportional average degrees of completeness and incompleteness of the financial market generated by $X$, by
$$
\delta_c(X)=\dfrac{\varphi(\A(X),\Linf_-)}{\varphi(\A(B),\Linf_-)}={\tilde \e}\left(\dfrac{\dd(X)}{n}\right),
$$
and
$$
\delta_i(X)=\dfrac{\varphi(\A(X),\A(B))}{\varphi(\Linf_-,\A(B))}=1-\delta_c(X),
$$
where the metric $\varphi$ is defined in Proposition \ref{p5}.
\end{defi}

Some properties of $\delta_c$ are stated below.

\begin{prop}\label{p3}Let $(X,Y)\in\cc$. Then
\begin{enumerate}
\item $\delta_c(X)\leq \delta_c(X,Y)$.
\item $\delta_c(X)= \delta_c(X,Y)$ if and only if $\A(Y)\subseteq\A(X)$.
\item $\delta_c(X)=1$ if and only if $\q^X$ is reduced to a singleton.

\end{enumerate}

\end{prop}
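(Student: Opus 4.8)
The plan is to translate the three claims into pointwise (a.s.) statements about the random dimension, using that by Definition~\ref{dd1} one has $\delta_c(Z)=\te(\dd(Z)/n)$ for every $Z\in\cc$, where $n$ is the dimension of the driving Brownian motion $B$. First I would fix $\bq\in\Ml^e(X,Y)$ --- this set is nonempty because $(X,Y)\in\cc$, and $\Ml^e(X,Y)\subseteq\Ml^e(X)$ --- so that $\Linf_-\subseteq\A(X)\subseteq\A(X,Y)\subseteq\A^{\bq}=\A(B^{\bq})$, with $B^{\bq}$ a $\bq$-Brownian motion. Thus all the acceptance sets in sight lie in $\b(\A(B^{\bq}))$; moreover $\A(B^{\bq})$ is maximal in $\b_n$ by Theorem~\ref{t1}(1), $B^{\bq}$ satisfies the property $(\gamma)$, and hence $\dd(B^{\bq})=n$ a.s.\ (Proposition~\ref{xp5} gives $\tp(\phi_n(B^{\bq}))=1$, so $\dd(B^{\bq})=\oplus_k\, k\1_{\phi_k(B^{\bq})}=n$ a.s.).

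For assertion (1) the key input is the monotonicity $\B\subseteq\C\Rightarrow\dd(\B)\le\dd(\C)$ a.s. Since $\A(X,Y)=\A(X)+\A(Y)$ one has $\A(X)\in\b(\A(X,Y))$, so Theorem~\ref{tt2} provides $\D\in\b(\A(X,Y))$ with $\A(X,Y)=\A(X)\oplus\D$ and $\dd(\D)=\dd(X,Y)-\dd(X)$ a.s.; in particular $\dd(X)\le\dd(X,Y)$ a.s. Taking $\te$ and dividing by $n$ yields $\delta_c(X)\le\delta_c(X,Y)$.

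For assertion (2): if $\A(Y)\subseteq\A(X)$ then $\A(X,Y)=\A(X)+\A(Y)=\A(X)$, hence $\dd(X,Y)=\dd(X)$ and the degrees coincide. Conversely $\delta_c(X)=\delta_c(X,Y)$ forces $\te(\dd(X,Y)-\dd(X))=0$, which combined with $\dd(X)\le\dd(X,Y)$ a.s.\ gives $\dd(X)=\dd(X,Y)$ a.s.; since $\A(X)\subseteq\A(X,Y)$, Proposition~\ref{pe1} then forces $\A(X)=\A(X,Y)$, whence $\A(Y)\subseteq\A(X)$. For assertion (3): the monotonicity gives $0\le\dd(X)\le\dd(B^{\bq})=n$ a.s., so $\delta_c(X)=\te(\dd(X)/n)=1$ if and only if $\dd(X)=n$ a.s. If $\dd(X)=n$ a.s., then $\dd(X)=\dd(B^{\bq})$ a.s.\ and $\A(X)\subseteq\A(B^{\bq})$, so Proposition~\ref{pe1} yields $\A(X)=\A(B^{\bq})=\A^{\bq}$ and therefore $\q^X=\{\bq\}$ is a singleton. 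Conversely, if $\q^X=\{\bq\}$ is a singleton, then $\bq\sim\P$ (because $X\in\cc$) and $\A(X)=\A^{\bq}=\A(B^{\bq})$, hence $\dd(X)=\dd(B^{\bq})=n$ a.s.\ and $\delta_c(X)=1$.

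The only ingredient that is not purely formal is the dimension monotonicity $\B\subseteq\C\Rightarrow\dd(\B)\le\dd(\C)$; it is not a tautology but follows at once from the orthogonal complement decomposition of Theorem~\ref{tt2}. I expect the remaining subtlety to be bookkeeping: keeping track of which ambient element of $\b_n$ each acceptance set is viewed inside, and using that $\dd(\A(X))$ depends only on $\A(X)$ and not on the chosen generator (Propositions~\ref{pp2} and~\ref{pe1}). Everything else is routine.
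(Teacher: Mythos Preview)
Your argument is correct. For (1) and (2) the paper simply declares them ``immediate consequences of the definition,'' whereas you make explicit the monotonicity $\B\subseteq\C\Rightarrow\dd(\B)\le\dd(\C)$ via the orthogonal complement of Theorem~\ref{tt2} and then invoke Proposition~\ref{pe1} for the equality case --- this is precisely what is needed to justify the paper's one-line claim. For (3) your route differs slightly from the paper's: the paper argues through the chain $\dd(X)=n\Leftrightarrow\tp(\phi_n(X))=1\Leftrightarrow\q^X$ is a singleton (the last equivalence combining Proposition~\ref{xp5} with Theorem~\ref{t1}(5)), while you instead compare $\A(X)$ directly with $\A(B^{\bq})$ using Proposition~\ref{pe1}. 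The two arguments are essentially equivalent in length and depth; yours has the small advantage that it sidesteps any notational issue about $\phi_n(X)$ when $X\in\cc_r$ with $r\neq n$, since you work only with the invariant $\dd(\A(X))$ rather than with the ranking partition of a particular generator.
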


\begin{proof}The assertions (1) and (2) are immediate consequences of the definition. (3) $\delta_c(X)=1$ $\Leftrightarrow$ $\dd(X)=n$ since $\dd(X)\leq n$ $\Leftrightarrow$ $\tp(\phi_n(X))=1$ $\Leftrightarrow$ $\q^X$ is reduced to a singleton.

\end{proof}

\begin{rem}\label{r6}Let $(X,Y)\in\cc$ such that $\A(X)\subseteq\A(Y)$. Then $\delta_c(X)=\Corr(\A(X),\A(Y))\delta_c(Y)$. In particular $\delta_c(X)=\Corr(\A(X),\A(B^{\bq}))$ for any $\bq\in\Ml^e(X)$ and $B^{\bq}$ is the associated $\bq$-Brownian motion.

\end{rem}

\subsection{Hedging process.}
Finally we look at the hedging process of a contingent claim. First we state a martingale decomposition theorem in this context.

\begin{thm}\label{t22'}Let $X\in\cc$ and $\bq\in\Ml^e(X)$. Then
\begin{enumerate}
\item there exists $X'\in\cc$ which is orthogonal to $X$ such that $\{\bq\}=\Ml(X,X')$ and $\dd(X)+\dd(X')=n$.
\item for any local $\bq$-martingale $L$, there exists two vector predictable processes $\al$ and $\al'$ such that $L=\al\bullet X+\al'\bullet X'$.
\end{enumerate}
\end{thm}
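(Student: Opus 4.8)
\emph{Assertion (1).} The idea is to realize $\A(X)$ as a sub-acceptance set of the maximal acceptance set attached to $\bq$ and then apply the orthogonal-complement construction of Theorem \ref{tt2}. Since $\bq\in\Ml^e(X)=\q^{\A(X)}$, the measure $\bq$ takes negative values on $\A(X)$, i.e. $\eq(h)\le 0$ for every $h\in\A(X)$; hence $\A(X)\subseteq\A^{\bq}$, where $\A^{\bq}:=\{Z\in\Linf:\;\eq(Z)\le 0\}$. By (the proof of) Theorem \ref{t1}(1) one has $\A^{\bq}=\A(B^{\bq})$ with $B^{\bq}$ the $\bq$-Brownian motion; in particular $\A^{\bq}\in\b_n$, and $B^{\bq}$ satisfies the property $(\gamma)$, so $\tp(\phi_n(B^{\bq}))=1$ by Proposition \ref{xp5} and $\dd(B^{\bq})=n$ by Proposition \ref{pp2}. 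Applying Theorem \ref{tt2} with $\A=\A^{\bq}\in\b_n$ and $\B=\A(X)\in\b_r(\A^{\bq})$, $r=dim(\A(X))\le n$, yields some $s\le n$ and $X'\in\cc_s$, orthogonal to $X$, such that $\A(X')$ is a complement of $\A(X)$ in $\A^{\bq}$ and $\dd(X')=\dd(\A^{\bq})-\dd(\A(X))=n-\dd(X)$. By Lemma 4.3 in \cite{berk}, $\A(X,X')=\A(X)+\A(X')$, and the complement property gives $\A(X)+\A(X')=\A^{\bq}$; since $\Ml(X,X')=\q^{\A(X,X')}$, this forces $\Ml(X,X')=\q^{\A^{\bq}}$. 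Finally, as $B^{\bq}$ satisfies $(\gamma)$, $\q^{\A^{\bq}}=\q^{B^{\bq}}$ is a singleton by Theorem \ref{t1}(5), containing $\bq$ by construction; hence $\Ml(X,X')=\{\bq\}$. The membership $X'\in\cc$ is automatic: $\A(X')\subseteq\A^{\bq}$ forces $\bq\in\q^{\A(X')}=\Ml(X')$, so $\bq\in\Ml^e(X)\cap\Ml^e(X')=\Ml^e(X,X')$.

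\emph{Assertion (2).} By part (1), $\Ml(X,X')=\{\bq\}$, so the local $\bq$-martingales are exactly the elements of $m(\q^{(X,X')})=m(\{\bq\})$. Hence Jacka's theorem in \cite{jacka} --- in the form already used for Propositions \ref{xp1} and \ref{p1} and for Theorem \ref{to1} --- provides, for any local $\bq$-martingale $L$, a predictable process $\gamma$ with $L=\gamma\bullet(X,X')$; writing $\gamma=(\al,\al')$ according to the respective dimensions of $X$ and $X'$ gives $L=\al\bullet X+\al'\bullet X'$.

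\emph{Main obstacle.} The one delicate point is the identification $\Ml(X,X')=\{\bq\}$. It relies on the complement $\A(X')$ produced by Theorem \ref{tt2} satisfying $\A(X)+\A(X')=\A^{\bq}$ exactly --- not merely $\subseteq$ --- and on translating this sum of acceptance sets into $\A(X,X')$ for the concatenated process; once that is secured, the singleton conclusion is forced by the maximality of $\A^{\bq}$ in $\b_n$ (equivalently, by $B^{\bq}$ having property $(\gamma)$). The remainder is a direct assembly of Theorems \ref{t1} and \ref{tt2} with Jacka's representation theorem.
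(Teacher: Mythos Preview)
Your argument is correct and is precisely the natural unpacking of the result: embed $\A(X)$ in the maximal set $\A^{\bq}=\A(B^{\bq})\in\b_n$ via Theorem \ref{t1}(1), take the orthogonal complement $X'$ from Theorem \ref{tt2}, read off $\dd(X)+\dd(X')=n$ and $\A(X,X')=\A^{\bq}$, and conclude $\Ml(X,X')=\{\bq\}$ from Theorem \ref{t1}(5); part (2) then follows from Jacka's representation. The paper's own proof is a one-line ``immediate consequence of Proposition \ref{p7}'', which is at best a cryptic pointer (Proposition \ref{p7} concerns the star property, not directly the orthogonal complement), whereas you route the argument through Theorem \ref{tt2}, which is the right and transparent tool here; substantively there is no difference, and your version is the one a reader would actually want to see.
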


\begin{proof}It is an immediate consequence of Proposition \ref{p7}.

\end{proof}

\begin{thm}\label{t22}Let $X\in\cc$ and $\bq\in\Ml^e(X)$. Then for all $h\in\L^1(\bq)$, there exists two vector predictable processes $\al$ and $\al'$ such that $h=\eq(h)+L_T$ and $L=\al\bullet X+\al'\bullet X'$, where $X'$ is given in Theorem \ref{t22'}.

\end{thm}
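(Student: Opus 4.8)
The plan is to realise $h-\eq(h)$ as the terminal value of a closed $\bq$-martingale and then invoke the decomposition already supplied by Theorem \ref{t22'}. First I would apply Theorem \ref{t22'} to the given pair $(X,\bq)$: this produces a process $X'\in\cc$ orthogonal to $X$, with $\{\bq\}=\Ml(X,X')$ and $\dd(X)+\dd(X')=n$. This is precisely the process referred to in the statement, and it is this process against which the integrand $\al'$ will act.

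Next, given $h\in\L^1(\bq)$, set $M_t:=\eq(h|\F_t)$. Since $h$ is $\F$-measurable (and $\F=\F_T$) and $\bq$-integrable, $M$ is a uniformly integrable $\bq$-martingale with $M_T=h$, so $L:=M-M_0$ is a $\bq$-martingale with $L_0=0$ and $L_T=h-\eq(h)$. In particular $L$ is a local $\bq$-martingale, so assertion (2) of Theorem \ref{t22'} applies and yields two vector predictable processes $\al,\al'$ with $L=\al\bullet X+\al'\bullet X'$. Combining, $h=\eq(h)+L_T=\eq(h)+(\al\bullet X)_T+(\al'\bullet X')_T$, which is the desired representation.

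The only point that requires a word of care is integrability: here $h$ lies merely in $\L^1(\bq)$ rather than in $\Linf$, so $L$ need not be bounded, and one must be sure the representation from Theorem \ref{t22'} does not silently assume boundedness. Since that assertion is stated for arbitrary local $\bq$-martingales, no truncation is needed and the argument goes through verbatim; if one preferred to reduce to the bounded case, one would localise $L$ along a sequence of stopping times reducing it to bounded martingales, apply the decomposition on each stochastic interval, and paste the integrands together along the corresponding disjoint predictable sets, exactly as in the localisation/pasting arguments used earlier in the paper. Thus the substance of the theorem is entirely contained in Theorem \ref{t22'}, and no further obstacle arises; the proof is essentially a two-line reduction.
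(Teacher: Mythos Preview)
Your proposal is correct and follows essentially the same route as the paper: define the $\bq$-martingale $L_t=\eq(h\mid\F_t)-\eq(h)$ and apply assertion (2) of Theorem \ref{t22'} to this local $\bq$-martingale. The paper's proof is the one-line version of exactly this argument; your additional remarks on integrability and localisation are sound but unnecessary, since Theorem \ref{t22'}(2) is already stated for arbitrary local $\bq$-martingales.
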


\begin{proof}We define the process $L=\eq(h)$ and apply Theorem \ref{t22'} to deduce the result.

\end{proof}


\end{document}